\newcommand{\blind}{0}
\newcommand{\dif}{\mathrm{d}}
\newcommand{\tr}{\operatorname{tr}}
\newtheorem{prop}{Proposition}[section]
\newtheorem{cor}[prop]{Corollary}
\newtheorem{defi}[prop]{Definition}
\newtheorem{exam}[prop]{Example}
\newtheorem{lem}[prop]{Lemma}
\newtheorem{theo}[prop]{Theorem}
\newtheorem{rem}{Remark}
\newcommand{\revision}[1]{#1}
\newcommand{\EE}{\mathbb{E}}
\newcommand{\RR}{\mathbb{R}}
\def \RR{\mathbb{R}}
\def \EE{\mathbb{E}}
\def \U{\mathcal{U}}
\begin{document}

\def\spacingset#1{\renewcommand{\baselinestretch}%
{#1}\small\normalsize} \spacingset{1}


\if0\blind
{
  \title{\bf MCMC Algorithms for Posteriors \\ on Matrix Spaces}
  \author{Alexandros Beskos\thanks{
    Alexandros Beskos acknowledges support from a Leverhulme Trust Prize.}\hspace{.2cm}\\
    Department of Statistical Science, University College London\\
    and \\
    Kengo Kamatani\thanks{
    Kengo Kamatani acknowledges support from JSPS KAKENHI Grant Numbers 16K00046, 20H04149 and JST CREST Grant Number JPMJCR14D7.
    }\hspace{.2cm}
    \\
    Department of Engineering Science, Osaka University}
  \maketitle
} \fi

\if1\blind
{
  \bigskip
  \bigskip
  \bigskip
  \begin{center}
    {\LARGE\bf Title}
\end{center}
  \medskip
} \fi

\bigskip
\begin{abstract}
We study Markov chain Monte Carlo (MCMC) algorithms for target distributions defined 
on matrix spaces.
Such an important sampling problem has yet to be analytically explored. We carry out a major step in covering this gap by developing the proper theoretical framework that allows for the identification of ergodicity properties of typical MCMC algorithms, relevant in such a context. Beyond the standard Random-Walk Metropolis (RWM) and preconditioned Crank--Nicolson (pCN), a contribution of this paper in the development of a novel algorithm, termed the `Mixed' pCN (MpCN).
RWM and pCN are shown \emph{not} to be geometrically ergodic for an important class of matrix distributions with heavy tails. In contrast, MpCN is robust across targets with different tail behaviour and has very good empirical performance within the class of
heavy-tailed distributions. Geometric ergodicity for MpCN is not fully proven in this work, as some remaining drift conditions are quite challenging to obtain owing to the complexity of the state space. 
We do, however, make a lot of progress towards a proof, and show  in detail the last steps left for future work.
 We illustrate the computational performance of the various algorithms through  numerical applications,
 including calibration on real data of a challenging model arising in financial statistics.
\end{abstract}

\noindent%
{\it Keywords:}  Preconditioned Crank--Nicolson; Drift Condition; Matrix-Valued Stochastic Differential Equation.
\vfill

\newpage
\spacingset{1.5} 
\section{Introduction}
\label{sec:intro}

Statistical models with parameters defined on matrix spaces arise naturally in many applications, with maybe most typical the case of covariance matrices. Related measures have thus been developed, with most prominent the Matrix-Normal, Wishart and Inverse-Wishart distributions
\citep{barn:00}. Numerous extensions have appeared, see e.g.\@ \cite{mall:08, huan:13} for a scaled and an hierarchical Inverse-Wishart, \cite{barn:00} for a strategy that extracts the correlation matrix, \cite{rove:02} and \cite{dobr:2011} for the Hyper-Inverse-Wishart and G-Wishart distributions, respectively.

Beyond standard conjugate settings -- e.g.~Inverse-Wishart prior for the covariance of Gaussian observations -- more involved hierarchical models have generated a need for developing a suite of accompanying MCMC methods. 
This work reviews standard algorithms and introduces a novel one (MpCN), motivated by an MCMC method on vector-spaces used in \cite{kama:17}. The paper invokes a theoretical framework that permit the analysis of the ergodicity properties of some of the presented algorithms or -- in the case of MpCN -- makes a lot of progress towards a (quite challenging) proof, and describes the last remaining steps left for future research. 
%
We show that RWM and pCN do not work well even for an Inverse-Wishart target as they are not geometrically ergodic  for heavy-tailed distributions. In contrast, MpCN has much better empirical performance on such targets.
%
%
Our main contributions are summarised as follows.
%

(i) We develop a new MCMC method -- MpCN --, 
and provide a motivation for its underpinnings.
MpCN is characterised by better empirical performance against RWM or pCN, in a number of numerical studies.
%

(ii) We prove that targets on the space of positive definite matrices can be `upcasted' onto corresponding laws on unrestricted matrices, the latter space permitting a direct path for the development of MCMC methodology. 
%
%

(iii)
We prove that  RWM, pCN are not geometrically ergodic for a wide class of matrix-valued targets. We make a lot of progress into demonstrating geometric ergodicity for MpCN, and highlight the remaining steps for the completed proof.

(iv)
 We run MpCN on a challenging hierarchical model providing a matrix-extension of the  influential scalar Stochastic Volatility (SV) dynamic jump-model by \cite{barn:01}. We stress that the paper focuses on MCMC methods with `blind' proposals. This is in agreement with the selected SV application -- and, more generally, modern pseudo-marginal methods \citep{MR2502648} for complex models -- where guided proposals or Gibbs sampler schemes are typically cumbersome and impractical. 

\revision{We note that there are at least two main difficulties in constructing MCMC on a matrix space. First, matrix calculations such as multiplication, inversion and eigenvalue decomposition can be expensive. So, attempts to use gradient-based methods can lead to prohibitively high computational costs. In this work, consideration of derivatives is completely avoided. Second, involved matrices may contain specific structure, thus careful selection of the proposal kernel is required for such structure to be preserved.  For example, we mainly consider symmetric positive definite matrix spaces representing a cone in the space of diagonal matrices.  Thus, the na\"ive random-walk Metropolis cannot be used as it will not preserve positive definiteness.  }

The paper develops as follows.
Section \ref{sec:pm} introduces relevant measures on matrix spaces.
Section \ref{sec:MCMC} presents MCMC algorithms on such spaces.
Section~\ref{sec:ergodicity} develops ergodicity results for the RWM and pCN algorithms. Section~\ref{sec:mpcn-ergodicity} investigates ergodicity properties of MpCN.
Section \ref{sec:results} shows a collection of numerical results. 
Section \ref{sec:conclude} provides conclusions and points at future work.

\textbf{Notation:}
$\mathbb{R}_+:=(0,\infty)$. 
We write $X=_d Y$ if variables $X$, $Y$ have the same law. 
$M(p,q)$ is the set of $p\times q$ (real-valued) matrices,
$\mathrm{GL}(p)$ the group of $p\times p$ invertible matrices, $\mathrm{Sym}(p)$  the set of $p\times p$ symmetric matrices, $P^+(p)$ the set of $p\times p$ symmetric, positive-definite matrices, 
$\mathcal{O}(p)$ the set of orthogonal matrices, 
$\mathcal{O}(p,q)$, $p\ge q$, the space of $p\times q$ matrices $\mathsf{U}$ such that $\mathsf{U}^{\top}\mathsf{U}=I_q$.
%
%
We use the notation $A=(A_{ij})$ 
to indicate individual matrix elements.
Let
$\|A\|_F$ be the  Frobenius norm, with inner product
$\langle A, B\rangle_F=\tr(A^{\top}B)$, where $\tr(\cdot)$ is the trace of a matrix. 
$A^{\top}$ is the transpose of $A$, and $\det(A)$ its determinant.   
The derivative of $h:M(p,q)\rightarrow\mathbb{R}$
is the $p\times q$ matrix 
$
D h(A):=({\partial h(A)}/{\partial A_{ij}}). 
$
For $A\in P^{+}(p)$, we denote by $A^{1/2}$ the matrix $B\in  P^+(p)$ such that $BB=A$.

\section{Measures on Matrix Spaces}
\label{sec:pm}
%
%
\subsection{Reference Measures}
\label{sec:reference}
%
For $\U \in P^+(p)$, $p\ge q$, we define $\nu_{\,\U }$ on $M(p,q)$, 
\begin{gather*}
\nu_{\,\U }(\dif X):=
\frac{\Gamma_q(p/2)}{\pi^{pq/2}}
\big(\frac{(\det \U )^{-q/2}}{\det (X^{\top}\U ^{-1}X)^{p/2}}\big)\operatorname{Leb}(\dif X),\\
\operatorname{Leb}(\dif X)=\prod_{i=1}^p\prod_{j=1}^q\dif X_{ij}, 
\end{gather*}
$\Gamma_p(\cdot)$ is the multivariate Gamma function.
We denote $\nu_{\,\U }$ simply as $\nu$  when $\U =I_p$.
$\mathrm{Leb}$ denotes the Lebesgue measure, on a space of dimensions implied by the context.
The Lebesgue measure and $\nu_{\,\U }$ will be used as reference measures on $M(p,q)$. Note that $\nu_{\,\U }$ is invariant on $M(p,q)$ under right multiplication, $X\mapsto Xa$, $a\in \mathrm{GL}(q)$. If $p=q$ then $\nu\equiv \nu_{\,\U }$ is a unimodular Haar measure on the locally compact topological group $\mathrm{GL}(q)$ and 
$
\nu\equiv\nu_{-1},
$
where $\nu_{-1}(A):= \nu(A^{-1})$
for any Borel set $A\subseteq \mathrm{GL}(q)$;
see e.g.~Sections 5, 7 of \cite{MR770934}, Section 60 of \cite{MR0033869}.

On $P^+(q)$, we define the measure, 
\begin{equation}
\label{eq:mmu}
\mu(\dif S):=(\det S)^{-(q+1)/2}\prod_{1\le i\le j\le q} \dif S_{ij}.
\end{equation}
We note that $\mu$ is invariant under the transform $S\mapsto aSa^{\top}$, $a\in\mathrm{GL}(q)$;
see Section 5 of \cite{MR770934}.
Also, as in Problem 7.10.6 of \cite{MR770934}, 
$
\mu\equiv \mu_{-1}
$.
%

For the compact topological space $\mathcal{O}(p,q)$, there is a uniform probability distribution that will be denoted $\dif\mathsf{U}$ in this work. 
Note that the uniform distribution on  $\mathcal{O}(p,q)$, for $p\ge q$,
is the marginal on $\mathcal{O}(p,q)$ of the uniform distribution on $\mathcal{O}(p)$, see Theorem 3.3.1 of \cite{MR1960435}.  
\revision{
The reference measures satisfy,
\begin{equation}
\int_{M(p,q)}f(x)\nu(\dif x)=\int_{M(p,q)}f(\mathcal{U}^{-1/2}x)\nu_{\mathcal{U}}(\dif x)=\int_{P^+(q)}\mu(\dif S)~\int_{\mathcal{O}(p,q)}f(\mathsf{U} S^{1/2})\dif \mathsf{U},
\label{eq:farrell}
\end{equation}
for $f:M(p,q)\rightarrow\mathbb{R}$. 
}


\subsection{Probability Measures}
\label{subsec:pm}

\begin{exam}
	For parameters $M\in M(p,q)$, $\Sigma\in P^+(p)$, $\mathrm{T}\in P^+(q)$, the Matrix-Normal distribution $N_{p,q}(M,\Sigma,\mathrm{T})$ has density
	with respect to $\operatorname{Leb}$, \vspace{0.2cm}
	\begin{align*}
	\phi_{p,q}(X;M,\Sigma,\mathrm{T})=
	\frac{\exp\big(-\tr\,[\,\mathrm{T}^{-1}(X-M)^{\top}\Sigma^{-1}(X-M)\,]/2\,\big)}{(2\pi)^{pq/2}(\det \Sigma)^{q/2}(\det \mathrm{T})^{p/2}}. 
	\end{align*}
\vspace{-0.7cm}	
\end{exam}	
%



\begin{exam}
Let $r\in\RR$, $r>q-1$,  and $T\in P^+(q)$. 
The Wishart distribution $W_q(r,T)$ is a probability measure on $P^+(q)$ with density 
with respect to $\mu$, \vspace{0.2cm}
\begin{equation}
\pi(S)=\frac{\det(S)^{r/2}\exp(-\tr\,[\,T^{-1}S\,]/2\,)}{2^{rq/2}(\det T)^{r/2}\Gamma_q(r/2)}. 
\label{eq:WD}
\end{equation}
%
\end{exam}


\begin{exam}
Let $r\in\RR$, $r>q-1$, and $T\in P^+(q)$. 
The Inverse-Wishart distribution $W_q^{-1}(r,T)$ is a probability measure on $P^+(q)$ with density with respect to $\mu$, \vspace{0.2cm}	
$$
\pi(S)=\frac{(\det T)^{r/2}\exp(-\tr\,[\,TS^{-1}\,]/2\,)}{2^{rq/2}\det(S)^{r/2}\Gamma_q(r/2)}. \vspace{0.2cm}	
$$
\end{exam}
\begin{rem}
\label{rem:connect}
We summarize some relevant distribution properties. 
\vspace{-0.2cm}
\begin{itemize}
    \item[(i)] If $X\sim N_{p,q}(M,\Sigma,\mathrm{T})$ then,  \vspace{-0.1cm}
\begin{equation*}
 \mathbb{E}\,[\,(X-M)(X-M)^{\top}\,] = \Sigma\tr(\mathrm{T}),\quad  
 \mathbb{E}\,[\,(X-M)^{\top}(X-M)\,] = \mathrm{T}\tr(\Sigma). 
\end{equation*} 
Also, for $A\in M(r,p)$, $B\in M(q,s)$ full rank matrices with $r\le p$ and $q\ge s$,
$$AXB\sim N_{r,s}(AMB,A\Sigma A^{\top},B^{\top}\mathrm{T}B).
\vspace{-0.1cm}$$
\item[(ii)] If $X_i\sim N_{p,q}(M_i,\Sigma_i,\mathrm{T})$, $i=1,2$, are independent then, 
\vspace{-0.2cm}
$$
X_1+X_2\sim  N_{p,q}(M_1+M_2,\Sigma_1+\Sigma_2,\mathrm{T}).
$$\vspace{-1.2cm}
\item[(iii)] If $X\sim N_{p,q}(0,I_p, T)$ then  $S=X^{\top}X\sim W_q(p,T)$.
    \item[(iv)] If $S\sim W_q(r,T)$ then $S^{-1}\sim W_q^{-1}(r,T^{-1})$. 
\end{itemize}
\end{rem}


\subsection{Upcasting $P^+(q)$ Onto $M(p,q)$}
\label{subsec:upcast}
Let $p\ge q$ and $\U \in P^+(p)$. 
We show that a distribution on 
$P^+(q)$ can be expressed as transform  
of one on the larger space $M(p,q)$ via the surjective mapping $x\mapsto x^{\top} \U ^{-1}x$.

\begin{theo}
\label{th:upcast}
Let $\tilde{\Pi}(\dif S)=\tilde{\pi}(S)\mu(\dif S)$ be a probability measure on $P^{+}(q)$.
Consider the distribution $\Pi(\dif x)$ on $M(p,q)$, $p\ge q$, 
defined as,
\begin{align}
\Pi(\dif x):=\tilde{\pi}(x^{\top}\U ^{-1}x)\,\nu_{\,\U }(\dif x).
\label{eq:change_of_variable}
\end{align}
%
If $X\sim \Pi(\dif x)$ then $S=X^{\top}\U ^{-1}X\sim \tilde{\Pi}(\dif S)$.
\end{theo}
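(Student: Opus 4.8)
The plan is to verify the claim by testing against an arbitrary bounded measurable function $g:P^+(q)\to\RR$ and reducing everything to the reference–measure identity \eqref{eq:farrell}. Writing $S=X^{\top}\U^{-1}X$, it suffices to show $\EE[g(S)]=\int_{P^+(q)}g(S)\,\tilde{\Pi}(\dif S)$ for all such $g$; the choice $g\equiv 1$ will simultaneously confirm that the right-hand side of \eqref{eq:change_of_variable} is genuinely a probability measure. One preliminary remark: since $\nu_{\,\U}\ll\operatorname{Leb}$ and rank-deficient matrices form a Lebesgue-null set, we have $x^{\top}\U^{-1}x\in P^+(q)$ for $\nu_{\,\U}$-almost every $x$, so $\tilde{\pi}(x^{\top}\U^{-1}x)$ and $S$ are well defined a.s.

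First I would remove the twist by $\U$. Applying the first equality in \eqref{eq:farrell} with $f(x)=h(\U^{1/2}x)$ gives $\int_{M(p,q)}h(x)\,\nu_{\,\U}(\dif x)=\int_{M(p,q)}h(\U^{1/2}x)\,\nu(\dif x)$ for nonnegative measurable $h$. Taking $h(x)=g(x^{\top}\U^{-1}x)\,\tilde{\pi}(x^{\top}\U^{-1}x)$ and using that $\U^{1/2}$ is symmetric with $(\U^{1/2}x)^{\top}\U^{-1}(\U^{1/2}x)=x^{\top}x$, this yields
\begin{equation*}
\EE[g(S)]=\int_{M(p,q)}g(x^{\top}\U^{-1}x)\,\tilde{\pi}(x^{\top}\U^{-1}x)\,\nu_{\,\U}(\dif x)=\int_{M(p,q)}g(x^{\top}x)\,\tilde{\pi}(x^{\top}x)\,\nu(\dif x).
\end{equation*}
Next I would apply the last equality in \eqref{eq:farrell} with $F(x)=g(x^{\top}x)\,\tilde{\pi}(x^{\top}x)$. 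Since $(\mathsf{U} S^{1/2})^{\top}(\mathsf{U} S^{1/2})=S^{1/2}\mathsf{U}^{\top}\mathsf{U} S^{1/2}=S$ for every $\mathsf{U}\in\mathcal{O}(p,q)$, the integrand $F(\mathsf{U} S^{1/2})=g(S)\,\tilde{\pi}(S)$ is constant in $\mathsf{U}$, and $\dif\mathsf{U}$ is a probability measure, so the inner integral collapses and
\begin{equation*}
\EE[g(S)]=\int_{P^+(q)}g(S)\,\tilde{\pi}(S)\,\mu(\dif S)=\int_{P^+(q)}g(S)\,\tilde{\Pi}(\dif S).
\end{equation*}
As $g$ ranges over bounded measurable functions, this identifies the law of $S$ as $\tilde{\Pi}$, and $g\equiv 1$ gives $\Pi(M(p,q))=\tilde{\Pi}(P^+(q))=1$.

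I do not expect a serious obstacle here: once \eqref{eq:farrell} is available the argument is essentially bookkeeping with the two substitutions $x\mapsto\U^{1/2}x$ and the polar-type factorization $x=\mathsf{U}S^{1/2}$. The only points that need a word of care are (a) justifying that \eqref{eq:farrell} applies to the nonnegative, possibly unbounded integrand $h$ above — handled by monotone approximation, or simply by noting both sides are finite when $g$ is bounded because $\tilde{\Pi}$ is a probability measure — and (b) the a.s. well-definedness of $S\in P^+(q)$ noted above. The substantive content is really \eqref{eq:farrell} itself (Farrell-type decomposition of $M(p,q)$), which is quoted from the literature and which I would take as given.
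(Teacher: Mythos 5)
Your proposal is correct and follows essentially the same route as the paper: both reduce the claim to the reference-measure identity \eqref{eq:farrell}, first using it to strip off the $\U$-twist and then using the polar-type decomposition $x=\mathsf{U}S^{1/2}$ to collapse to an integral against $\mu$. The only cosmetic difference is that you bundle $g\cdot\tilde{\pi}$ into a single test function at the outset, whereas the paper first records the change-of-variables formula for a generic $g$ and then substitutes $g\tilde{\pi}$; your extra remarks on almost-sure full rank and on approximating unbounded integrands are harmless additions, not a different argument.
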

\begin{proof}
%
%
%
\revision{By Equation (\ref{eq:farrell}),}  if $f(x)=g(x^{\top}\U ^{-1}x)$,
for $g:P^+(q)\rightarrow\mathbb{R}$, we have
%
\begin{align*}
\int_{M(p,q)} g(x^{\top}\U ^{-1}x)\nu_{\,\U }(\dif x) 
&=\int_{M(p,q)} g(x^{\top}x)\nu(\dif x)\\
&= 	
\int_{P^+(q)}g(S)\mu(\dif S).
\end{align*}
We have obtained a change of variables formula for map $x\mapsto x^{\top}\U ^{-1}x=S$. 
 %
%
%
Replacing $g(\cdot)$ with 
$(g\cdot \tilde{\pi})(\cdot)$ gives, 
$$
\int_{P^+(q)} g(S)\tilde{\Pi}(\dif S)=\int_{M(p,q)} g(x^{\top}\U ^{-1}x)\Pi(\dif x). 
$$
This completes the proof.
\end{proof}

We will consider MCMC methods on $M(p,q)$. In light of Theorem~\ref{th:upcast}, such algorithms
are directly relevant for distributions $\tilde{\Pi}(\dif S)=\tilde{\pi}(S)\mu(\dif S)$ on $P^+(q)$ since we can execute the MCMC algorithm on $M(p,q)$, with target law $\Pi(\dif x)= 
\tilde{\pi}(x^{\top}\U ^{-1}x)\nu_{\,\U }(\dif x)$, and apply the transform 
$x\mapsto x^{\top}\U ^{-1}x=S$ on the collected $x$-samples.

\section{Matrix-Valued MCMC Methods}
\label{sec:MCMC}

We provide some MCMC methods on $M(p,q)$, $p\ge q$. 
As noted in Section \ref{sec:intro}, we focus on MCMC algorithms involving blind proposals, i.e.\@ containing no information about the target.
 Let $\Pi(\dif x)$ be a target distribution on $M(p,q)$, 
 and -- in a Metropolis setting -- $Q(x,\dif y)$ a proposal  Markov kernel.
 Assume that $\xi$ is a $\sigma$-finite measure such that $\Pi$ is absolutely continuous with respect to  $\xi$, and $Q(x,\dif y)$ is  $\xi$-reversible. 
 %
 %
Then, the triplet $(\Pi,Q, \xi)$ give rise to a Metropolis--Hastings kernel,
so that, for Borel sets $A\subseteq M(p,q)$, \vspace{0.2cm}
\begin{equation*}
P(x,A)=\int_A Q(x,\dif y)\alpha(x,y)+R(x)\cdot\delta_x(A),\vspace{0.2cm}
\end{equation*}
where we have defined $R(x)=1-\int Q(x,\dif y)\alpha(x,y)$, for 
acceptance probability function that has the simple form, 
\begin{equation}
\label{eq:acce}
\alpha(x,y)=\min\{1,\pi(y)/\pi(x)\}, \quad \pi(x):= (\dif\Pi/\dif\xi)(x). 
\end{equation}
%
All Metropolis--Hastings kernels -- on $M(p,q)$ -- in this paper 
will correspond to instances of such triplets $(\Pi,Q, \xi)$. 
A similar notation is adopted for measures restricted
  on $P^{+}(q)$.
%

\subsection{RWM and pCN}
The proposal kernel $Q(x,\dif y)$ of the RWM algorithm on $M(p,q)$ 
is defined by the update
\begin{align}
\label{eq:RWM}
\mathrm{RWM:}\quad  y =x+w, 
\end{align}
with $w\sim N_{p,q}(0, U, V)$,
$U\in P^+(p)$, 
$V\in P^+(q)$. The proposal kernel $Q(x,\,\cdot\,)=N_{p,q}(x, U, V)$ is reversible with respect to the Lebesgue measure on $M(p,q)$. Thus, following the notation we established above, we now have the triplet $(\Pi, Q, \mathrm{Leb}$), and 
the acceptance probability $\alpha(x,y)$ is as in (\ref{eq:acce}), 
%
%
with $\pi(x)$ the density of $\Pi$ with respect to $\mathrm{Leb}$. 
Let $\rho\in[0,1)$.
The pCN method on $M(p,q)$ is determined via the proposal,
\begin{align}
\label{eq:pCN}
\mathrm{pCN:}\quad y =\rho^{1/2}~x+(1-\rho)^{1/2}~w,
\end{align}
with $w\sim N_{p,q}(0, U, V)$.  
The proposal kernel $Q(x,\cdot)=N_{p,q}(\rho^{1/2}x, U, (1-\rho)V)$ is reversible with respect to $N_{p,q}(0, U, V)$. 
We have the triplet $(\Pi, Q, N_{p,q}(0,U,V))$, so the acceptance probability is as in (\ref{eq:acce}), 
where $\pi(x)$ is density of $\Pi$ with respect to $N_{p,q}(0, U,V)$. 
The pCN algorithm was introduced in Section 4.2 of \cite{MR1723510}. As shown there, 
and in several more recent works \citep[see e.g.][]{MR2444507,MR3135540} pCN can be very effective in high dimensions in the context of Gaussian priors
and not highly informative observations. 
The algorithms are well-defined even on infinite-dimensional Hilbert spaces; this has sparked the use of pCN in the area of Bayesian Inverse Problems, see e.g.~the overview in
\cite{stua:10}. 

\subsection{MpCN on $M(p,q)$}

\subsubsection{Derivation via Bayesian Paradigm}
We return briefly at the scenario of Section \ref{subsec:upcast}, i.e.~adopt the viewpoint that 
the original target distribution is given as $\tilde{\Pi}(\dif S)=\tilde{\pi}(S)\mu(\dif S)$ on  
$P^{+}(q)$, and one aims to generate $X\sim \Pi(\dif x)=
\tilde{\pi}(x^{\top}\U ^{-1}x)\nu_{\,\U }(\dif x)$ on $M(p,q)$, $p\ge q$, and return
$S=X^{\top}\U ^{-1}X$. In such an upcasted setting, one can provide a motivation for the derivation of MpCN.
Let $x_{i\cdot}$ denote the $i$th row of $x\in M(p,q)$. Since the density of 
$\Pi(\dif x)$ with respect to $\operatorname{Leb}$ writes as, 
$$f\big(z_{1\cdot}^{\top}\,z_{1\cdot} + \cdots + z_{p\cdot}^{\top} z_{p\cdot}\big); \quad z:= \U ^{-1/2}x,$$
%
where $f(S) = 
\tilde{\pi}(S)\det (S)^{-p/2}$, $S\in P^+(q)$,
it is clear that, under $X\sim \Pi(\dif x)$, 
$$Z_{1\cdot}=_d Z_{2\cdot} =_d \cdots =_d Z_{p\cdot}, $$ 
so all rows of $Z:=\U ^{-1/2}X$ have the same marginal law.
Looking back at the pCN proposal (\ref{eq:pCN}), as adjusted under the linear transform $x\mapsto \U ^{-1/2}x=:z$ , the above understanding 
can provide guidance for tuning the algorithmic parameter $V\in P^{+}(q)$ --  corresponding to the variance of each row-vector of the noise $\U ^{-1/2}w$ -- given information about the current position $z=\U ^{-1/2}x$ of the MCMC chain. In a `classical' approach, 
a likelihood-based choice would simply be the sample variance over the rows, 
$\widehat{V}:=\sum_{i=1}^{p}z_{i\cdot}^{\top}z_{i\cdot}/p$.
A Bayesian approach seems preferable as it will ultimately provide an algorithm with a heavier-tailed proposal. A choice of  \emph{Jeffrey's prior} $\mu(\dif V)$ \citep[see e.g.][]{geis:63}, with $\mu$ as given in (\ref{eq:mmu}), combined with a likelihood 
$\phi_{p,q}(z;0,\U ^{-1/2}U\U ^{-1/2},V)$, $U\in P^{+}(p)$, is easily shown to provide the   posterior  $W_q^{-1}(p,x^{\top}U^{-1}x)$ for $V$.
%
%
%
%
%
%

\subsubsection{MpCN Proposal and Acceptance Probability}
The above thinking gives rise to the following proposal,
applicable for a general law $\Pi(\dif x)$ on $M(p,q)$ -- not only in the upcasted scenario adapted above for purposes of providing some rationale 
under a particular viewpoint --, 
\begin{gather}
\label{eq:mpcn_proposal}
\mathrm{MpCN:}\quad y = \rho^{1/2}x +(1-\rho)^{1/2}w; \\ 
w\sim N_{p,q}(0, U, V);\quad  V\sim W_q^{-1}(p,x^{\top}U^{-1}x). \nonumber
\end{gather}

\begin{rem}
\begin{itemize}
\item[i.] \revision{In the scalar case, it is rather common to choose an Inverse-Gamma distribution for a variance parameter $V\in \mathbb{R}_{+}$, 
thus  an Inverse-Wishart distribution in the matrix case, $V\in P^{+}(q)$, appears as a fairly natural choice.}
\item[ii.] The law of 
$[\,w\,|\,x\,]$ is that of a Matrix-Student-t distribution with Lebesgue density proportional 
to $z\mapsto  \det(x^{\top}U^{-1}x+z^{\top}U^{-1}z)^{-p}$; 
see e.g.~\cite{dick:67}.
\end{itemize}
\end{rem}
\noindent MpCN is a Metropolis method with target  $\Pi(\dif x)$ on $M(p,q)$, and proposal as determined in (\ref{eq:mpcn_proposal}). Let $Q$ denote the related proposal transition kernel.
The lemma that follows shows that $Q$ is $\nu_{\,U}$-reversible, 
so we have the triplet $(\Pi, Q, \nu_{\,U})$
and the acceptance probability is as in (\ref{eq:acce}), 
%
%
%
where $\pi(x)$ is the density of $\Pi$ with respect to $\nu_{\,U}$. 


\begin{lem}\label{lem:kernel_mpcn}
The MpCN proposal kernel $Q$ on $M(p,q)$ has density
$\mathsf{q}(x, y)$ with respect to $\nu_{\,U}(\dif y)$, that writes as, 
\begin{align}
\mathsf{q}(x, y) = (1-\rho)^{pq/2}c_{p,q}
~\frac{\det(x^{\top}U^{-1}x)^{p/2}\times\det(y^{\top}U^{-1}y)^{p/2}}{
	\det(R(x,y))^p},
\label{eq:mpcn_density}
\end{align}
where, 
    $c_{p,q}=\frac{\Gamma_q(p)}{\Gamma_q(p/2)^2}$, 
and, 
\begin{align*}
R(x,y)= &\,x^{\top}U^{-1}x + y^{\top}U^{-1}y-\rho^{1/2}x^{\top}U^{-1}y-\rho^{1/2}y^{\top}U^{-1}x.
\end{align*}	
Therefore, $Q$ is $\nu_{\,U}$-reversible. 
\end{lem}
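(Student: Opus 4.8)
The plan is to obtain the proposal density in closed form by integrating out the auxiliary scale matrix $V$, and then to read off $\nu_{\,U}$-reversibility for free from the symmetry of the resulting expression.

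First I would condition on $x$ and $V$. Then the MpCN update $y=\rho^{1/2}x+(1-\rho)^{1/2}w$ with $w\sim N_{p,q}(0,U,V)$ makes $y$ Matrix-Normal, $y\,|\,(x,V)\sim N_{p,q}(\rho^{1/2}x,U,(1-\rho)V)$, with Lebesgue density $\phi_{p,q}(y;\rho^{1/2}x,U,(1-\rho)V)$. Multiplying by the density of $V\sim W_q^{-1}(p,x^{\top}U^{-1}x)$ with respect to $\mu$ (Example with the Inverse-Wishart density) and integrating $V$ over $P^{+}(q)$ gives the Lebesgue density $\mathsf{q}^{\mathrm{Leb}}(x,y)$ of the proposal. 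Collecting the two exponential factors, the integrand depends on $V$ only through $(\det V)^{-p}\exp(-\tfrac12\tr[V^{-1}B])$, where
\[
B = x^{\top}U^{-1}x + (1-\rho)^{-1}(y-\rho^{1/2}x)^{\top}U^{-1}(y-\rho^{1/2}x).
\]
The one genuinely computational step is the algebraic identity $B=(1-\rho)^{-1}R(x,y)$: expanding the quadratic form, the cross terms reproduce exactly $-\rho^{1/2}x^{\top}U^{-1}y-\rho^{1/2}y^{\top}U^{-1}x$, while $\rho\,x^{\top}U^{-1}x$ combines with $(1-\rho)x^{\top}U^{-1}x$ to give $x^{\top}U^{-1}x$ after the scaling. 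I would then recognise $\int_{P^{+}(q)}(\det V)^{-p}\exp(-\tfrac12\tr[V^{-1}B])\,\mu(\dif V)=2^{pq}\Gamma_q(p)(\det B)^{-p}$ as the $W_q^{-1}(2p,B)$ normalising constant, and substitute $\det B=(1-\rho)^{-q}\det R(x,y)$.

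Cancelling the accumulated powers of $2$, $2\pi$ and $1-\rho$ from the Matrix-Normal and the two Inverse-Wishart normalisers leaves
\[
\mathsf{q}^{\mathrm{Leb}}(x,y)=\frac{(1-\rho)^{pq/2}\Gamma_q(p)}{\pi^{pq/2}(\det U)^{q/2}\Gamma_q(p/2)}\cdot\frac{\det(x^{\top}U^{-1}x)^{p/2}}{\det(R(x,y))^{p}}.
\]
To convert this to a density with respect to $\nu_{\,U}$ I would divide by $\dif\nu_{\,U}/\dif\mathrm{Leb}$, which from the definition of $\nu_{\,U}$ equals $\Gamma_q(p/2)\,\pi^{-pq/2}(\det U)^{-q/2}\det(y^{\top}U^{-1}y)^{-p/2}$. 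This introduces the missing $\det(y^{\top}U^{-1}y)^{p/2}$, cancels $(\det U)^{q/2}$ and one copy of $\Gamma_q(p/2)$, and yields exactly (\ref{eq:mpcn_density}) with $c_{p,q}=\Gamma_q(p)/\Gamma_q(p/2)^{2}$.

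Finally, for reversibility I would note that $R(x,y)$ is manifestly symmetric under $x\leftrightarrow y$, and so is $\det(x^{\top}U^{-1}x)^{p/2}\det(y^{\top}U^{-1}y)^{p/2}$; hence $\mathsf{q}(x,y)=\mathsf{q}(y,x)$, so $\nu_{\,U}(\dif x)\,\mathsf{q}(x,y)\,\nu_{\,U}(\dif y)$ is invariant under interchanging $x$ and $y$, i.e.\ $Q$ is $\nu_{\,U}$-reversible. Two points deserve a word but are not obstacles: the determinants are finite and positive only where $x$ and $y$ have full column rank $q$, which holds $\nu_{\,U}$-a.e.\ since $p\ge q$ and $\nu_{\,U}\ll\mathrm{Leb}$; and there $R(x,y)=(y-\rho^{1/2}x)^{\top}U^{-1}(y-\rho^{1/2}x)+(1-\rho)\,x^{\top}U^{-1}x\succ 0$, so $\det R(x,y)>0$ and the kernel is well defined. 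The main obstacle, such as it is, is purely the bookkeeping in the middle step: one must track the $(2\pi)^{pq/2}$, $2^{pq/2}$, $(1-\rho)^{pq/2}$ and $\Gamma_q$ factors coming from three normalisers at once, together with the $\det(\cdot)^{-q}$ produced by pulling $(1-\rho)^{-1}$ out of $\det B$. Everything else is either routine Gaussian conditioning, an appeal to the Inverse-Wishart integral, or the immediate symmetry argument.
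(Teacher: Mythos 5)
Your proof is correct and follows the same approach the paper sketches (integrate out $V$ using conjugacy of the Inverse-Wishart with the Matrix-Normal right-covariance, then convert from Lebesgue to $\nu_U$ and read reversibility from the symmetry $\mathsf{q}(x,y)=\mathsf{q}(y,x)$); you have simply supplied the "tedious but straightforward" bookkeeping that the paper omits, including the key identity $B=(1-\rho)^{-1}R(x,y)$, the Inverse-Wishart normalising integral $\int(\det V)^{-p}\exp(-\tfrac12\tr[V^{-1}B])\mu(\dif V)=2^{pq}\Gamma_q(p)(\det B)^{-p}$, and the constant bookkeeping yielding $c_{p,q}=\Gamma_q(p)/\Gamma_q(p/2)^2$. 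The added remarks on full-rank $\nu_U$-a.e.\ and positive-definiteness of $R(x,y)$ are correct and worth keeping.
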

\begin{proof}
Following the definition of MpCN, the joint distribution of
$[\,V,y\,|\,x\,]$ writes as,
\begin{align}
\label{eq:joint}
W_q^{-1}(\dif V; p,x^{\top}U^{-1}x)~ N_{p,q}(\dif y;\rho^{1/2}x,U,(1-\rho)V).
\end{align}
%
%
It remains to integrate out $V$ in (\ref{eq:joint}), and take the density of
the resulted distribution of $[\,y\,|\,x\,]$ with respect to $\nu_{\,U}(\dif y)$. 
All such calculations can be carried out analytically due to $\mu(\dif V)$ 
being conjugate with respect to Matrix-Normal distribution with
right-covariance matrix $V$. Thus, tedious but otherwise straightforward calculations give the density $\mathsf{q}(x,y)$ in expression (\ref{eq:mpcn_density}).
Since $\mathsf{q}(x,y)=\mathsf{q}(y,x)$, we have $\nu_{\,U}(\dif x)Q(x,\dif y)=\nu_{\,U}(\dif y)Q(y,\dif x)$.
\end{proof}



\subsection{Random-Walk Property of MpCN on $P^{+}(q)$ 
}
\label{sec:mark}

Returning to the context of an initial target $\tilde{\Pi}(\dif S)$ on $P^{+}(q)$, we show here that, when the operators $\U $ and $U$ -- used, respectively, when upcasting the target $\tilde{\Pi}$ onto $M(p,q)$ and as a parameter in the MpCN proposal -- coincide (i.e., $\U=U$), then  
MpCN on $M(p,q)$ induces a Markovian kernel on $P^{+}(q)$ under the transform $x\mapsto x^{\top}U^{-1}x$.
Such kernel 
is easier to analyse as it exhibits random-walk-type behavior, as shown in the proposition and theorem that follow. 
%
%
We define the operation 
$
A\circ B=B^{1/2}AB^{1/2} 
$, $A,B\in P^{+}(q)$;
it follows that $(A\circ B)^{-1}=A^{-1}\circ B^{-1}$ and $\tr(A\circ B)=\tr(AB)$. Recall also the definition of the uniform measure $\dif \mathsf{U}$ in Section \ref{sec:reference}.
%
%

\begin{prop}\label{prop:random_walk}
Let $\tilde{\Pi}(\dif S)=\tilde{\pi}(S)\mu(\dif S)$
and $\Pi(\dif x) =
\tilde{\pi}(x^{\top} U^{-1} x)\nu_{\,U}(\dif x)$, be probability measures on $P^{+}(q)$ and $M(p,q)$ respectively, with $U\in P^+(q)$, $p\ge q$.
Given $x\in M(p,q)$, set $y\sim Q(x,\dif y)$, where $Q=Q(\,\cdot\,;\rho, U)$ is the proposal kernel of MpCN, with parameters $\rho\in[0,1)$ and $U$.
\begin{itemize} 
\item[(i)] We have the representation, for $\det(x^{\top}U^{-1}x)\neq 0$,
$$y^{\top}U^{-1}y=\epsilon\circ(x^{\top}U^{-1}x),$$ 
for some $P^+(q)$-valued random
variable $\epsilon$ with law 
 that does not depend on $x$, $U$. 
\item[(ii)] The law of $\epsilon$, $\mathcal{L}(\dif \epsilon)$, writes as, 
\begin{align}
\label{eq:gamma}
\mathcal{L}(\dif \epsilon)=(1-\rho)^{pq/2}c_{p,q}
~\mu(\dif \epsilon)~\int_{\mathcal{O}(p,q)}\frac{\det(\epsilon)^{p/2}}{
	\det(I_q+\epsilon-\rho^{1/2}(\mathsf{U}_1\epsilon^{1/2}+\epsilon^{1/2}\mathsf{U}_1^{\top}))^p}
	\dif \mathsf{U},
\end{align}
with $c_{p,q}$ as given in Lemma \ref{lem:kernel_mpcn},
where $\mathsf{U}_1$ denotes the first $q$ rows of $\mathsf{U}$.
In particular, $\mathcal{L}(\dif \epsilon)$ depends on $\rho$, but not on $U$. 

\item[(iii)] 
We have $\epsilon =_d 
\epsilon^{-1}$.
\end{itemize}
\end{prop}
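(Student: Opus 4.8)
\textbf{Proof plan for Proposition \ref{prop:random_walk}.}

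The plan is to establish parts (i)--(iii) in sequence, each building on the previous one. For part (i), I would start from the explicit MpCN proposal (\ref{eq:mpcn_proposal}): given $x$, draw $V\sim W_q^{-1}(p,x^{\top}U^{-1}x)$ and set $y=\rho^{1/2}x+(1-\rho)^{1/2}w$ with $w\sim N_{p,q}(0,U,V)$. The key is to use the decomposition $x^{\top}U^{-1}x = S$ together with a ``square root plus orthogonal factor'' representation of $U^{-1/2}x$ itself: writing $U^{-1/2}x = \mathsf{U}\,S^{1/2}$ with $\mathsf{U}\in\mathcal{O}(p,q)$ (the polar-type decomposition underlying (\ref{eq:farrell})), one computes $y^{\top}U^{-1}y$ in terms of $S^{1/2}$, $\mathsf{U}$ and the noise, and checks that all dependence on $x$ and $U$ enters only through the conjugation $A\mapsto S^{1/2}AS^{1/2}=A\circ S$. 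Concretely, $V\sim W_q^{-1}(p,S)$ means $V = S^{1/2}\,\tilde V\,S^{1/2}$ in law with $\tilde V\sim W_q^{-1}(p,I_q)$ free of $S$; similarly the Matrix-Normal noise $w$ rescales so that $U^{-1/2}w = \mathsf{U}'\,(\cdot)\,S^{1/2}$-type terms appear, and assembling $y^{\top}U^{-1}y$ yields $\epsilon\circ S$ with $\epsilon$ depending only on $\tilde V$, $\mathsf{U}$ and $\rho$. The invariance properties of $\nu$, $\mu$ recalled in Section \ref{sec:reference} (in particular $\mu$ invariant under $S\mapsto aSa^{\top}$) are what guarantee the law of $\epsilon$ is genuinely $x$- and $U$-free; the distribution of $\mathsf{U}$ is the uniform $\dif\mathsf{U}$ by the last display of Section \ref{sec:reference}.

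For part (ii), the cleanest route avoids redoing the integration over $V$: instead I would transport the already-computed proposal density. By Lemma \ref{lem:kernel_mpcn}, $Q(x,\dif y)=\mathsf{q}(x,y)\nu_U(\dif y)$ with $\mathsf{q}$ given by (\ref{eq:mpcn_density}). Setting $S=x^{\top}U^{-1}x$ and $S'=y^{\top}U^{-1}y$, the change-of-variables identity from the proof of Theorem \ref{th:upcast} (i.e.\ the middle and right-hand sides of (\ref{eq:farrell})) tells us how $\nu_U(\dif y)$ pushes forward: it becomes $\mu(\dif S')\,\dif\mathsf{U}$ where $\mathsf{U}$ ranges over $\mathcal{O}(p,q)$ and $U^{-1/2}y=\mathsf{U}(S')^{1/2}$. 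Substituting $S'=\epsilon\circ S$ and using $\mu(aS a^\top)=\mu(S)$-invariance to reduce to $S=I_q$, the factor $\det(x^{\top}U^{-1}x)^{p/2}\det(y^{\top}U^{-1}y)^{p/2}/\det(R(x,y))^p$ collapses to $\det(\epsilon)^{p/2}/\det(I_q+\epsilon-\rho^{1/2}(\mathsf{U}_1\epsilon^{1/2}+\epsilon^{1/2}\mathsf{U}_1^\top))^p$ after expressing $R(x,y)$ in terms of $\mathsf{U}_1$ (the first $q$ rows of $\mathsf{U}$), and integrating out $\mathsf{U}\in\mathcal{O}(p,q)$ gives exactly (\ref{eq:gamma}). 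The independence from $U$ is then manifest since $U$ has disappeared from every term.

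For part (iii), I would exploit the symmetry $\mathsf{q}(x,y)=\mathsf{q}(y,x)$ together with the two ``inversion invariances'' recalled in Section \ref{sec:reference}: $\nu\equiv\nu_{-1}$ on $\mathrm{GL}(q)$ and $\mu\equiv\mu_{-1}$ on $P^+(q)$. Since by part (i) the map $S\mapsto S'=\epsilon\circ S$ has the same law as the MpCN-induced move on $P^+(q)$, and $(\epsilon\circ S)^{-1}=\epsilon^{-1}\circ S^{-1}$ (using $(A\circ B)^{-1}=A^{-1}\circ B^{-1}$), the reversibility of the induced kernel with respect to an inversion-invariant reference measure forces the law of $\epsilon$ to be invariant under $\epsilon\mapsto\epsilon^{-1}$. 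Alternatively and more directly: in the formula (\ref{eq:gamma}), substitute $\epsilon\mapsto\epsilon^{-1}$, use $\mu(\dif\epsilon)=\mu(\dif\epsilon^{-1})$, and check that the integrand is invariant after the change of integration variable $\mathsf{U}_1\mapsto$ (a suitable orthogonal reparametrisation), since $\det(I_q+\epsilon^{-1}-\rho^{1/2}(\cdots))^p = \det(\epsilon^{-1})^p\det(\epsilon+I_q-\rho^{1/2}(\cdots))^p$ up to the orthogonal symmetry; the Jacobian/determinant bookkeeping then matches the $\det(\epsilon)^{p/2}$ prefactor. I expect the main obstacle to be part (i): correctly setting up the polar-type decomposition $U^{-1/2}x=\mathsf{U}S^{1/2}$ and tracking how the independently-sampled $V$ and $w$ interact under it, so that one can honestly conclude the residual randomness $\epsilon$ decouples from $(x,U)$ rather than merely that its first two moments do. Once part (i) is in place, parts (ii)--(iii) are essentially bookkeeping on top of Lemma \ref{lem:kernel_mpcn} and the reference-measure identities.
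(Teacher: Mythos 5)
Your overall strategy for (ii) and (iii) (transport the density from Lemma \ref{lem:kernel_mpcn} via (\ref{eq:farrell}), then use $\mu\equiv\mu_{-1}$ together with transpose-invariance of the uniform law on $\mathcal{O}(p,q)$) is the paper's, and your (iii) is essentially correct as sketched. The gap is in (i) and leaks into (ii). In (i), the claim "the distribution of $\mathsf{U}$ is the uniform $\dif\mathsf{U}$" is not right: writing $U^{-1/2}x=\mathsf{U}S^{1/2}$, $\mathsf{U}=U^{-1/2}x(x^{\top}U^{-1}x)^{-1/2}$ is a deterministic function of $x$, not random. Your decomposition does yield $\epsilon=(\rho^{1/2}\mathsf{U}+(1-\rho)^{1/2}W)^{\top}(\rho^{1/2}\mathsf{U}+(1-\rho)^{1/2}W)$ with $(\tilde V,W)$ whose joint law is free of $x,U$, but the assertion that the law of $\epsilon$ is then $x$-free still needs an explicit step: for any $\mathsf{U},\mathsf{U}'\in\mathcal{O}(p,q)$ there is $O\in\mathcal{O}(p)$ with $\mathsf{U}'=O\mathsf{U}$, and one must use the left-$\mathcal{O}(p)$-invariance $OW=_d W$ to rotate $\mathsf{U}$ to a fixed reference element; you leave this implicit. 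The paper instead avoids the issue entirely by a cleverer linear change of variables: setting $u(x)=(u_1(x),u_2(x))\in\mathcal{O}(p)$ with $u_1(x)=U^{-1/2}x(x^{\top}U^{-1}x)^{-1/2}$, and $z:=u(x)^{\top}U^{-1/2}y(x^{\top}U^{-1}x)^{-1/2}$, so that $\epsilon=z^{\top}z$ by construction; substituting into the density of Lemma \ref{lem:kernel_mpcn} one finds that the law of $z$ w.r.t.\ $\nu$ has no $x$ or $U$ in it, handling (i) and (ii) at once. Note that the $\mathsf{U}$ appearing in (\ref{eq:gamma}) is the polar factor of $z$, not of $U^{-1/2}y$.

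There is also a secondary snag in your route for (ii): the pushforward $U^{-1/2}y=\mathsf{U}(S')^{1/2}$ followed by $S'=\epsilon\circ S=S^{1/2}\epsilon S^{1/2}$ is less immediate than it looks, because $(S')^{1/2}$ is generally not $S^{1/2}\epsilon^{1/2}$ (the square root does not factor across non-commuting products). One can rescue it, writing $(S')^{1/2}S^{-1/2}=O\epsilon^{1/2}$ for some $O\in\mathcal{O}(q)$ and absorbing $O$ by the right $\mathcal{O}(q)$-invariance of the uniform distribution on the Stiefel manifold $\mathcal{O}(p,q)$, but the paper's intermediate $z$ sidesteps this bookkeeping and makes the cancellation $\det R(x,y)=\det(S)\det(I_q+z^{\top}z-\rho^{1/2}(z_1+z_1^{\top}))$ automatic.
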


\begin{proof}
	Let
	$u_1(x)=U^{-1/2}x(x^{\top}U^{-1}x)^{-1/2}\in\mathcal{O}(p,q)$.
	There exists an orthogonal complement, $u_2(x)\in M(p, p-q)$ as an analytic function,  such that $u(x)=(u_1(x), u_2(x))\in \mathcal{O}(p)$. 
	We consider linear transformations of $y$,  
	\begin{equation}\nonumber
	z_i=u_i(x)^{\top}U^{-1/2}y(x^{\top}U^{-1}x)^{-1/2},\,\,\,\,\, i=1,2,\\
	\end{equation}
	and set $z=(z_1, z_2)\in M(p,q)$. 
	Then $\epsilon=z^{\top}z$. 
	By Lemma \ref{lem:kernel_mpcn}, and the change of variables formula \citep[Lemma 1.5.1 of][]{MR1960435}, 
%
	the law of $z$ is,
\begin{align*}
(1-\rho)^{pq/2}c_{p,q}
~
	\frac{\det(z^{\top}z)^{\revision{p}/2}}{
	\det(I_q+z^{\top}z-\rho^{1/2}(z_1+z_1^{\top}))^p}~\nu(\dif z),
\end{align*}
which does not involve $x$ or $U$. Also, by the change of variables formula (\ref{eq:farrell}), the law of $\epsilon=z^{\top}z$ is as in (\ref{eq:gamma}). 
Finally, the law, $\mathcal{L}_{-1}(\dif\epsilon)$, of the inverse $\epsilon^{-1}$ writes as (upon recalling that $\mu_{-1}\equiv \mu$), 
\begin{align*}
	\mathcal{L}_{-1}(\dif \epsilon) &= (1-\rho)^{pq/2}c_{p,q}
	~\mu(\dif \epsilon)\times
	\int_{\mathcal{O}(p,q)}\frac{\det(\epsilon^{-1})^{p/2}}{
	\det(I_q+\epsilon^{-1}-\rho^{1/2}(\mathsf{U}_{1}\epsilon^{-1/2}+\epsilon^{-1/2}
	\mathsf{U}_{1}^{\top}))^p}\dif \mathsf{U}\\
	&=(1-\rho)^{pq/2}c_{p,q}
	~
	\mu(\dif \epsilon)\times
	\int_{\mathcal{O}(p,q)}\frac{\det(\epsilon)^{p/2}}{
	\det(I_q+\epsilon-\rho^{1/2}(\mathsf{U}_{1}^{\top}\epsilon^{1/2}+\epsilon^{1/2}
	\mathsf{U}_{1}))^p}\dif \mathsf{U}.
\end{align*}
%
From Theorem 3.3.1 of \cite{MR1960435}, the uniform distribution on $\mathcal{O}(p,q)$ is the marginal (on $\mathcal{O}(p,q)$) of the uniform distribution on $\mathcal{O}(p)$. Also, by Section 1.4.1 of \cite{MR1960435}, the uniform distribution on $\mathcal{O}(p)$ is invariant under the matrix transpose operation.
Thus, the distribution of $\mathsf{U}_1$ is also invariant under the matrix transpose operation. 
The proof is now complete. 
\end{proof}
\noindent Proposition \ref{prop:random_walk} leads to the theorem below.
\begin{theo}
\label{th:Markov}
Let 
$\tilde{\Pi}(\dif S)=\tilde{\pi}(S)\mu(\dif S)$ 
be a target on 
$P^{+}(q)$.
Define the corresponding upcasted law on $M(p,q)$, $\Pi(\dif x) = 
\tilde{\pi}(x^{\top}U^{-1}x)\nu_{\,U}(\dif x)$, $U\in P^{+}(q)$, and let $Q = Q(\,\cdot\,;\rho,U)$ be the MpCN proposal kernel. Consider also the kernel 
$\tilde{Q}(S,\,\cdot\,):=_{d}\epsilon\,\circ\, S$, $S\in P^+(q)$, with the law $\mathcal{L}(\dif \epsilon)$ of $\epsilon\in P^{+}(q)$ as  determined in Proposition \ref{prop:random_walk}(ii), for parameters $\rho\in[0,1)$ and $p$, $p\ge q$. \\
If $\{X_n\}_{n\ge 0}$ is the MpCN Markov chain with target $\Pi$ and proposal $Q$,   
 then the process $\{S_{n}\}_{n\ge 0}$ with,
\begin{align*}
S_n:= X_n^{\top}U^{-1}X_n,
\end{align*}
is a Metropolis--Hastings Markov chain, with respect to its own filtration, with target 
$\tilde{\Pi}$ and proposal $\tilde{Q}$ (and initial position $S_0 =_d X_0^{\top}U^{-1}X_0$).
Moreover, it is a \emph{Random-Walk} Markov chain, in the sense 
that $\tilde{Q}$ is the transition kernel of a random walk 
$T_n= \epsilon_n \circ T_{n-1}$, $n\ge 1$, 
where $\epsilon_n$ follows a probability distribution $\mathcal{L}$ satisfying $\mathcal{L}=\mathcal{L}_{-1}$.
\end{theo}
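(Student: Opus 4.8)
The plan is to establish Theorem~\ref{th:Markov} in two stages: first, that $\{S_n\}$ is a Metropolis--Hastings chain with the advertised target and proposal; second, that this Metropolis--Hastings chain is of random-walk type with a reversible increment law. Both stages rest on the structural facts already assembled in Proposition~\ref{prop:random_walk} and Theorem~\ref{th:upcast}, so the work is mostly bookkeeping.

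\textbf{Step 1: The proposal kernel $\tilde{Q}$ is well-defined and corresponds to $Q$ under the map $x\mapsto x^{\top}U^{-1}x$.} By Proposition~\ref{prop:random_walk}(i), if $y\sim Q(x,\dif y)$ then $y^{\top}U^{-1}y =_d \epsilon\circ(x^{\top}U^{-1}x)$ with $\epsilon$ drawn from the fixed law $\mathcal{L}(\dif\epsilon)$ of part~(ii), which depends neither on $x$ nor on $U$. Hence the push-forward of $Q(x,\,\cdot\,)$ under $x\mapsto x^{\top}U^{-1}x$ depends on $x$ only through $S=x^{\top}U^{-1}x$ and equals precisely $\tilde{Q}(S,\,\cdot\,):=_d\epsilon\circ S$; this is the content of the kernel being well-defined on $P^+(q)$. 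First I would state this reduction explicitly, noting that it makes the proposal stage of $\{S_n\}$ Markovian in its own filtration.

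\textbf{Step 2: The acceptance step descends correctly.} The MpCN chain accepts $y$ over $x$ with probability $\alpha(x,y)=\min\{1,\pi(y)/\pi(x)\}$ where $\pi$ is the density of $\Pi$ with respect to $\nu_{\,U}$, i.e.\@ $\pi(x)=\tilde{\pi}(x^{\top}U^{-1}x)$ by~(\ref{eq:change_of_variable}). Therefore $\alpha(x,y)=\min\{1,\tilde{\pi}(y^{\top}U^{-1}y)/\tilde{\pi}(x^{\top}U^{-1}x)\}=\min\{1,\tilde{\pi}(\epsilon\circ S)/\tilde{\pi}(S)\}$, which is exactly the Metropolis acceptance probability for target $\tilde{\Pi}(\dif S)=\tilde{\pi}(S)\mu(\dif S)$ provided $\tilde{Q}$ is $\mu$-reversible. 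So I would next verify $\mu$-reversibility of $\tilde{Q}$: using the invariance of $\mu$ under $S\mapsto B^{1/2}SB^{1/2}$ (stated after~(\ref{eq:mmu})) together with $\mathcal{L}=\mathcal{L}_{-1}$ from Proposition~\ref{prop:random_walk}(iii), one checks $\mu(\dif S)\tilde{Q}(S,\dif S')=\mu(\dif S')\tilde{Q}(S',\dif S)$; concretely, if $S'=\epsilon\circ S$ then $S=\epsilon^{-1}\circ S'$, and the Jacobian/measure factors cancel by $\mu$-invariance while the increment laws match by $\mathcal{L}=\mathcal{L}_{-1}$. Combining Steps~1 and~2, $\{S_n\}$ evolves exactly as: propose $S'=\epsilon\circ S$ with $\epsilon\sim\mathcal{L}$, accept with probability $\min\{1,\tilde{\pi}(S')/\tilde{\pi}(S)\}$ — that is, it is the Metropolis--Hastings chain with target $\tilde{\Pi}$ and proposal $\tilde{Q}$, with the stated initial law. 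The only subtlety is checking that the \emph{joint} transition of $(S_n,$ acceptance indicator$)$ is measurable with respect to the filtration generated by $\{S_n\}$ alone; this follows because $\epsilon$ can be recovered (in law) from $(S_{n-1}, S'_n)$ without reference to $X_{n-1}$, the extra randomization $\mathsf{U}$ in~(\ref{eq:gamma}) having been integrated out.

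\textbf{Step 3: Random-walk structure.} The proposal $\tilde{Q}(S,\,\cdot\,) =_d \epsilon\circ S$ with $\epsilon\sim\mathcal{L}$ independent of $S$ is by definition the one-step transition of the multiplicative random walk $T_n=\epsilon_n\circ T_{n-1}$ on $P^+(q)$; and Proposition~\ref{prop:random_walk}(iii) gives $\mathcal{L}=\mathcal{L}_{-1}$, the symmetry of the increment distribution that makes this a genuine (reversible-increment) random walk. I would close by recording that $\circ$ is, up to conjugation, the natural group-type action appearing in the Haar-measure identities of Section~\ref{sec:reference}, which is why $\mu$ plays the role of the invariant reference measure.

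\textbf{Main obstacle.} The genuinely delicate point is Step~2 --- more precisely, confirming that $\{S_n\}$ is Markov \emph{with respect to its own filtration} and that the descended dynamics really is a bona fide Metropolis--Hastings chain rather than merely a marginal process. One must argue that conditioning on the whole history $X_0,\dots,X_n$ gives the same transition for $S_{n+1}$ as conditioning on $S_0,\dots,S_n$; the key is that, by Proposition~\ref{prop:random_walk}(i)--(ii), the conditional law of $S_{n+1}$ given $X_n$ factors through $S_n$, and the acceptance probability likewise depends on $X_n,X_{n+1}$ only through $S_n,S_{n+1}$. Once this ``function of a Markov chain is Markov'' argument is made carefully (invoking that the lumping is compatible with both the proposal and the acceptance mechanism), the remaining $\mu$-reversibility computation is routine given the invariance properties of $\mu$ and the identity $\mathcal{L}=\mathcal{L}_{-1}$ already in hand.
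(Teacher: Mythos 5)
Your overall plan is right, and Steps 1 and 3, as well as your remark on why $\{S_n\}$ is Markov in its own filtration (the proposal and the acceptance probability both factor through $S_n$), match what the paper does. The problem is the heart of Step 2: the algebraic identity you rely on to establish $\mu$-reversibility of $\tilde Q$ is false. You assert that if $S' = \epsilon\circ S$ then $S=\epsilon^{-1}\circ S'$. Unwinding $\circ$, the hypothesis says $S'=S^{1/2}\epsilon S^{1/2}$, while the conclusion would require $S=(S')^{1/2}\epsilon^{-1}(S')^{1/2}$. These only agree when $S$ and $\epsilon$ commute. A quick check: with $S=\mathrm{diag}(1,4)$ and $\epsilon=\begin{pmatrix}2&1\\1&2\end{pmatrix}$, one gets $\tr(\epsilon^{-1}\circ S')=\tr(\epsilon^{-1}S')=16/3\neq 5=\tr(S)$. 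The deeper issue is that $\circ$ is \emph{not} an associative group action on $P^+(q)$, so there is no ``increment cancellation'' available in the matrix setting. Pushing your calculation through, one finds that $\mu$-reversibility of $\tilde Q$ would follow from a putative identity $\ell(T\circ S^{-1})=\ell(S\circ T^{-1})$ for the density $\ell$ of $\mathcal{L}$, and the two arguments are not inverses of each other (only similar, hence equal in spectrum); this holds when $\ell$ depends only on eigenvalues (the $\rho=0$ case of Proposition \ref{prop:eigen}), but is not transparent for general $\rho$. So the ``Jacobian cancels, increment laws match'' step does not go through as written.

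The paper's proof sidesteps this entirely. It never attempts a direct verification of $\mu$-reversibility on $P^+(q)$; instead, it exploits the $\nu_U$-reversibility of $Q$ on $M(p,q)$ (Lemma \ref{lem:kernel_mpcn}, where the density $\mathsf{q}(x,y)$ is manifestly symmetric) together with the change-of-variables fact that $\nu_U$ pushes forward to $\mu$ under $x\mapsto x^{\top}U^{-1}x$ (equation \eqref{eq:farrell}, Theorem \ref{th:upcast}). Concretely: defining preimage sets $A^*=\{x: x^\top U^{-1}x\in A\}$ and likewise $B^*$, one writes
\begin{align*}
\int_A \tilde Q(S,B)\,\mu(\dif S) = \int_{A^*} Q(x,B^*)\,\nu_U(\dif x)
= \int_{B^*} Q(x,A^*)\,\nu_U(\dif x) = \int_B \tilde Q(S,A)\,\mu(\dif S),
\end{align*}
which is all that is needed. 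The $\mathcal{L}=\mathcal{L}_{-1}$ symmetry from Proposition \ref{prop:random_walk}(iii) is used only for the ``random-walk'' interpretation in the final sentence of the theorem (and later in Proposition \ref{prop:inv}), not for reversibility. I suggest you replace your Step 2 argument with this push-forward argument, and restrict the role of $\mathcal{L}=\mathcal{L}_{-1}$ to Step 3.
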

%
%
%
%

Given Theorem \ref{th:Markov}, we will refer -- without confusion -- 
to the Markov kernel, denoted $\tilde{P}$,  with target $\tilde{\Pi}$ and proposal $\tilde{Q} = \tilde{Q}(\,\cdot\,;\rho,p)$ as ``an MpCN kernel on $P^{+}(q)$". Notice that both $\tilde{\Pi}$ and $\tilde{Q}$ do not depend on the choice of parameter $U$. The MpCN kernel on $P^+(q)$ is a Random-Walk Metropolis kernel on this space.  





\begin{proof}[Proof of Theorem \ref{th:Markov}]
We show that $\{S_n\}_{n\ge 0}$ is in the class of Metropolis--Hastings Markov chains introduced in the beginning of Section \ref{sec:MCMC}, with triplet $(\tilde{\Pi},\tilde{Q},\mu)$.  
First, we prove $\mu$-reversibility of $\tilde{Q}$, by making use of the $\nu_U$-reversibility of $Q$ itself. Indeed, for any given Borel sets $A, B\subseteq P^+(q)$, upon defining the sets 
$A^*=\{x\in M(p,q):x^{\top}U^{-1}x\in A\}$ and 
$B^*=\{x\in M(p,q):x^{\top}U^{-1}x\in B\}$, we have,
\begin{align*}
\int_A\tilde{Q}(S,B)\mu(\dif S)&=
\int_{A^*}Q(x,B^*)\nu_U(\dif x)\\
&=
\int_{B^*}Q(x,A^*)\nu_U(\dif x)= \int_B\tilde{Q}(S,A)\mu(\dif S). 
\end{align*}
The acceptance probability of $\{X_n\}_{n\ge 0}$ coincides with that of the triplet $(\tilde{\Pi},\tilde{Q},\mu)$ via the surjective mapping $x\mapsto x^{\top}U^{-1}x$. 
The proof is complete. 
\end{proof}


\begin{prop}\label{prop:inv}
%
Let $\{S_n\}_{n\ge 0}$ be a Markov process on $P^{+}(q)$ with transition kernel corresponding to that of an MpCN chain with 
target $\tilde{\Pi}$ and proposal $\tilde{Q}(\,\cdot\,;\rho,p)$.
Then, $\{S_n^{-1}\}_{n\ge 0}$ is a Markov process on $P^{+}(q)$ with transition kernel corresponding to that of an MpCN transition kernel with 
target $\tilde{\Pi}_{-1}$ and identical proposal $\tilde{Q}(\,\cdot\,;\rho,p)$.
\end{prop}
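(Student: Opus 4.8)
The plan is to transport the Metropolis--Hastings description of the MpCN kernel on $P^{+}(q)$, supplied by Proposition \ref{prop:random_walk} and Theorem \ref{th:Markov}, through the involution $\iota:S\mapsto S^{-1}$ of $P^{+}(q)$. Since $\iota$ is a bimeasurable bijection, the image process $\{S_n^{-1}\}_{n\ge 0}$ is automatically a Markov chain, with transition kernel $\tilde{P}_{-1}(s,A):=\tilde{P}(s^{-1},\iota(A))$, where $\iota(A)=\{a^{-1}:a\in A\}$ and $\tilde{P}$ denotes the MpCN kernel with target $\tilde{\Pi}$ and proposal $\tilde{Q}(\,\cdot\,;\rho,p)$. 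It then suffices to recognise $\tilde{P}_{-1}$ as the Metropolis--Hastings kernel attached to the triplet $(\tilde{\Pi}_{-1},\tilde{Q}(\,\cdot\,;\rho,p),\mu)$ in the sense of Section \ref{sec:MCMC}; this splits into checking that $\iota$ maps a $\tilde{Q}$-proposal to a $\tilde{Q}$-proposal and that the acceptance rule is carried to the one for target $\tilde{\Pi}_{-1}$.

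For the proposal, recall from Theorem \ref{th:Markov} that a draw $T\sim\tilde{Q}(S,\,\cdot\,)$ is of the form $T=\epsilon\circ S$ with $\epsilon\sim\mathcal{L}$ independent of $S$. Using the identity $(A\circ B)^{-1}=A^{-1}\circ B^{-1}$ noted before Proposition \ref{prop:random_walk}, we get $T^{-1}=\epsilon^{-1}\circ S^{-1}$; and by Proposition \ref{prop:random_walk}(iii), equivalently by $\mathcal{L}=\mathcal{L}_{-1}$ in Theorem \ref{th:Markov}, the variable $\epsilon^{-1}$ again follows $\mathcal{L}$. Hence, conditionally on $S^{-1}$, the proposal $T^{-1}$ has law $\tilde{Q}(S^{-1},\,\cdot\,)$, i.e.\@ the proposal kernel is unchanged. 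The $\mu$-reversibility of $\tilde{Q}$ required to place it in the Metropolis framework was already established in the proof of Theorem \ref{th:Markov}.

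For the acceptance step, observe that since $\mu\equiv\mu_{-1}$ the density of $\tilde{\Pi}_{-1}$ with respect to $\mu$ equals $s\mapsto\tilde{\pi}(s^{-1})$. In the $\{S_n\}$ chain a proposed move $S\to T$ is accepted with probability $\min\{1,\tilde{\pi}(T)/\tilde{\pi}(S)\}$ and otherwise the chain remains at $S$; applying $\iota$, from current state $S^{-1}$ the proposed state $T^{-1}$ is accepted with probability $\min\{1,\tilde{\pi}(T)/\tilde{\pi}(S)\}=\min\{1,\tilde{\pi}((T^{-1})^{-1})/\tilde{\pi}((S^{-1})^{-1})\}$, and otherwise the chain remains at $S^{-1}$. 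This is exactly the acceptance rule (\ref{eq:acce}) for the triplet $(\tilde{\Pi}_{-1},\tilde{Q}(\,\cdot\,;\rho,p),\mu)$ evaluated at current point $S^{-1}$ and proposal $T^{-1}$. Combining with the previous paragraph shows that $\tilde{P}_{-1}$ is precisely the MpCN kernel on $P^{+}(q)$ with target $\tilde{\Pi}_{-1}$ and proposal $\tilde{Q}(\,\cdot\,;\rho,p)$, which is the claim.

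I do not anticipate a substantive obstacle: the only points requiring care are that the rejection (diagonal) part of the kernel is transported correctly -- which holds because $\iota$ is a bijection and the acceptance probabilities match pointwise -- and the bookkeeping of which reference measure ($\mu$ versus $\mu_{-1}$) appears, resolved by the identity $\mu\equiv\mu_{-1}$ recorded in Section \ref{sec:reference}.
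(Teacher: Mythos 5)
Your proposal is correct and follows essentially the same route as the paper's proof: both push the Metropolis triplet $(\tilde{\Pi},\tilde{Q},\mu)$ through the involution $S\mapsto S^{-1}$, checking in turn that $\mu_{-1}=\mu$ leaves the reference measure unchanged, that $\epsilon=_d\epsilon^{-1}$ (equivalently $\tilde{Q}(S,A)=\tilde{Q}(S^{-1},A^{-1})$) leaves the proposal unchanged, and that the target becomes $\tilde{\Pi}_{-1}$ with the acceptance ratio carried over verbatim. You simply spell out in a little more detail the step $T=\epsilon\circ S\Rightarrow T^{-1}=\epsilon^{-1}\circ S^{-1}$ and the identification of $\dif\tilde{\Pi}_{-1}/\dif\mu$, but the argument is the same.
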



\begin{proof}
Recall that $\{S_n\}_{n\ge 0}$ corresponds to the triplet $(\tilde{\Pi},\tilde{Q},\mu)$. 
We show that $\{S_n^{-1}\}_{n\ge 0}$ corresponds to triplet $(\tilde{\Pi}_{-1}, \tilde{Q}, 
\mu)$. 
Since $S\mapsto S^{-1}$ is a bijection in $P^+(q)$, $\{S_n^{-1}\}_{n\ge 0}$ is also a Markov chain. 
The third component of the triplet is invariant under this transform, since  $\mu_{-1}=\mu$. The second component is also invariant since $\tilde{Q}(S, A)=\tilde{Q}(S^{-1}, A^{-1})$, for $A\subseteq P^+(q)$, because $\epsilon=_d\epsilon^{-1}$ by Proposition \ref{prop:random_walk}(iii).  
The first component of the triplet $\tilde{\Pi}$
becomes $\tilde{\Pi}_{-1}$. 
Finally, the acceptance probability $\alpha(S, S^*)$ of the triplet $(\tilde{\Pi},\tilde{Q},\mu)$ is the same as that of $(\tilde{\Pi}_{-1}, \tilde{Q}, 
\mu)$ via the projection $S\mapsto S^{-1}$. 
Therefore  $\{S_n^{-1}\}_{n\ge 0}$ is the Metropolis chain determined by $(\tilde{\Pi}_{-1}, \tilde{Q}, 
\mu)$. 
The proof is complete.
\end{proof}

Notice that if a Markov chain $\{S_n\}_{n\ge 0}$ is geometrically ergodic, and $T_n=f(S_n)$ forms a Markov chain -- for a map $f$, on appropriate domains -- then $\{T_n\}_{n\ge 0}$ is also geometrically ergodic, since the $\sigma$-algebra generated by $\{S_n\}_{n\ge 0}$ contains that of $\{T_n\}_{n\ge 0}$. 
This observation is used in the following two statements.
\begin{itemize}
\item[(i)]
If the upcasted MpCN kernel on $M(p,q)$ is geometrically ergodic, then the deduced MpCN kernel on $P^{+}(q)$ is also geometrically ergodic. 
\item[(ii)] Proposition \ref{prop:inv} implies that, if the MpCN kernel 
on $P^{+}(q)$ targeting $\tilde{\Pi}$ is geometrically ergodic,  then the MpCN kernel targeting  $\tilde{\Pi}_{-1}$ (for the same $\rho$, $p$)  is also geometrically ergodic. 
Indicatively, geometric ergodicity of MpCN with Wishart target $\tilde{\Pi}\equiv W_q(r,T)$ is equivalent to geometric ergodicity of MpCN with Inverse-Wishart target $\tilde{\Pi}_{-1}\equiv W_q^{-1}(r,T^{-1})$, for $r\in\RR$, $r>q-1$, and $T\in P^{+}(q)$. 
\end{itemize}




\section{Ergodicity Results for RWM and pCN}
\label{sec:ergodicity}

The ergodic properties of the RWM and pCN kernels on a vector space have been studied in \cite{MT2, RT, MR1731030, Rudolf_2016, kama:17}. In this section, we investigate ergodicity on a matrix space. First, we note that RWM and pCN kernels are ergodic under fairly general assumptions \citep[e.g.,][]{MT1994, Kulik_2015}. Therefore, in this section we concentrate on geometric ergodicity. 

Throughout this section, and unless specified otherwise, $U\in P^+(p)$, $V\in P^+(q)$, $\rho\in[0,1)$ are 
the parameters appearing in the RWM and pCN proposal kernels on $M(p,q)$, $p\ge q$.
\revision{In this paper, a Markov kernel $P$ on $(E,\mathcal{E})$ is said to be geometrically ergodic if there is a probability measure $\Pi$ and $r\in (0,1)$ such that 
$$
V(x)=r^{-n}\|P^n(x,\cdot)-\Pi\|_{\mathrm{TV}}
$$
is $\Pi$-integrable, where for general measures $\mu$, $\nu$, 
$\|\mu-\nu\|_{\mathrm{TV}}=\sup_{A\in\mathcal{E}}|\mu(A)-\nu(A)|.$ 
}

\subsection{Ergodicity for RWM}

\noindent 
We provide a sufficient and a necessary condition for the geometric ergodicity of the RWM kernel.
Many MCMC methods do not work well for target distributions with contour manifolds that degenerate -- in a proper sense -- in the tails; 
see, e.g., Section 5 of \cite{MR1731030}.   
To exclude such cases, we consider the 
following class of functions.

\begin{defi}
A continuously differentiable function $h:M(p,q)\mapsto \mathbb{R}_+$ satisfies the contour condition if,
\begin{equation}\label{eq:contour_condition}
	\lim_{\|x\|_F\rightarrow+\infty}\big\langle\tfrac{D h(x)}{\|D h(x)\|_F},~\tfrac{x~}{\|x\|_F}\big\rangle_F<0. 
\end{equation}
\end{defi}

\noindent 




A sufficient condition for geometric ergodicity is formulated by using the above contour condition together with the following  exponentially light tail condition (\ref{eq:right-tail}). \revision{Though the theory in \cite{MR1731030} concerns only vector-valued processes, we can apply their result if we consider the $M(p,q)$ space as a vector space of length $pq$.  }

\begin{prop}[\citealt{MR1731030}]
\label{prop:rwm_GE}
Consider the law $\Pi(\dif x) = \pi(x) \operatorname{Leb}(\dif x)$ on $M(p,q)$. Assume that $\revision{\log}\,\pi$ is continuously differentiable, satisfies the contour condition, and, 
	\begin{equation}\label{eq:right-tail}
		\lim_{\|x\|_F\rightarrow+\infty}\big\langle D\log\pi(x), \tfrac{x~}{\|x\|_F}\big\rangle =-\infty. 
	\end{equation}
	Then, RWM with target $\Pi$ 
is geometrically ergodic. 
\end{prop}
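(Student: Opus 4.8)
The plan is to reduce the matrix statement to a known vector-space result, since Proposition \ref{prop:rwm_GE} is explicitly attributed to \cite{MR1731030} and the remark preceding it already signals that $M(p,q)$ is to be treated as $\mathbb{R}^{pq}$. So first I would fix the linear isometry $\iota:M(p,q)\to\mathbb{R}^{pq}$ that stacks the entries $X_{ij}$; because $\mathrm{Leb}(\dif X)=\prod_{i,j}\dif X_{ij}$ is exactly the pushforward of Lebesgue measure on $\mathbb{R}^{pq}$, and because $\langle\cdot,\cdot\rangle_F$ is the pullback of the Euclidean inner product, the RWM proposal $y=x+w$ with $w\sim N_{p,q}(0,U,V)$ transports to an ordinary Gaussian random-walk proposal on $\mathbb{R}^{pq}$ with covariance $V\otimes U$ (or $U\otimes V$, depending on the vectorization convention), which is a fixed positive-definite matrix. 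Hence the matrix RWM chain is, up to this isometry, an instance of the Euclidean RWM chain studied in \cite{MR1731030}.

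Next I would check that the three hypotheses translate verbatim. Continuous differentiability of $\log\pi$ is coordinate-free. The contour condition \eqref{eq:contour_condition}, written with $Dh$ and $\|\cdot\|_F$, is exactly the condition of \cite{MR1731030} once $Dh(x)$ is identified with the Euclidean gradient under $\iota$ — here one uses that $Dh(A)=(\partial h/\partial A_{ij})$ is precisely the Frobenius-gradient, so the normalized inner product in \eqref{eq:contour_condition} equals the cosine of the angle between $\nabla(\log\pi\circ\iota^{-1})$ and the radial direction in $\mathbb{R}^{pq}$. Likewise the light-tail condition \eqref{eq:right-tail} is literally the statement that the radial derivative of $\log\pi$ tends to $-\infty$, which is the exponential-tail assumption in \cite{MR1731030}. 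At this point I would simply quote the main geometric-ergodicity theorem of \cite{MR1731030} (the curvature/contour plus super-exponential tail condition) applied to $\pi\circ\iota^{-1}$ on $\mathbb{R}^{pq}$, and then transport geometric ergodicity back through $\iota$: since $\iota$ is a bimeasurable bijection, $\|P^n(x,\cdot)-\Pi\|_{\mathrm{TV}}$ is unchanged, so a geometric drift/minorization on $\mathbb{R}^{pq}$ yields one on $M(p,q)$, matching the definition of geometric ergodicity given in the excerpt.

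Concretely the steps, in order, are: (1) set up $\iota$ and record that it is a linear isometry carrying $\mathrm{Leb}$ to $\mathrm{Leb}$ and $\langle\cdot,\cdot\rangle_F$ to the Euclidean product; (2) verify the RWM proposal kernel pushes forward to a Gaussian random walk with a fixed nondegenerate covariance; (3) note the target density pushes forward to $\pi\circ\iota^{-1}$ and that hypotheses (continuous differentiability, contour condition, light tails) are preserved, identifying $Dh$ with the Euclidean gradient; (4) invoke the geometric-ergodicity theorem of \cite{MR1731030} on $\mathbb{R}^{pq}$; (5) pull the conclusion back along $\iota$, noting total-variation distances and hence the defining property of geometric ergodicity are invariant under $\iota$.

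The main obstacle is not really a deep one: it is the bookkeeping in step (3), namely making sure that the normalization by $\|Dh(x)\|_F$ in \eqref{eq:contour_condition} corresponds exactly to the hypothesis in \cite{MR1731030} (their condition is typically phrased with the unit outward normal to the level set, i.e.\ $\nabla\log\pi/\|\nabla\log\pi\|$), and checking any nondegeneracy caveat they impose — e.g.\ that $\|Dh(x)\|_F$ stays bounded away from zero in the tails, which in fact follows from \eqref{eq:right-tail}. One should also confirm the proposal-covariance $V\otimes U$ is genuinely positive definite (it is, as a Kronecker product of positive-definite matrices), since \cite{MR1731030} assume a nondegenerate increment distribution with a density bounded below near the origin, which the Gaussian satisfies. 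Beyond these routine verifications, the proof is essentially a transfer argument.
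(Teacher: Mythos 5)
Your proposal is correct and matches the paper's own treatment: the paper gives no explicit proof for Proposition \ref{prop:rwm_GE}, instead attributing it to \cite{MR1731030} and remarking that $M(p,q)$ is to be regarded as $\mathbb{R}^{pq}$, which is exactly the vectorization-and-transfer argument you spell out. Your elaboration — identifying the Frobenius inner product with the Euclidean one, noting the proposal covariance $V\otimes U$ is positive definite, and observing that total variation and hence geometric ergodicity are invariant under the bimeasurable bijection — is precisely the bookkeeping the paper leaves implicit.
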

\noindent 
A necessary condition can be formulated via a moment requirement, \revision{that first appeared in \cite{MR1996270} for Euclidean spaces. In \cite{kama:17} such results were generalized to  metric spaces including the matrix space $M(p,q)$ with Frobenius norm $\|\cdot\|_F$. The result can also be applied to probability measures on $P^+(q)$, upcasted onto $M(p,q)$. }

\begin{prop}[\citealt{MR1996270,kama:17}]\label{prop:necessary-condition}\ \\ \vspace{-0.7cm}
\begin{itemize}
\item[(i)]
	If the RWM kernel with target $\Pi(\dif x)$ on $M(p,q)$ 
is geometrically ergodic, then, for some $s>0$,
	$$
	\int_{M(p,q)} \exp(s\|x\|_F)\Pi(\dif x)<\infty.
	$$
\item[(ii)]	
Let $\tilde{\Pi}(\dif S)=\tilde{\pi}(S)\mu(\dif S)$ be a probability distribution on $P^+(q)$, with corresponding upcasted law on  $M(p,q)$, $\Pi(\dif x) =
\tilde{\pi}(x^{\top}\U ^{-1}x)\nu_{\,\U }(\dif x)$. 
	If the RWM chain with target $\Pi$ 
is geometrically ergodic, then, for some $s>0$,
	$$
	\int_{P^+(q)} \exp\left(s~\sqrt{\tr(S)}\right)\tilde{\Pi}(\dif S)<\infty.
	$$
\end{itemize} 
\end{prop}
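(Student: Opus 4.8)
The plan is to deduce part (ii) from part (i) by pushing the moment condition through the upcasting map of Theorem~\ref{th:upcast}. First I would recall that if $X\sim\Pi(\dif x)=\tilde\pi(x^\top\U^{-1}x)\nu_{\,\U}(\dif x)$ on $M(p,q)$, then by Theorem~\ref{th:upcast} the variable $S=X^\top\U^{-1}X$ has law $\tilde\Pi(\dif S)$ on $P^+(q)$. Part (i), applied to the geometrically ergodic RWM chain with target $\Pi$, gives $\int_{M(p,q)}\exp(s\|x\|_F)\,\Pi(\dif x)<\infty$ for some $s>0$. The task is then to show that this integrability of $\exp(s\|x\|_F)$ under $\Pi$ forces integrability of $\exp(s'\sqrt{\tr(S)})$ under $\tilde\Pi$ for some $s'>0$, which amounts to a pointwise comparison between $\sqrt{\tr(x^\top\U^{-1}x)}$ and $\|x\|_F$ after passing through the change of variables.

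The key computation is the elementary inequality relating the two norms. Writing $z=\U^{-1/2}x$, we have $\tr(x^\top\U^{-1}x)=\tr(z^\top z)=\|z\|_F^2$, so $\sqrt{\tr(S)}=\|z\|_F=\|\U^{-1/2}x\|_F\le\|\U^{-1/2}\|_{\mathrm{op}}\,\|x\|_F$, where $\|\U^{-1/2}\|_{\mathrm{op}}$ is the operator norm of $\U^{-1/2}$ (equivalently the reciprocal square root of the smallest eigenvalue of $\U$), a finite constant depending only on $\U$. Hence, setting $s'=s/\|\U^{-1/2}\|_{\mathrm{op}}$, we get $\exp(s'\sqrt{\tr(S)})\le\exp(s\|x\|_F)$ pointwise along the substitution $S=x^\top\U^{-1}x$. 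Then
\begin{align*}
\int_{P^+(q)}\exp\!\big(s'\sqrt{\tr(S)}\big)\,\tilde\Pi(\dif S)
=\int_{M(p,q)}\exp\!\big(s'\sqrt{\tr(x^\top\U^{-1}x)}\big)\,\Pi(\dif x)
\le\int_{M(p,q)}\exp(s\|x\|_F)\,\Pi(\dif x)<\infty,
\end{align*}
where the first equality is exactly the pushforward identity of Theorem~\ref{th:upcast} (take $g(S)=\exp(s'\sqrt{\tr(S)})$, which is a legitimate bounded-from-below measurable function on $P^+(q)$), and the final bound is part (i). This proves (ii).

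I do not anticipate a serious obstacle here: the argument is a one-line norm comparison plus the already-established change-of-variables formula. The only points requiring a little care are (a) confirming that $g(S)=\exp(s'\sqrt{\tr(S)})$ is integrable against the relevant measures in the sense needed to apply the identity in Theorem~\ref{th:upcast} — this is automatic since both sides are equal as possibly-infinite integrals of a nonnegative function — and (b) making explicit that $\|\U^{-1/2}\|_{\mathrm{op}}<\infty$ because $\U\in P^+(p)$ has strictly positive eigenvalues. If one prefers to avoid operator norms, the same bound follows from $\tr(z^\top z)=\tr(\U^{-1}xx^\top)\le\lambda_{\max}(\U^{-1})\tr(xx^\top)=\lambda_{\max}(\U^{-1})\|x\|_F^2$, using that $xx^\top\succeq 0$ and the variational characterization of $\lambda_{\max}$. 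Everything else is a direct invocation of part (i) and Theorem~\ref{th:upcast}.
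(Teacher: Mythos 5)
Your proposal is correct and matches the route the paper implicitly takes: part (i) is cited from the literature, and part (ii) is reduced to part (i) via the pushforward identity of Theorem~\ref{th:upcast} together with the elementary bound $\sqrt{\tr(x^\top\U^{-1}x)}=\|\U^{-1/2}x\|_F\le\lambda_{\min}(\U)^{-1/2}\|x\|_F$. Nothing to add.
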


\noindent

For example, $\alpha\ge 1$ is necessary and $\alpha>1$ sufficient for the Lebesgue density $\pi(x)\propto \exp(-\beta\|x\|_F^\alpha)$, $\beta>0$.  In contrast,  RWM is expected  not to be geometrically ergodic when $\pi(x)$ has heavy tails. 

Regular variation  on $\mathbb{R}$ is a concept met in 
a large literature -- see, e.g., \cite{opac-b1134644, MR1015093} -- and can be used to characterise popular classes of heavy-tailed functions. Regular variation on $P^+(q)$ is a natural extension from $\mathbb{R}$.
We first define an appropriate metric.
%
%
A matrix $A\in P^+(q)$ writes as
$A= \mathsf{U}\Lambda \mathsf{U}^{\top}$, for $\mathsf{U}\in \mathcal{O}(q)$ and 
positive diagonal matrix $\Lambda$. 
The logarithmic map, $\log:P^+(q)\rightarrow \mathrm{Sym}(q)$, is given by 
$
\log A= \mathsf{U}(\log \Lambda) \mathsf{U}^{\top}
$
where $\log \Lambda:=\mathrm{diag}\{\log\lambda_1,\ldots,\log\lambda_q\}$. 
We consider a metric $\mathsf{d}(\cdot,\cdot)$ on $P^+(q)$, such that,   
\begin{equation}
\mathsf{d}(A,B):=\|\log(A\circ B^{-1})\|_F=\big\{\sum_{i=1}^q \log^2\lambda_i\big\}^{1/2},
\label{eq:metric}
\end{equation}
where $\lambda_1,\ldots,\lambda_q$ are eigenvalues of $A\circ B^{-1}$, $A, B\in P^{+}(q)$. 
This is a natural distance induced by the logarithmic map; see Theorem XII.1.3 of \cite{MR1666820}. 
Note that a matrix can have several square roots, in general, but the value of $\mathsf{d}(A,B)$ is unique. 
The topology induced by the metric $\mathsf{d}$ and that by the Frobenius norm are different; the former fits naturally to $P^+(q)$. $P^+(q)$ is a complete metric space under $\mathsf{d}$, but not under the Frobenius norm: e.g., observe that $A_n=\mathrm{diag}\{n^{-1},1,\ldots, 1\}$ forms a Cauchy sequence under the Frobenius norm, with a limit $A=\mathrm{diag}\{0,1,\ldots,1\}\notin P^+(q)$. 

\begin{defi}
\label{def:RV}
 A  function $h:P^+(q)\rightarrow\mathbb{R}_+$ is regularly varying if 
 there exists $r\in\RR$, with $r>q-1$, such that,
	$$
	\frac{h(z x)}{h(z I_q)}\longrightarrow_{z\rightarrow +\infty}\ \det(x)^{\revision{-}r/2},
	$$
	locally uniformly in $x\in P^+(q)$ under the topology induced by the metric $\mathsf{d}(\cdot,\cdot)$ in (\ref{eq:metric}). 
\end{defi}

\begin{rem} 
\label{rem:RV}
The probability density function of the Inverse-Wishart law $W_q^{-1}(r,T)$ is regularly varying since,
\begin{align*}
    \frac{\pi(zx)}{\pi(zI_q)}=\det(x)^{-r/2}\exp\left(-z^{-1}\tr\left(T(x^{-1}-I_q)\right)\right)\longrightarrow_{z\rightarrow +\infty}\ \det(x)^{-r/2}. 
\end{align*}
\end{rem}

\begin{cor}\label{cor:kama17}
Let $\tilde{\Pi}(\dif S) = \tilde{\pi}(S)\mu(\dif S)$ be a probability law on $P^+(q)$. If $\tilde{\pi}$ is continuous, regularly varying, then the RWM chain with target $\Pi(\dif x) = \tilde{\pi}(x^{\top}\U ^{-1}x)\nu_{\,\U }(\dif x)$ on $M(p,q)$ 
is not geometrically ergodic. 
\end{cor}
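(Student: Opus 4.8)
The plan is to show that the regularly varying tail of $\tilde{\pi}$ forces the upcasted target $\Pi$ on $M(p,q)$ to have tails too heavy for the exponential moment condition of Proposition~\ref{prop:necessary-condition}(ii) to hold; geometric ergodicity of RWM would then contradict that necessary condition. So the core of the argument is the estimate
$$
\int_{P^+(q)}\exp\!\big(s\,\sqrt{\tr(S)}\big)\,\tilde{\Pi}(\dif S)=+\infty\quad\text{for every }s>0.
$$
First I would translate the regular variation hypothesis into a genuine lower bound on $\tilde{\pi}$ in the tails. Fix a direction $x_0\in P^+(q)$, say $x_0=I_q$ is not informative, so instead pick $x_0$ with $\det(x_0)>0$ and large $\tr(x_0)$; write a general large $S$ as $S = z\,x$ with $z\to\infty$ and $x$ ranging over a $\mathsf{d}$-compact neighbourhood $K$ of $x_0$. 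Definition~\ref{def:RV} gives, locally uniformly on $K$, that $\tilde{\pi}(zx)\ge \tfrac12 \tilde{\pi}(zI_q)\det(x)^{-r/2}$ for $z$ large; and since $\tilde{\pi}(zI_q)$ is itself regularly varying in the scalar sense (ratio $\to 1$ as the argument is scaled), $\tilde{\pi}(zI_q)\ge c\, z^{-rq/2}\,\ell(z)$ for a slowly varying $\ell$, in particular $\tilde{\pi}(zI_q)\ge z^{-rq/2-\delta}$ for any $\delta>0$ and $z$ large. Here one must be a little careful: regular variation as stated controls ratios but not the absolute scale, so strictly one only gets that $z\mapsto \tilde{\pi}(zI_q)z^{rq/2}$ is slowly varying; that is enough, since slowly varying functions are eventually bounded below by $z^{-\delta}$.

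Next I would set up the integral in a convenient coordinate system: write $S = t\,\Omega$ where $t=\tr(S)>0$ is the radial part and $\Omega$ lies in the compact ``sphere'' $\{\Omega\in P^+(q):\tr(\Omega)=1\}$, and disintegrate $\mu(\dif S)$ accordingly. Because $\mu$ is the invariant measure with density $(\det S)^{-(q+1)/2}$ against Lebesgue on $\mathrm{Sym}(q)$, and $\dim \mathrm{Sym}(q)=q(q+1)/2$, the radial part picks up a factor $t^{q(q+1)/2-1}$ from the Lebesgue measure and $t^{-q(q+1)/2}$ from the density $(\det S)^{-(q+1)/2}=t^{-q(q+1)/2}(\det\Omega)^{-(q+1)/2}$, so $\mu(\dif S) = t^{-1}\,\dif t\,\sigma(\dif\Omega)$ for a finite measure $\sigma$ on the sphere (with density $(\det\Omega)^{-(q+1)/2}$ against the induced Lebesgue measure). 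Restricting $\Omega$ to a small $\mathsf{d}$-neighbourhood $K$ of a fixed interior point $\Omega_0$ (so $\det\Omega$ is bounded below and $\sigma(K)>0$), and using the tail lower bound from the previous paragraph with $z=t$, $x=\Omega$, we obtain
$$
\int_{P^+(q)}\exp\!\big(s\sqrt{\tr(S)}\big)\tilde{\pi}(S)\,\mu(\dif S)\ \ge\ c'\int_1^{\infty}\exp(s\sqrt{t})\,t^{-rq/2-\delta}\,\frac{\dif t}{t}.
$$
Since $\exp(s\sqrt t)$ beats any power of $t$, this last integral diverges for every $s>0$, $\delta>0$. Hence the necessary condition of Proposition~\ref{prop:necessary-condition}(ii) fails, and RWM with target $\Pi$ cannot be geometrically ergodic.

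The main obstacle I anticipate is the passage from the purely \emph{ratio}-type regular-variation hypothesis to a usable \emph{absolute} lower bound on $\tilde{\pi}$ along the radial direction — i.e. justifying $\tilde{\pi}(tI_q)\gtrsim t^{-rq/2-\delta}$. This requires invoking the one-dimensional theory of slowly varying functions (Potter-type bounds) after verifying that $t\mapsto \tilde{\pi}(tI_q)\,t^{rq/2}$ is slowly varying, which in turn follows from Definition~\ref{def:RV} applied with $x = \lambda I_q$ together with continuity of $\tilde\pi$. A secondary, more bookkeeping-type point is to make the ``locally uniform in $x$ under $\mathsf{d}$'' convergence interact cleanly with the radial decomposition: one needs $K$ to be $\mathsf{d}$-compact \emph{and} to sit in the interior of the trace-one sphere so that $\det\Omega$ stays bounded away from $0$ — both are arranged by choosing $K$ a closed $\mathsf{d}$-ball around an interior point of the sphere, which is $\mathsf{d}$-compact by completeness of $(P^+(q),\mathsf{d})$ and has positive $\sigma$-mass. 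Everything else — the change to radial coordinates, the Jacobian computation, and the final divergence of the $\dif t$ integral — is routine.
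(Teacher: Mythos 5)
Your argument is correct and follows the same overall strategy as the paper: reduce to the exponential-moment necessary condition of Proposition~\ref{prop:necessary-condition}(ii), disintegrate the measure on $P^+(q)$ into a radial part $t=\tr(S)$ and a compact angular part bounded away from the boundary, use the matrix regular-variation hypothesis to obtain a scalar regularly varying lower envelope for the radial density, invoke Potter-type bounds to turn that into a polynomial lower bound $t^{-rq/2-\delta}$, and finish by noting $\exp(s\sqrt t)$ beats every power of $t$. The only substantive difference is cosmetic: you decompose $S = t\Omega$ directly onto the trace-one slice $\{\tr\Omega=1\}$, compute the Jacobian by hand, and apply the locally uniform convergence of Definition~\ref{def:RV} pointwise to lower-bound $\tilde\pi(t\Omega)$ against $\tilde\pi(tI_q)$; the paper instead pushes the angular integral inside first, via the eigenvalue decomposition $S=z\,\mathsf{U}^\top\Lambda^*\mathsf{U}$ and Theorem 5.3.1 of \cite{MR770934}, defines a scalar function $h_\varepsilon(z)$, and shows \emph{that} function is regularly varying by dominated convergence before applying Theorem 1.5.6(iii) of \cite{MR1015093}. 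Both routes need the same two ingredients — locally uniform convergence to dominate on a compact angular set, and a Potter bound for the resulting slowly varying factor — just applied in a different order. Your version arguably exposes the structure more cleanly, while the paper's sidesteps any need to verify the polar disintegration of $\mu$ by quoting Farrell directly.

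One small slip in your prose: you write that $\tilde\pi(zI_q)$ is regularly varying ``in the scalar sense (ratio $\to 1$ as the argument is scaled)''. The ratio $\tilde\pi(z\lambda I_q)/\tilde\pi(zI_q)\to\lambda^{-rq/2}$, not $1$; what you actually need (and correctly use in the next line) is that $z\mapsto z^{rq/2}\tilde\pi(zI_q)$ is slowly varying, which is exactly what Definition~\ref{def:RV} applied with $x=\lambda I_q$ gives, together with continuity of $\tilde\pi$ to guarantee measurability for the Karamata/Potter machinery. The rest — $\mathsf{d}$-compactness of a small ball around an interior $\Omega_0$, positivity of $\sigma(K)$, the Jacobian $t^{-1}\dif t$ — is all fine.
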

\begin{proof}
%
Any $S\in P^+(q)$ writes as 
$S= \mathsf{U}^{\top}\Lambda \mathsf{U}= (\sum_{i}{\lambda_i})~\mathsf{U}^{\top}\Lambda^{\ast}\mathsf{U}= z\mathsf{U}^{\top}\Lambda^{\ast}\mathsf{U}$, with
$z=\sum_{i}{\lambda_i}$,
$\mathsf{U}\in\mathcal{O}(q)$, $\Lambda=\mathrm{diag}\{\lambda_1,\cdots,\lambda_q\}$, $\Lambda^{\ast}=\mathrm{diag}\{\lambda_1^{\ast},\cdots,\lambda_q^{\ast}\}$,  $0<\lambda_1<\cdots<\lambda_q$, where we have set $\lambda_i^* = \lambda_i/\sum_j \lambda_j$.
%
We define a bounded set, for $\varepsilon \ge 0$,
$$
\Gamma_\varepsilon:=\Big\{\lambda^*=(\lambda_1^*,\ldots, \lambda_{q-1}^*)\in\mathbb{R}_+^{q-1}: \varepsilon<\lambda_1^*<\cdots<\lambda_q^*,\, \lambda_q^* = 1-\sum_{i=1}^{q-1}\lambda_i^*\Big\}. 
$$
We introduced $\varepsilon$ here so that we can use Lebesgue's dominated convergence theorem in the following inequality. 
By Theorem 5.3.1 of \cite{MR770934}, for $f:\mathbb{R}_+\rightarrow\mathbb{R}_+$, we have,
\begin{align*}
\int_{S\in P^+(q)} &f(\tr(S))\tilde{\Pi}(\dif S)=c\int f(\tr\Lambda)\prod_{i<j}(\lambda_j-\lambda_i)\prod_{i=1}^q\lambda_i^{-(q+1)/2}\dif\lambda_i\int_{\mathcal{O}(q)} \tilde{\pi}(\mathsf{U}^{\top}\Lambda\mathsf{U})\dif\mathsf{U}\\
&=c\int_{\mathbb{R}_{+}} f(z)\Big\{\int_{\lambda^*\in\Gamma_0}\prod_{i=1}^{q}\prod_{j=i+1}^{q}(\lambda_j^*-\lambda_i^*)(\lambda_i^*)^{-(q+1)/2}\dif\lambda^{\ast}\int_{\mathcal{O}(q)} \tilde{\pi}(z\mathsf{U}^{\top}\Lambda^{\ast}\mathsf{U})\dif\mathsf{U}\Big\}z^{-1}~\dif z\\
&\ge c\int_{\mathbb{R}_{+}} f(z)\Big\{\int_{\lambda^{\ast}\in\Gamma_\varepsilon}\prod_{i=1}^{q}\prod_{j=i+1}^{q}(\lambda^{\ast}_j-\lambda^{\ast}_i)(\lambda^{\ast}_i)^{-(q+1)/2}\dif\lambda^{\ast}\int_{\mathcal{O}(q)} \tilde{\pi}(z\mathsf{U}^{\top}\Lambda^{\ast}\mathsf{U})\dif\mathsf{U}\Big\}z^{-1}~\dif z\\
&=c\int_{\mathbb{R}^+} f(z) h_\varepsilon(z) z^{-1}\dif z, 
\end{align*}
with $h_\varepsilon(z)$ defined above in an obvious way.
Given that $\tilde{\pi}$ is assumed to be regularly varying, $h_\varepsilon(z)$ can also be shown to be regularly varying, since for any $x>0$ we have,
\begin{align*}
      \frac{\tilde{\pi}(zx\mathsf{U}^{\top}\Lambda^{\ast}\mathsf{U})}{\tilde{\pi}(zI_q)}\longrightarrow_{z\rightarrow\infty}\det(x~\mathsf{U}^{\top}\Lambda^*\mathsf{U}))^{-r/2}=x^{-rq/2}\det(\mathsf{U}^{\top}\Lambda^*\mathsf{U})^{-r/2},
\end{align*}
for some $r\in\RR$, with $r>q-1$;
hence, from Lebesgue's dominated convergence theorem,
\begin{align*}
      \frac{h_\varepsilon(zx)}{h_\varepsilon(z)}=\frac{h_\varepsilon(zx)/\tilde{\pi}(zI_q)}{h_\varepsilon(z)/\tilde{\pi}(zI_q)}\longrightarrow_{z\rightarrow\infty} x^{-rq/2}.
\end{align*}
Thus, for 
$f(x)=\exp(s\sqrt{x})$, $s,x>0$, we have obtained, 
\begin{align*}
\int_{S\in P^+(q)} \exp\big(s\sqrt{\tr(S)}~\big)\tilde{\Pi}(\dif S)
	&\ge c\int_0^\infty \exp\left(s\sqrt{z}~\right)h_\varepsilon(z)~z^{-1}~\dif z. 
\end{align*}
By Theorem 1.5.6(iii) of \cite{MR1015093},  $h_\varepsilon(z)z^{-1}\ge Cz^{-rq/2-1-\delta}$,
for any $\delta>0$, for some $C>0$, and sufficiently large $z>0$. The claim of Corollary \ref{cor:kama17} follows, since,
\[
\lim_{z\rightarrow\infty}\exp(s\sqrt{z}~)h_\varepsilon(z)z^{-1}=+\infty, 
\]
for any $s>0$, that violates the integrability condition in Proposition \ref{prop:necessary-condition}(ii). 
\end{proof}
%
%
\subsubsection{Example Cases}
\label{sec:example-rwm}
Using the above results, we check geometric ergodicity for RWM for the standard probability measures written down in Section \ref{subsec:pm}. 
\begin{enumerate}
    \item 
Let $\pi(x)$ be the density function of $N_{p,q}(M,\Sigma,\mathrm{T})$ under $\mathrm{Leb}$. Then, we have the derivative $D\log\pi(x)=-\Sigma^{-1}(x-M)\mathrm{T}^{-1}$. Thus, as $\|x\|_F\rightarrow \infty$, the inner product term at the contour condition (\ref{eq:contour_condition}) is dominated from above by,
$$
-\inf_{e:\|e\|_F=1}\frac{\left\langle \Sigma^{-1}e\mathrm{T}^{-1},e\right\rangle_{F}}{\|\Sigma^{-1}e \mathrm{T}^{-1}\|_F}
\le 
-\frac{\inf_{e:\|e\|_F=1}\left\langle \Sigma^{-1}e \mathrm{T}^{-1},e\right\rangle_F}{\sup_{e:\|e\|_F=1}\|\Sigma^{-1}e \mathrm{T}^{-1}\|_F}. 
$$
The denominator is bounded from above, and the numerator is bounded away from~$0$ since
$\Sigma$ and $\mathrm{T}$ are positive definite. 
Thus, the density satisfies the contour condition. We can also check that $\pi$ satisfies  (\ref{eq:right-tail}). Therefore, RWM is geometrically ergodic. 
\item Consider the Wishart distribution $W_q(r,T)$, $r\in \RR$, $r>q-1$. 
Since the probability measure is defined on $P^+(q)$, we use the upcasting strategy to apply RWM. The  density function of the upcasted law of $W_q(r,T)$ under $\mathrm{Leb}$ is,
\begin{align*}
    \pi(x)= const.\times
    \det (x^{\top}\U ^{-1}x)^{r/2}\exp\left(-\tr\,[\,T^{-1}(x^{\top}\U ^{-1}x)\,]/2\right)\,\det(x^{\top}\U ^{-1}x)^{-p/2}.
\end{align*}
Thus,
\begin{align*}
    D\log\pi(x)    &=\frac{r-p}{2} D\log(\det(x^{\top}\U ^{-1}x))
    -\U ^{-1}x T^{-1}. 
\end{align*}
By the triangle inequality, inequalities (\ref{eq:contour_condition}) and (\ref{eq:right-tail}) will follow once we show,
$$
\lim_{\|x\|_F\rightarrow\infty}\frac{\|D\log(\det(x^{\top}\U ^{-1}x))\|_F}{\|x\|_F}=0. 
$$
By Equation (15.8.6) of \cite{MR1467237}, 
\begin{align*}
    \frac{\dif}{\dif x_{ij}}\log(\det(x^{\top}\U ^{-1}x))
    &=\tr\Big[\,(x^{\top}\U ^{-1}x)^{-1}\frac{\dif}{\dif x_{ij}}(x^{\top}\U ^{-1}x)\,\Big]\\
    &=2~[~\U ^{-1}x(x^{\top}\U ^{-1}x)^{-1}~]_{ij}. 
\end{align*}
Therefore, 
\begin{align*}
    \|D\log(\det(x^{\top}\U ^{-1}x))\|_F^2
    &=4\sum_{ij}[~\U ^{-1}x(x^{\top}\U ^{-1}x)^{-1}~]_{ij}^2\\
    &\le 4\|x\|_F^{-1}\sup_{e:\|e\|_F=1}\sum_{ij}[~\U ^{-1}e(e^{\top}\U ^{-1}e)^{-1}~]_{ij}^2,
\end{align*}
which converges to $0$ as $\|x\|_F\rightarrow\infty$. 
So inequalities (\ref{eq:contour_condition}) and (\ref{eq:right-tail}) hold. This implies that RWM is geometrically ergodic for the Wishart distribution. 
\item
By Corollary \ref{cor:kama17}, 
since the Inverse-Wishart distribution has regularly varying density function (see Remark \ref{rem:RV}), RWM is not geometrically ergodic for $W^{-1}_q(r,T)$. 
\end{enumerate}
%
%
\subsection{Ergodicity of pCN kernel}

The pCN algorithm is more sensitive as to the choice of target distribution. We establish a necessary condition for geometric ergodicity by using the decay of deviation of $\log\pi$, with $\pi$ the Lebesgue density of the target. By the following result, if $\pi(x)\propto \exp(-\beta\|x\|_F^\alpha)$, $x\in M(p,q)$, then it is  necessary that $\alpha\ge 2$.


\begin{prop}\,\,
\label{prop:strange}
\begin{itemize}
    \item[(i)]
    Consider $\Pi(\dif x)=\pi(x)\,\mathrm{Leb}(\dif x)$.
    Let  $\lambda$ be the largest amongst all eigenvalues of $U$, $V$ (recall these are parameters appearing at the pCN proposal kernel).
    Set,
    \[
    C_{r,\varepsilon}(s):=\sup_{s':\|s-s'\|_F\le \varepsilon}r^{-2}\big|\,\log\pi(rs)-\log\pi(rs')\,\big|.
    \]
    If, for some $\varepsilon>1-\rho^{1/2}$,
    \[
    \inf_{s:\|s\|_F=1}\liminf_{r\rightarrow\infty}C_{r,\varepsilon}(s)<\tfrac{1-\rho}{2}\lambda^{-2},
    \]
    then the pCN kernel with target $\Pi$ is not geometrically ergodic. 
    \item[(ii)]
    Consider $\tilde{\Pi}(\dif S)=\tilde{\pi}(S)\mu(\dif S)$. 
    Let  $\tilde{\lambda}$ be the largest amongst all eigenvalues of $U$ and $V$
    (recall\, $\U $ is the matrix appearing in the definition of the upcasted target $\Pi$). 
    Set,
    \[
    C_{r,\varepsilon}(S):=\sup_{S':\mathsf{d}(S,S')<\varepsilon}r^{-2}
    \big|\,\log\tilde{\pi}(r^2S)-\log\tilde{\pi}(r^2S')\,\big|. 
    \]
    If, for some $\varepsilon>q^{1/2}2\log(1-\rho^{1/2})$,
    \[
    \inf_{S\in P^+(q), \tr(S)=1}\liminf_{r\rightarrow\infty}
    C_{r,\varepsilon}(S)<\tfrac{1-\rho}{2}\tilde{\lambda}^{-2},
    \]
    then the pCN chain with target $\Pi$ is not geometrically ergodic.
\end{itemize}
\end{prop}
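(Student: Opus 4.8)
The plan is to argue by contradiction: assume the pCN kernel $P$ with the stated target is geometrically ergodic, and derive a violation of a geometric drift condition far out in state space. Under the tacit regularity that makes pCN a $\psi$-irreducible, aperiodic chain for which compact sets are petite --- continuity and positivity of the relevant Lebesgue density, which give the proposal a jointly continuous accepted-transition density --- geometric ergodicity is equivalent to the existence of a compact (hence petite) set $C$, constants $\lambda_0\in(0,1)$, $b<\infty$, and a function $W\ge 1$, finite everywhere, with $PW\le\lambda_0 W+b\,\mathbb{1}_C$ \citep[see][]{MR1996270}. Since $PW(x)\ge R(x)\,W(x)$ where $R(x)=1-\int Q(x,\dif y)\,\alpha(x,y)$ is the holding probability and $W\ge 1>0$, and since this drift inequality is pointwise, it suffices to produce a single sequence $x_k$ with $\|x_k\|_F\to\infty$ (so that $x_k\notin C$ for $k$ large) and holding probability $R(x_k)>\lambda_0$.

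For part (i), extract from the hypothesis a unit vector $s^\ast\in M(p,q)$, a number $\eta>0$, and a sequence $r_k\to\infty$ with $C_{r_k,\varepsilon}(s^\ast)<\tfrac{1-\rho}{2}\lambda^{-2}-\eta$ for all $k$, and put $x_k=r_k s^\ast$. For the pCN move $y=\rho^{1/2}x_k+(1-\rho)^{1/2}w$, $w\sim N_{p,q}(0,U,V)$, the acceptance ratio is $\pi(y)\phi(x_k)\,/\,\pi(x_k)\phi(y)$ with $\phi=\phi_{p,q}(\,\cdot\,;0,U,V)$. On $\{\|w\|_F\le\delta\}$, for any fixed $\delta$ and $k$ large, $s':=y/r_k$ satisfies $\|s'-s^\ast\|_F\le(1-\rho^{1/2})+\delta(1-\rho)^{1/2}/r_k<\varepsilon$ --- this is precisely where the assumption $\varepsilon>1-\rho^{1/2}$ enters --- so $\log\pi(y)-\log\pi(x_k)\le C_{r_k,\varepsilon}(s^\ast)\,r_k^2$; meanwhile $\tr[V^{-1}s^{\ast\top}U^{-1}s^\ast]\ge\lambda^{-2}$ because $U^{-1},V^{-1}\succeq\lambda^{-1}I$, so computing the Gaussian exponents gives $\log\phi(x_k)-\log\phi(y)\le-\tfrac{1-\rho}{2}\lambda^{-2}r_k^2+O(r_k)$. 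Adding, the acceptance ratio is $\le\exp(-\eta r_k^2+O(r_k))$, whence
\[
R(x_k)\ \ge\ \mathbb{P}\!\big(\|w\|_F\le\delta\big)\big(1-e^{-\eta r_k^2+O(r_k)}\big).
\]
Choosing $\delta$ so large that $\mathbb{P}(\|w\|_F\le\delta)>\lambda_0$ (possible as $\lambda_0<1$) yields $R(x_k)>\lambda_0$ for all large $k$, the required contradiction.

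Part (ii) repeats this through the upcasting map. The Lebesgue density of $\Pi(\dif x)=\tilde\pi(x^\top\U^{-1}x)\,\nu_{\,\U}(\dif x)$ is proportional to $\tilde\pi(x^\top\U^{-1}x)\,\det(x^\top\U^{-1}x)^{-p/2}$, so $\log\pi$ is replaced by $\log\tilde\pi(x^\top\U^{-1}x)-\tfrac p2\log\det(x^\top\U^{-1}x)$. Parametrising far-out states by $x^\top\U^{-1}x=r^2S$, $\tr(S)=1$ (so $r$ comparable to $\|x\|_F$), one checks that the $\det(\,\cdot\,)^{-p/2}$ prefactor contributes only $O(\log r)=o(r^2)$ to any variation and is negligible; that the pCN contraction $y=\rho^{1/2}x+(1-\rho)^{1/2}w$ induces, up to an $O(1/r)$ noise perturbation, the map $S\mapsto\rho S$ on $P^+(q)$, whose $\mathsf d$-displacement is accommodated by the stated threshold on $\varepsilon$, giving $\log\tilde\pi(r^2S)-\log\tilde\pi(r^2\tilde S)\le C_{r,\varepsilon}(S)\,r^2$ for the image $\tilde S$ of a small-noise proposal; and that the Gaussian-reference exponent decays along the contraction by an amount matching $\tfrac{1-\rho}{2}\tilde\lambda^{-2}$ per unit increment of $r^2$, by the same eigenvalue bound as before. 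Picking $S$, $\tr(S)=1$, with $\liminf_r C_{r,\varepsilon}(S)<\tfrac{1-\rho}{2}\tilde\lambda^{-2}$ and $r_k\to\infty$ realising the liminf, the acceptance ratio from the associated $x_k$ is again $\exp(-\eta r_k^2+O(r_k))$, the holding probability exceeds $\lambda_0$, and the drift inequality is again contradicted.

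I expect the acceptance-ratio estimate to be the crux: one must verify that the quadratic-in-$r$ decay of pCN's Gaussian reference-measure factor under the contraction $x\mapsto\rho^{1/2}x$ strictly outweighs the variation of $\log\pi$ (resp.\ $\log\tilde\pi$) measured by $C_{r,\varepsilon}$ --- possible precisely because pCN's reference is Gaussian while the target is comparatively heavy-tailed --- and it is this competition that pins down the constants $\tfrac{1-\rho}{2}\lambda^{-2}$, $\tfrac{1-\rho}{2}\tilde\lambda^{-2}$ and the lower thresholds on $\varepsilon$. Two further points need attention: in part (ii), translating between the intrinsic $\mathsf d$-geometry on $P^+(q)$ and the Frobenius geometry on $M(p,q)$ on which pCN actually runs, while tracking the Jacobian $\det(x^\top\U^{-1}x)^{-p/2}$; and, on the Markov-chain side, the reduction of geometric ergodicity to a geometric drift toward a compact set, which relies on the $\psi$-irreducibility and regularity (petiteness of compact sets) of pCN --- equivalently, one may work with the uniform exponential bound on return times to a compact set and reach the contradiction via the geometric sojourn time at the far-out states.
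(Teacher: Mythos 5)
Your proposal reaches the result via the same core computation the paper uses: along a sequence $x_k=r_k s^*$ with $\|x_k\|_F\to\infty$, the acceptance ratio decays like $\exp(-\eta r_k^2 + O(r_k))$, because the quadratic contraction of the Gaussian reference under $x\mapsto\rho^{1/2}x$ costs $\tfrac{1-\rho}{2}\lambda^{-2}r_k^2$ (via $U^{-1},V^{-1}\succeq\lambda^{-1}I$) and this strictly dominates the $C_{r_k,\varepsilon}(s^*)\,r_k^2$ fluctuation of $\log\pi$ under the hypothesis, while $\varepsilon>1-\rho^{1/2}$ ensures that every pCN proposal with bounded noise stays inside the window where $C_{r,\varepsilon}$ controls the variation. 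The eigenvalue bound, the treatment of the $O(1)$ contribution from the $\det(\cdot)^{-p/2}$ Jacobian in part (ii), and the passage to the $\mathsf{d}$-metric are all as in the paper. Where you diverge is only at the lead-in step: the paper invokes Proposition 5.1 of Roberts and Tweedie (together with continuity of $x\mapsto P(x,\{x\})$) to conclude that $\sup_x P(x,\{x\})=1$ already forbids geometric ergodicity, and then shows $P(x_n,\{x_n\})\to 1$ by dominated convergence; you instead rederive this criterion by a Foster--Lyapunov contradiction, using $PW(x)\ge R(x)W(x)$ against $PW\le\lambda_0 W+b\,\mathbb{1}_C$. That rederivation is sound under the $T$-chain regularity you flag, but as written the assertion that geometric ergodicity yields a drift towards a \emph{compact} (rather than merely petite) set is exactly the nontrivial upgrade which the citation is meant to supply; stating it directly, as the paper does, avoids re-proving it. Apart from this justification of the launching criterion, the two arguments are essentially identical, and the same constants $\tfrac{1-\rho}{2}\lambda^{-2}$, $1-\rho^{1/2}$ emerge in each.
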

See Appendix \ref{app:strange} for the proof. 
We state an immediate consequence of Proposition \ref{prop:strange} for probability measures with regularly varying densities on $P^+(q)$. 

\begin{cor}\label{cor:pcn-regular}
Consider the target $\tilde{\Pi}(\dif S)=\tilde{\pi}(S)\mu(\dif S)$ on $P^+(q)$. 
If $\tilde{\pi}$ is continuous, regularly varying, then the pCN kernel with the corresponding upcasted target $\Pi(\dif x)$ on $M(p,q)$ is not geometrically ergodic.
\end{cor}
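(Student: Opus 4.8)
The plan is to deduce the statement directly from Proposition~\ref{prop:strange}(ii). The mechanism is that a regularly varying density has only an $O(1)$ additive fluctuation of its logarithm under the (multiplicative) perturbations measured by the metric $\mathsf{d}$ of (\ref{eq:metric}), whereas the quantity $C_{r,\varepsilon}(S)$ appearing in Proposition~\ref{prop:strange}(ii) carries the prefactor $r^{-2}$. Hence $\liminf_{r\to\infty}C_{r,\varepsilon}(S)$ will in fact be $0$, which is strictly below the positive threshold $\tfrac{1-\rho}{2}\tilde\lambda^{-2}$, and Proposition~\ref{prop:strange}(ii) then delivers the non-geometric-ergodicity of pCN targeting the upcasted law $\Pi$.

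In detail, first I would note that the constraint on $\varepsilon$ in Proposition~\ref{prop:strange}(ii) is harmless: since $\rho\in[0,1)$ we have $1-\rho^{1/2}\in(0,1]$, so $q^{1/2}2\log(1-\rho^{1/2})\le 0$, and any fixed $\varepsilon>0$ (say $\varepsilon=1$) satisfies $\varepsilon>q^{1/2}2\log(1-\rho^{1/2})$. Next, I would pick the distinguished point $S=q^{-1}I_q$, which has $\tr(S)=1$, and reduce the proof to showing
\[
\liminf_{r\to\infty}C_{r,\varepsilon}(q^{-1}I_q)=0 .
\]
Once this is established, $\inf_{S\in P^+(q),\,\tr(S)=1}\liminf_{r\to\infty}C_{r,\varepsilon}(S)=0<\tfrac{1-\rho}{2}\tilde\lambda^{-2}$ (the right-hand side being positive because $\rho<1$ and $\tilde\lambda>0$), and Proposition~\ref{prop:strange}(ii) applies verbatim.

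To obtain the displayed limit, I would write, for $S'$ with $\mathsf{d}(q^{-1}I_q,S')<\varepsilon$,
\[
\log\tilde\pi(r^2 q^{-1}I_q)-\log\tilde\pi(r^2 S')=\big[\log\tilde\pi(r^2 q^{-1}I_q)-\log\tilde\pi(r^2 I_q)\big]-\big[\log\tilde\pi(r^2 S')-\log\tilde\pi(r^2 I_q)\big],
\]
and invoke Definition~\ref{def:RV} with $z=r^2$: regular variation of $\tilde\pi$ (with index $\varrho>q-1$) gives $\log\tilde\pi(r^2 x)-\log\tilde\pi(r^2 I_q)\to-\tfrac{\varrho}{2}\log\det x$ as $r\to\infty$, uniformly for $x$ in any $\mathsf{d}$-compact set. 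The closed $\mathsf{d}$-ball $\overline{B}_{\mathsf{d}}(q^{-1}I_q,\varepsilon)$ is such a set: from the eigenvalue description in (\ref{eq:metric}), $\mathsf{d}(q^{-1}I_q,S')<\varepsilon$ forces $e^{-\varepsilon}I_q\preceq qS'\preceq e^{\varepsilon}I_q$, a compact subset of $P^+(q)$ (equivalently, this is Hopf--Rinow for the Hadamard space $(P^+(q),\mathsf{d})$). Therefore $\sup_{S':\,\mathsf{d}(q^{-1}I_q,S')<\varepsilon}\big|\log\tilde\pi(r^2 q^{-1}I_q)-\log\tilde\pi(r^2 S')\big|$ converges as $r\to\infty$ to the finite constant $\sup_{S'}\tfrac{\varrho}{2}\big|\log\det S'-\log\det(q^{-1}I_q)\big|$; in particular it stays bounded in $r$, and multiplying by $r^{-2}$ yields $C_{r,\varepsilon}(q^{-1}I_q)\to 0$. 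The only genuinely technical ingredient is the compactness of the closed $\mathsf{d}$-ball, which is what upgrades the ``locally uniform'' convergence of Definition~\ref{def:RV} to uniform control over the perturbations $S'$ entering $C_{r,\varepsilon}$; beyond that there is no real obstacle, since Proposition~\ref{prop:strange}(ii) only asks for a log-fluctuation of size $o(r^2)$ while regular variation supplies one of size $O(1)$.
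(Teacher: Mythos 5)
Your proof is correct and takes essentially the same approach as the paper's: both invoke Proposition~\ref{prop:strange}(ii) and use the regular variation property of $\tilde\pi$ to show that $\sup_{S':\mathsf{d}(S,S')<\varepsilon}\big|\log\tilde\pi(r^2 S)-\log\tilde\pi(r^2 S')\big|$ stays $O(1)$ as $r\to\infty$, hence $C_{r,\varepsilon}(S)\to 0$ and the threshold condition is met. Your version is marginally more explicit --- you fix $S=q^{-1}I_q$, check that the $\varepsilon$-constraint is automatic, and spell out the compactness of the closed $\mathsf{d}$-ball that upgrades ``locally uniform'' to uniform control --- and you avoid the paper's notational collision between the regular-variation index and the limiting variable $r$, but the underlying argument is identical.
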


\begin{proof}
By the regularly varying property, for some $r\in\RR$, $r>q-1$,
\begin{align*}
\sup_{S':\mathsf{d}(S,S')<\varepsilon}\big|\,\log\tilde{\pi}(rS)-\log\tilde{\pi}(r S')\,\big|
&=
\sup_{S':\mathsf{d}(S,S')<\varepsilon}\Big|\,\log\frac{\tilde{\pi}(rS)/\tilde{\pi}(rI_q)}{\tilde{\pi}(r S')/\tilde{\pi}(rI_q)}\,\Big|\\
&\quad \longrightarrow_{r\rightarrow\infty}
\sup_{S':\mathsf{d}(S,S')<\varepsilon}\Big|\,\log\frac{(\det S)^{-r/2}}{(\det S')^{-r/2}}\,\Big|<\infty. 
\end{align*}
In particular, 
\begin{align*}
\sup_{S':\mathsf{d}(S,S')<\varepsilon}r^{-1}|\,\log\tilde{\pi}(rS)-\log\tilde{\pi}(r S')\,|\longrightarrow_{r\rightarrow\infty} 0. 
\end{align*}
Thus, the pCN kernel is not geometrically ergodic. 
\end{proof}


\subsubsection{Example Cases}
We check geometric ergodicity for pCN for the standard targets in Section \ref{subsec:pm}. We only state negative results as we did not have positive results in this paper. 
\begin{enumerate}
\item Let $\pi(x)$ be the density function of $N_{p,q}(M, \Sigma, \mathrm{T})$. Let \revision{$\lambda$ and} $\lambda_{\star}$ be the \revision{largest and} smallest amongst all eigenvalues of $\Sigma$ and $\mathrm{T}$.  Then,
\begin{align*}
    C_{r,\varepsilon}(s)
    &\rightarrow_{r\rightarrow\infty}\sup_{s':\|s-s'\|_F<\varepsilon}
    \left|\,\tr\left[\,\mathrm{T}^{-1}s^{\top}\Sigma^{-1}s-\mathrm{T}^{-1}s'^{\top}\Sigma^{-1}s'\,\right]/2\,\right|\\
    &=\sup_{s':\|s-s'\|_F<\varepsilon}
    \left|\,\left\langle (s-s')\mathrm{T}^{-1},\Sigma^{-1}(s+s')\right\rangle_F/2\,\right|\\
    &\le \lambda_{\star}^{-2}\sup_{s':\|s-s'\|_F<\varepsilon}\big\{\,\|s-s'\|_F~\|s+s'\|_F/2\,\big\}\\
    &\le \lambda_{\star}^{-2}\varepsilon(\|s\|_F+\varepsilon/2). 
\end{align*}
Thus, if,
$$
\lambda_{\star}^{-2}(1-\rho^{1/2})(1+(1-\rho^{1/2})/2)
< \tfrac{1-\rho}{2}\lambda^{-2}, 
$$
which is simplified to,
$$
\frac{3-\rho^{1/2}}{1+\rho^{1/2}}
< \frac{\lambda_{\star}^2}{\lambda^2},
$$
then pCN is not geometrically ergodic. Therefore,  pCN is not \emph{always} geometrically ergodic for the matrix-normal distribution. 
\item 
Similar calculations yield that, for a Wishart distribution $W_q(r,T)$, we have,
$$
C_{r,\varepsilon}(S)\longrightarrow_{r\rightarrow\infty} \sup_{S':\mathsf{d}(S,S')<\varepsilon}
\left|\,\tr\,[\,T^{-1}(S-S')\,]\,\right|/2. 
$$
Therefore, if $T$ is large enough, then 
the condition in Proposition \ref{prop:strange}(ii) is satisfied, and pCN is not geometrically ergodic. 
    \item pCN is not geometrically ergodic for the upcasted Inverse-Wishart distribution by Corollary \ref{cor:pcn-regular}, since the relevant probability density function is a regularly varying function (see Remark \ref{rem:RV}). 
\end{enumerate}




\section{MpCN Ergodicity Investigation}
\label{sec:mpcn-ergodicity}

Consider a target law $\Pi(\dif x) = \pi(x) \nu_{\U }(\dif x)$.
If $\pi(x)>0$ for any full-rank $x\in M(p,q)$ then MpCN is ergodic. The MpCN algorithm is expected to have good convergence properties even for heavy-tailed target distributions.
We have made a lot of progress towards proving \emph{geometric} ergodicity for MpCN, via use of  Foster--Lyapunov drift criterion \citep{MT} and Dirichlet form. 
The last remaining steps for a fully completed proof remain subject of future research.

\subsection{Our Results}

Our investigation focuses mainly at the setting of Section \ref{sec:mark}, when: the initial target is
$\tilde{\Pi}(\dif S) = \tilde{\pi}(\dif S)\mu(\dif S)$, $S\in P^{+}(q)$;
the upcasted distribution is $\Pi(\dif x ) = \tilde{\pi}(x^{\top}\U ^{-1}x)\nu_{\,\U }(\dif x)$, 
$x\in M(p,q)$; the choice $\U=U$ gives rise to a Markov chain on $P^{+}(q)$ defined via what we have called the MpCN kernel  $\tilde{P}$ on $P^+(q)$, with target $\tilde{\Pi}$ and proposal kernel $\tilde{Q}=\tilde{Q}(\,\cdot\,;\rho,p)$ both of which do not depend on $U$ -- see also the comment after Theorem \ref{th:Markov}. 
%
%
%
%
%
%
%
%
%
To prove geometric ergodicity, it suffices to show, that 
for drift function $\mathsf{V}:P^{+}(q)\rightarrow \mathbb{R}_+$,
\begin{gather}
    \limsup_{\tr(S)\rightarrow\infty}\tfrac{\tilde{P}\mathsf{V}(S)-\mathsf{V}(S)}{\mathsf{V}(S)}<0;
    \label{eq:drift1}
    \\
    \limsup_{\det(S)\rightarrow 0}\tfrac{\tilde{P}\mathsf{V}(S)-\mathsf{V}(S)}{\mathsf{V}(S)}<0.
    \label{eq:drift2}
\end{gather}
\revision{We provide here a brief explanation for the requirement to consider the limits $\tr(S)\rightarrow \infty$, $\det(S)\rightarrow 0$, and not the ones $\|S\|_F\rightarrow \infty$, $\|S\|_F\rightarrow 0$ that might appear as more natural candidates. Since $S\mapsto\det(S)$, $S\mapsto \tr(S)$ are continuous in the metric $\mathsf{d}(A, B)$ in (\ref{eq:metric}),  $\{S\in P^+(q):\epsilon\le \det(S),\ \tr(S)\le \epsilon^{-1}\}$  is a compact set for $\epsilon\in (0,1)$. On the other hand, every compact set is a small set for $\tilde{P}$. Thus, (\ref{eq:drift1}) and (\ref{eq:drift2}) give that $(\tilde{P}\mathsf{V}-\mathsf{V})/\mathsf{V}$ is smaller than a negative constant outside a small set, implying that the geometric drift condition is satisfied. In contrast, $\{S\in P^+(q):\epsilon\le \|S\|_F\le \epsilon^{-1}\}$ is \emph{not} a compact set in the metric $\mathsf{d}(A, B)$. Thus, indeed one needs to work with the limits $\tr(S)\rightarrow \infty$, $\det(S)\rightarrow 0$, and not with $\|S\|_F\rightarrow \infty$, $\|S\|_F\rightarrow 0$. }
We prove (\ref{eq:drift1}), for an appropriate drift function.
%
We start with a definition (for matrices $S, T\in P^+(q)$,
we write $S<T$ if  $T-S\in P^+(q)$).


\begin{defi}
\label{def:RAV}
\begin{itemize}
    \item[(i)] 
	For $T\in P^+(p)$, $h:M(p,q)\rightarrow\mathbb{R}$ is rapidly varying if, for full rank $x, y\in M(p,q)$,  and 
	$z\in P^+(q)$,
	$$
	\frac{h(yz)}{h(xz)}\longrightarrow_{\tr(z)\rightarrow\infty}\ 
	\left\{
	\begin{array}{ll}
	\infty,&\mathrm{if}\ x^{\top}T^{-1}x>y^{\top}T^{-1}y;\\
	0,&\mathrm{if}\ x^{\top}T^{-1}x<y^{\top}T^{-1}y.	
	\end{array}
	\right.
	$$
	\item[(ii)] For target distribution $\tilde{\Pi}(\dif S)=\tilde{\pi}(S)\mu(\dif S)$ we will call $\tilde{\pi}$ rapidly varying when 
	$\pi(x)=\tilde{\pi}(x^{\top}U^{-1}x)$ is rapidly varying.
	\end{itemize}
\end{defi}
\noindent 
The above is a natural extension of rapid variation for scalar-valued functions; see, e.g., Section 2.4 of \cite{opac-b1134644} for details on the scalar case.
Rapid variation is relevant for several light-tailed distributions. The Lebesgue density of Matrix-Normal distribution  is rapidly varying, as is the density $\tilde{\pi}=\tilde{\pi}(S)$  
of the Wishart distribution in (\ref{eq:WD}).


\begin{prop}
\label{prop:drift}
If $\tilde{\pi}(S)$ is strictly positive, continuous, rapidly varying function, then (\ref{eq:drift1}) holds for 
drift function $\mathsf{V}(S)=\tilde{\pi}(S)^{-\alpha}$, with any $\alpha\in(0,1)$. 
\end{prop}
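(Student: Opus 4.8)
The plan is to exploit the Random-Walk representation of Theorem~\ref{th:Markov}: writing $S' = \epsilon \circ S$ for the proposal, with $\epsilon$ distributed according to $\mathcal{L}(\dif\epsilon)$ independently of $S$, and with acceptance probability $\alpha(S,S') = \min\{1, \tilde\pi(S')/\tilde\pi(S)\}$, we have
\begin{align*}
\frac{\tilde{P}\mathsf{V}(S) - \mathsf{V}(S)}{\mathsf{V}(S)}
= \mathbb{E}\!\left[\Big(\tfrac{\mathsf{V}(\epsilon\circ S)}{\mathsf{V}(S)} - 1\Big)\,\alpha(S,\epsilon\circ S)\right],
\end{align*}
where the expectation is over $\epsilon\sim\mathcal{L}$. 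With $\mathsf{V}(S) = \tilde\pi(S)^{-\alpha}$, the ratio $\mathsf{V}(\epsilon\circ S)/\mathsf{V}(S) = (\tilde\pi(S)/\tilde\pi(\epsilon\circ S))^{\alpha}$ and $\alpha(S,\epsilon\circ S) = \min\{1,\tilde\pi(\epsilon\circ S)/\tilde\pi(S)\}$ are both governed by the single quantity $\tilde\pi(\epsilon\circ S)/\tilde\pi(S)$. First I would write $\epsilon\circ S = S^{1/2}\epsilon S^{1/2}$, and use rapid variation of $\pi(x) = \tilde\pi(x^\top U^{-1}x)$ to control $\tilde\pi(\epsilon\circ S)/\tilde\pi(S)$ as $\tr(S)\to\infty$: the comparison $x^\top U^{-1}x$ versus $y^\top U^{-1}y$ in Definition~\ref{def:RAV} translates, after an appropriate upcasting of $S$ and $\epsilon\circ S$ to $M(p,q)$ via the maps of Section~\ref{subsec:upcast}, into comparing $S$ with $\epsilon\circ S$ in the positive-definite order, i.e.\ into whether $\epsilon < I_q$ or $\epsilon > I_q$ (since $\epsilon\circ S < S \iff \epsilon < I_q$, conjugating by $S^{-1/2}$).

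The key structural point is the dichotomy provided by $\epsilon =_d \epsilon^{-1}$ from Proposition~\ref{prop:random_walk}(iii). Partition the probability space according to whether $\epsilon$ is ``small'' (roughly $\epsilon < I_q$, or more precisely lying in a set where $\tilde\pi(\epsilon\circ S)/\tilde\pi(S)\to\infty$) or ``large'' (the complementary event where the ratio $\to 0$), plus a boundary event. On the ``large'' event, $\tilde\pi(\epsilon\circ S)/\tilde\pi(S) \to 0$, so $\alpha \to 0$ and the integrand $\big((\tilde\pi(S)/\tilde\pi(\epsilon\circ S))^\alpha - 1\big)\alpha$ — being a product of a blowing-up term with a vanishing acceptance — needs care, but since $t^\alpha \cdot \min\{1, t^{-1}\}\cdot(\text{sign corrections})$: concretely for $t = \tilde\pi(\epsilon\circ S)/\tilde\pi(S)\to 0$ the summand is $(t^{-\alpha}-1)\cdot t = t^{1-\alpha} - t \to 0$ since $\alpha < 1$. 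On the ``small'' event, $t\to\infty$, so $\alpha\to 1$ and the summand is $(t^{-\alpha}-1)\cdot 1 = t^{-\alpha} - 1 \to -1$. Hence, pointwise in $\epsilon$ (off the boundary), the integrand converges to $-\mathbb{1}\{\epsilon \text{ small}\}$, and by symmetry $\mathcal{L}(\{\epsilon\text{ small}\})$ has positive mass — in fact by $\epsilon=_d\epsilon^{-1}$ the ``small'' and ``large'' events have equal probability, so the limit is $-\mathcal{L}(\{\epsilon\text{ small}\}) \le -\tfrac12(1 - \mathcal{L}(\{\text{boundary}\})) < 0$, provided the boundary event $\{\epsilon : \epsilon \text{ neither} < I_q \text{ nor} > I_q \text{ in the relevant sense}\}$ has mass strictly less than one. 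I would then invoke dominated convergence (the integrand is bounded by $\max\{1, t^{1-\alpha}\}$ type bounds that are uniformly controlled, using continuity and strict positivity of $\tilde\pi$ on the compact directional slice $\tr(S) = 1$) to pass the limit $\tr(S)\to\infty$ inside the expectation.

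The main obstacle I anticipate is making the ``pointwise in $\epsilon$'' convergence uniform enough — and the domination integrable enough — to legitimately apply dominated convergence while simultaneously taking $\tr(S)\to\infty$ along \emph{all} directions $S/\tr(S)$. Rapid variation as stated in Definition~\ref{def:RAV} is a pointwise statement in $x,y$; upgrading it to something locally uniform in the direction of $S$ (so that $\limsup_{\tr(S)\to\infty}$ can be bounded by a single negative constant rather than a direction-dependent quantity) will require either an additional mild uniformity hypothesis baked into the notion of rapid variation, or a compactness argument on the unit-trace slice $\{S\in P^+(q):\tr(S)=1\}$ combined with continuity of $\tilde\pi$. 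A secondary delicate point is the boundary event: one must argue that $\mathcal{L}$ does not concentrate on $\{\epsilon : S^{1/2}\epsilon S^{1/2}$ and $S$ are order-incomparable or yield an inconclusive rapid-variation limit$\}$; here the explicit density (\ref{eq:gamma}) for $\mathcal{L}$, which is mutually absolutely continuous with $\mu$, shows $\mathcal{L}$ charges $\{\epsilon < I_q\}$ (an open set of positive $\mu$-measure) with positive probability, so the limit is genuinely bounded away from $0$. Assembling these pieces — rapid-variation control, the $\epsilon=_d\epsilon^{-1}$ symmetry, and dominated convergence — yields (\ref{eq:drift1}) with $\mathsf{V} = \tilde\pi^{-\alpha}$ for every $\alpha\in(0,1)$.
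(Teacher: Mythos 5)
Your plan follows essentially the same route as the paper's Appendix~B proof: exploit the Random-Walk representation $S'=\epsilon\circ S$ from Theorem~\ref{th:Markov} and Proposition~\ref{prop:random_walk}, write $(\tilde{P}\mathsf{V}-\mathsf{V})/\mathsf{V}$ as an expectation over $\epsilon$, note that $(t^{-\alpha}-1)\min\{1,t\}$ (with $t=\tilde\pi(\epsilon\circ S)/\tilde\pi(S)$) is bounded and tends to $0$ when $t\to 0$ (using $\alpha<1$) and to $-1$ when $t\to\infty$, and conclude via dominated convergence that the limsup is strictly negative because $\mathcal{L}$ puts positive mass where the proposal moves inward. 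The paper organizes this slightly differently — it splits the integrand as $t^{-\alpha}\min\{1,t\}-\min\{1,t\}$, shows the first term tends to~$0$, and lower-bounds the liminf of the second by $\mathbb{P}[I_q-\epsilon\in P^+(q)]>0$ via absolute continuity of $\mathcal{L}$ w.r.t.\ $\mu$ — and it does not invoke the $\epsilon=_d\epsilon^{-1}$ symmetry at all; that symmetry is used in the proof of the separate drift condition (\ref{eq:drift2}), not of (\ref{eq:drift1}). These are cosmetic differences, not a different proof.

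The one place where your plan has a genuine hole — which you yourself correctly flag — is the ``boundary event.'' Definition~\ref{def:RAV} gives a conclusion only when $\epsilon\circ S$ and $S$ are comparable in the positive-definite order, i.e.\ when $\epsilon<I_q$ or $\epsilon>I_q$. The set of $\epsilon$ with eigenvalues on both sides of~$1$ has positive $\mu$-measure (hence positive $\mathcal{L}$-measure), so the pointwise convergence you need for dominated convergence is not a consequence of rapid variation alone, and the estimate $-\mathcal{L}(\{\epsilon\ \text{small}\})\le-\tfrac12(1-\mathcal{L}(\{\text{boundary}\}))$ does not by itself control the contribution of the integrand on the boundary set. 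The paper's way out is Lemma~\ref{lem:probability_fraction}: it shows that for $\mathcal{L}$-almost every $\epsilon$ — including order-incomparable ones — $|\log(\tilde\pi(\epsilon\circ S)/\tilde\pi(S))|\to\infty$ as $\tr(S)\to\infty$. The argument restricts $\epsilon$ to a compact eigenvalue range (tightness of $\mathcal{L}$), passes to the eigen-decomposition $\epsilon=\mathsf{U}^\top\Lambda\mathsf{U}$, and observes that shifting the smallest eigenvalue $\lambda_1$ produces an order-comparable pair of matrices, so rapid variation forces the set $\{\lambda_1: |\log t|\le C\}$ to have vanishing Lebesgue measure. This eigenvalue-shift device is the missing ingredient you would need to turn your plan into a complete proof; without it, dominated convergence cannot be applied on the full space.
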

\begin{proof}
The proof is given in Appendix \ref{app:drift}.
\end{proof}
\noindent It remains to establish (\ref{eq:drift2}), where
$\det(S)\rightarrow 0$, 
or equivalently, $S\rightarrow S_0$,
with $S_0$ degenerate, positive semi-definite, symmetric matrix. 
We prove the drift inequality for the special case $S_0=0$; 
this also provides a proof of (\ref{eq:drift2}) in the trivial scalar case $q=1$.

\begin{prop}
Assume that $\tilde{\pi}$ is strictly positive, continuous. 
Suppose that for some $\zeta\neq 0$, for any $\epsilon\in P^+(q)$, 
\[
\frac{\tilde{\pi}(\epsilon\circ S)}{\tilde{\pi}(S)}\longrightarrow_{\tr (S)\rightarrow 0} \det(\epsilon)^{\zeta/2}. 
\]
For $\mathsf{V}(S)=\tilde{\pi}(S)^{-\alpha}$, $\alpha>0$, the MpCN kernel $\tilde{P}$ on $P^{+}(q)$, with target $\tilde{\Pi}$, satisfies,
\[
\limsup_{\tr (S)\rightarrow 0}\frac{\tilde{P}\mathsf{V}(S)-\mathsf{V}(S)}{\mathsf{V}(S)}<0. 
\]
\end{prop}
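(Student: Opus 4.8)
The plan is to exploit the Random-Walk structure of the MpCN kernel $\tilde{P}$ on $P^+(q)$, exactly as in the proof of Proposition \ref{prop:drift}. Recall from Theorem \ref{th:Markov} and Proposition \ref{prop:random_walk} that the proposal is $\tilde{Q}(S,\cdot) =_d \epsilon \circ S$ with $\epsilon$ having $U$-independent law $\mathcal{L}(\dif\epsilon)$ satisfying $\mathcal{L}=\mathcal{L}_{-1}$, and the acceptance probability is $\alpha(S,\epsilon\circ S) = \min\{1, \tilde{\pi}(\epsilon\circ S)/\tilde{\pi}(S)\}$ (here using $\mu$-reversibility of $\tilde{Q}$ and that $\mathsf{d}\mu$ is the reference measure). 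So with $\mathsf{V}(S)=\tilde{\pi}(S)^{-\alpha}$ I would write
\begin{align*}
\frac{\tilde{P}\mathsf{V}(S)-\mathsf{V}(S)}{\mathsf{V}(S)}
= \int \Big[\Big(\tfrac{\tilde{\pi}(S)}{\tilde{\pi}(\epsilon\circ S)}\Big)^{\alpha}\min\Big\{1,\tfrac{\tilde{\pi}(\epsilon\circ S)}{\tilde{\pi}(S)}\Big\} + \Big(1-\min\Big\{1,\tfrac{\tilde{\pi}(\epsilon\circ S)}{\tilde{\pi}(S)}\Big\}\Big) - 1\Big]\,\mathcal{L}(\dif\epsilon),
\end{align*}
and set $r_S(\epsilon):=\tilde{\pi}(\epsilon\circ S)/\tilde{\pi}(S)$, so the integrand is $g(r_S(\epsilon))$ where $g(r) = r^{-\alpha}\min\{1,r\} - \min\{1,r\} = \min\{1,r\}(r^{-\alpha}-1)$. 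One checks $g(r)\le 0$ for all $r>0$, with equality only at $r=1$, and $g$ is bounded on $(0,\infty)$ for $\alpha\in(0,1)$; also $g(r)\to -1$ as $r\to 0$ and $g(r)\to 0$ as $r\to\infty$ — this is precisely the place the restriction on $\alpha$ matters.

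Next I would pass to the limit $\tr(S)\to 0$ (equivalently $S\to 0$). By the hypothesis, $r_S(\epsilon)\to \det(\epsilon)^{\zeta/2}=:r_\infty(\epsilon)$ pointwise in $\epsilon$, and since $g$ is bounded and continuous except possibly needing care at $r=0$ (where $g$ extends continuously with value $-1$), the dominated convergence theorem gives
\[
\limsup_{\tr(S)\to 0}\frac{\tilde{P}\mathsf{V}(S)-\mathsf{V}(S)}{\mathsf{V}(S)} \le \int g\big(\det(\epsilon)^{\zeta/2}\big)\,\mathcal{L}(\dif\epsilon).
\]
The right-hand side is $\le 0$ since $g\le 0$, and I claim it is strictly negative: $g(\det(\epsilon)^{\zeta/2})=0$ forces $\det(\epsilon)^{\zeta/2}=1$, i.e. $\det(\epsilon)=1$ (using $\zeta\neq 0$), which is a $\mathcal{L}$-null event because $\mathcal{L}(\dif\epsilon)$ has a density with respect to $\mu(\dif\epsilon)$ by Proposition \ref{prop:random_walk}(ii) and $\{\det(\epsilon)=1\}$ has $\mu$-measure zero. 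Hence the $\limsup$ is bounded by a strictly negative constant, which is the claim.

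The main obstacle I anticipate is justifying the interchange of limit and integral rigorously: one must verify that the convergence $r_S(\epsilon)\to r_\infty(\epsilon)$ is genuinely dominated in a way that controls $g(r_S(\epsilon))$ uniformly near the ``boundary'' values $r=0$ and $r=\infty$ — in particular that no mass of $\mathcal{L}$ escapes to $\epsilon$ with $\det(\epsilon)=0$ or $\det(\epsilon)=\infty$ in a problematic manner, and that the pointwise hypothesis on $\tilde{\pi}(\epsilon\circ S)/\tilde{\pi}(S)$ can be upgraded to the locally-uniform-in-$\epsilon$ control needed. Since $g$ is globally bounded for $\alpha\in(0,1)$, plain dominated convergence with the constant dominating function $\sup_r|g(r)|$ suffices for the interchange, so the only real content is the pointwise convergence hypothesis (which is assumed) plus the strict-positivity argument above; the remaining details are routine and would be deferred to an appendix, mirroring the treatment of Proposition \ref{prop:drift}.
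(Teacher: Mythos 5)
Your setup and the identity
\[
\frac{\tilde{P}\mathsf{V}(S)-\mathsf{V}(S)}{\mathsf{V}(S)}=\int g(r_S(\epsilon))\,\mathcal{L}(\dif\epsilon),\qquad g(r)=\min\{1,r\}\big(r^{-\alpha}-1\big),
\]
agree with the paper, and the use of dominated convergence to pass to the limiting integrand $g\big(\det(\epsilon)^{\zeta/2}\big)$ is also the paper's first step. However, the claim that ``$g(r)\le 0$ for all $r>0$'' is false, and this sinks the rest of the argument. For $r<1$ one has $\min\{1,r\}=r>0$ and $r^{-\alpha}-1>0$, so $g(r)=r^{1-\alpha}-r>0$: the rejection region $r<1$ actually contributes a \emph{positive} amount to the drift. (Relatedly, $g(r)\to 0$, not $-1$, as $r\to 0^+$ for $\alpha\in(0,1)$.) So the conclusion ``the limit is $\le 0$ because $g\le 0$, and strictly negative off a $\mu$-null set'' does not follow; you must actually show that the negative contribution from $\{r>1\}$ dominates the positive contribution from $\{r<1\}$.

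This is exactly where the paper uses the symmetry $\mathcal{L}=\mathcal{L}_{-1}$ that you mention but never invoke. In the paper's proof, the limiting expectation is split over $\{\det\epsilon\ge 1\}$ and $\{\det\epsilon<1\}$; on the latter piece the change of variables $\epsilon\mapsto\epsilon^{-1}$ (legitimate because $\epsilon=_d\epsilon^{-1}$) carries it onto $\{\det\epsilon>1\}$, after which the two pieces combine into
\[
\mathbb{E}\Big[\big\{\det(\epsilon)^{-\alpha\zeta/2}-1\big\}\big\{1-\det(\epsilon)^{-(1-\alpha)\zeta/2}\big\},\ \det\epsilon\ge 1\Big],
\]
whose integrand is pointwise $\le 0$ for $\alpha\in(0,1)$. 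Strict negativity then follows as you argued, from the absolute continuity of $\mathcal{L}$ with respect to $\mu$ so that $\{\det\epsilon=1\}$ is null. Without the symmetrisation step, the sign of the limiting integral is simply not determined, so the proof needs to incorporate it. (Note also that the symmetrisation is what makes the restriction $\alpha\in(0,1)$ bite: for $\alpha>1$ the combined integrand changes sign, so your statement's ``$\alpha>0$'' should really be read as $\alpha\in(0,1)$, consistent with Proposition \ref{prop:drift}.)
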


\begin{proof}
Recall that the proposal writes as $\epsilon\circ S$ with $\epsilon\sim\mathcal{L}(\dif\epsilon)$. 
By the dominated convergence theorem,
\begin{align*}
\frac{\tilde{P}\mathsf{V}(S)-\mathsf{V}(S)}{\mathsf{V}(S)}
&=\mathbb{E}\,\Big[\,\Big\{\Big(\frac{\tilde{\pi}(\epsilon\circ S)}{\tilde{\pi}(S)}\Big)^{-\alpha}-1\Big\}\min\Big\{1,\frac{\tilde{\pi}(\epsilon\circ S)}{\tilde{\pi}(S)}\Big\}\,\Big]\\
&\qquad \rightarrow 
\mathbb{E}\,\big[\,\big\{\det(\epsilon)^{-\alpha \zeta/2}-1\big\}\min\big\{1,\det(\epsilon)^{\zeta/2}\big\}\,\big]. 
\end{align*}
Since $\epsilon$ and $\epsilon^{-1}$ have the same law, we obtain,
\begin{align*}
\mathbb{E}\,&\big[\,\big\{\det(\epsilon)^{-\alpha \zeta/2}-1\big\}\min\big\{1,\det(\epsilon)^{\zeta/2}\big\}\,\big]\\
&=
\mathbb{E}\,\big[\,\big\{\det(\epsilon)^{-\alpha \zeta/2}-1\big\},\det\epsilon\ge 1\,\big]+
\mathbb{E}\,\big[\,\big\{\det(\epsilon)^{-\alpha \zeta/2}-1\big\}\det(\epsilon)^{\zeta/2},\det\epsilon< 1\,\big]\\
&=
\mathbb{E}\,\big[\,\big\{\det(\epsilon)^{-\alpha \zeta/2}-1\big\},\det\epsilon\ge 1\,\big]+
\mathbb{E}\,\big[\,\big\{\det(\epsilon)^{\alpha \zeta/2}-1\big\}\det(\epsilon)^{-\zeta/2},\det\epsilon> 1\,\big]\\
&=
\mathbb{E}\,\big[\,\big\{\det(\epsilon)^{-\alpha \zeta/2}-1\big\}\big\{1-\det(\epsilon)^{-(1-\alpha)\zeta/2}\big\},\det\epsilon\ge 1\,\big],
\end{align*}
with the last expression being negative. 
\end{proof}
\noindent 
%
\revision{In the proof above, the negativity of the limiting integral follows directly from the symmetry $\mathcal{L}(\epsilon)=\mathcal{L}(\epsilon^{-1})$ when $S\rightarrow 0$. The general scenario $S\rightarrow S_0$ is more complicated. In this case, the limiting integral depends on $S_0$, and the symmetry does not simplify the integral.  Thus, the above proof cannot be applied to the case $S_0\neq 0$.  }

 To stress the effect of algorithmic parameter $\rho$, we write the MpCN kernel on $M(p,q)$ (resp.~$P^{+}(q)$) as $P_{\rho}$ (resp.~$\tilde{P}_{\rho}$) and the corresponding proposal 
as $\tilde{Q}_\rho$. In all cases, $\rho\in [0,1)$. 

\begin{prop}
\label{prop:spetral_gap_quivalence}
\begin{itemize}
\item[(i)] 
If MpCN kernel $P_0$ on $M(p,q)$ has a spectral gap, then so does $P_\rho$ for any $\rho\in[0,1)$.
\item[(ii)] If MpCN kernel $\tilde{P}_0$ on $P^+(q)$ has a spectral gap, then so does $\tilde{P}_\rho$ for any $\rho\in[0,1)$.
\end{itemize}
\end{prop}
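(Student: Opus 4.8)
The plan is to reduce both claims to a single statement about Metropolis--Hastings kernels: if a $\xi$-reversible proposal $Q_0$ is dominated (as a density against the common reference measure) by another $\xi$-reversible proposal $Q_\rho$ up to a positive multiplicative constant, uniformly in $x$ and $y$, then a spectral gap for the Metropolis kernel built from $Q_0$ transfers to the one built from $Q_\rho$. For part~(ii) this is the cleanest: by Theorem~\ref{th:Markov} the kernel $\tilde{P}_\rho$ is a Random--Walk Metropolis kernel on $P^+(q)$ with proposal $\tilde{Q}_\rho(S,\cdot) =_d \epsilon \circ S$, where $\epsilon \sim \mathcal{L}(\cdot;\rho,p)$, all against the common reference measure $\mu$, and with a $\rho$-independent target $\tilde{\Pi}$. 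So I would first show that the density of $\mathcal{L}(\dif\epsilon;\rho,p)$ with respect to $\mathcal{L}(\dif\epsilon;0,p)$ is bounded above and below by positive constants (depending on $\rho,p,q$ but not on $\epsilon$); since $\epsilon \circ S$ is, for fixed $S$, a fixed diffeomorphic reparametrisation, this same two-sided bound passes to the densities of $\tilde{Q}_\rho(S,\cdot)$ against $\tilde{Q}_0(S,\cdot)$, uniformly in $S$.

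The key computation is therefore the ratio of the integrands in (\ref{eq:gamma}). Writing $g_\rho(\epsilon) = \int_{\mathcal{O}(p,q)} \det(I_q+\epsilon-\rho^{1/2}(\mathsf{U}_1\epsilon^{1/2}+\epsilon^{1/2}\mathsf{U}_1^\top))^{-p}\,\dif\mathsf{U}$, the density ratio is $(1-\rho)^{pq/2}\,g_\rho(\epsilon)/g_0(\epsilon)$ where $g_0(\epsilon) = \det(I_q+\epsilon)^{-p}$. I would control $g_\rho(\epsilon)/g_0(\epsilon)$ by bounding the matrix $R:=I_q+\epsilon-\rho^{1/2}(\mathsf{U}_1\epsilon^{1/2}+\epsilon^{1/2}\mathsf{U}_1^\top)$ above and below, in the Loewner order, by fixed positive multiples of $I_q+\epsilon$. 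Since $\mathsf{U}_1^\top\mathsf{U}_1 \le I_q$, one has $\|\mathsf{U}_1 v\| \le \|v\|$, so for any unit vector $v$, $|2\rho^{1/2} v^\top \epsilon^{1/2}\mathsf{U}_1^\top v| \le 2\rho^{1/2}\|\epsilon^{1/2}v\|\,\|v\| \le \rho^{1/2}(v^\top(I_q+\epsilon)v)$ by AM--GM; hence $(1-\rho^{1/2})(I_q+\epsilon) \le R \le (1+\rho^{1/2})(I_q+\epsilon)$. Taking determinants and the $-p$ power gives $(1+\rho^{1/2})^{-pq}\det(I_q+\epsilon)^{-p} \le \det(R)^{-p} \le (1-\rho^{1/2})^{-pq}\det(I_q+\epsilon)^{-p}$, and integrating over $\mathcal{O}(p,q)$ (a probability measure) preserves these bounds. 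Therefore $g_\rho(\epsilon)/g_0(\epsilon) \in [(1+\rho^{1/2})^{-pq},\,(1-\rho^{1/2})^{-pq}]$, giving the desired uniform two-sided bound on the density ratio with constants depending only on $\rho,p,q$.

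With the comparison in hand, the transfer of the spectral gap is a standard Dirichlet-form argument: for a $\xi$-reversible Metropolis kernel $P$ with proposal density $\mathsf{q}$ and acceptance $\alpha(x,y)=\min\{1,\pi(y)/\pi(x)\}$, the Dirichlet form is $\mathcal{E}_P(f) = \tfrac12\int\int (f(y)-f(x))^2 \pi(x)\mathsf{q}(x,y)\alpha(x,y)\,\xi(\dif x)\xi(\dif y)$, and if $c_1\mathsf{q}_0(x,y) \le \mathsf{q}_\rho(x,y) \le c_2\mathsf{q}_0(x,y)$ pointwise (with the \emph{same} $\pi$ and $\xi$, hence the same $\alpha$), then $c_1\mathcal{E}_{P_0}(f) \le \mathcal{E}_{P_\rho}(f) \le c_2\mathcal{E}_{P_0}(f)$ for all $f \in L^2(\Pi)$; since both kernels are reversible with respect to the same $\Pi$, the variances $\operatorname{Var}_\Pi(f)$ coincide, so a positive spectral-gap lower bound $\mathcal{E}_{P_0}(f) \ge \gamma_0 \operatorname{Var}_\Pi(f)$ yields $\mathcal{E}_{P_\rho}(f) \ge c_1\gamma_0 \operatorname{Var}_\Pi(f)$, i.e. $P_\rho$ has a spectral gap at least $c_1\gamma_0$. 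For part~(i) the argument is identical with $M(p,q)$, $\nu_U$, and $\mathsf{q}(x,y)$ from Lemma~\ref{lem:kernel_mpcn} in place of the $P^+(q)$ objects; the density ratio $\mathsf{q}_\rho(x,y)/\mathsf{q}_0(x,y) = (1-\rho)^{pq/2}\det(x^\top U^{-1}x)^{\revision{0}}\cdots$ simplifies to $(1-\rho)^{pq/2}\det(R(x,y))^{-p}/\det(R_0(x,y))^{-p}$ with $R_0(x,y)=x^\top U^{-1}x + y^\top U^{-1}y$, and the same Loewner-order bound $(1-\rho^{1/2})R_0 \le R(x,y) \le (1+\rho^{1/2})R_0$ (using Cauchy--Schwarz on the cross terms $x^\top U^{-1}y$) gives the uniform two-sided bound. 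The main obstacle is purely the matrix-analytic estimate on $R(x,y)$ — verifying that the cross term is dominated, in the Loewner order, by a sub-unit multiple of $x^\top U^{-1}x + y^\top U^{-1}y$; once that is clean, the Dirichlet-form comparison is routine, and in fact the constant $c_1 = (1-\rho)^{pq/2}(1+\rho^{1/2})^{-pq} = (1-\rho^{1/2})^{pq/2}(1+\rho^{1/2})^{-pq/2}$ is explicit.
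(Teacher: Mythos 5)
Your proof is correct, and it rests on the same core mechanism as the paper's: a pointwise comparison of proposal densities $c_1\mathsf{q}_0 \le \mathsf{q}_\rho \le c_2\mathsf{q}_0$ against the common reference measure, fed into a Dirichlet-form comparison, combined with the positivity/self-adjointness of the MpCN kernel (established earlier in the appendix) to turn the Dirichlet-form estimate into a spectral-gap estimate. The differences are in the details, and they are worth noting. For the comparison constant, the paper produces only the one-sided lower bound $\mathsf{q}_\rho \ge 2^{-pq}\mathsf{q}_0$ from the algebraic identity $2R_0(x,y) = R_\rho(x,y) + \big[(1-\rho)x^\top U^{-1}x + (y+\rho^{1/2}x)^\top U^{-1}(y+\rho^{1/2}x)\big]$, whereas you obtain the two-sided Loewner bound $(1-\rho^{1/2})R_0 \preceq R_\rho \preceq (1+\rho^{1/2})R_0$, which follows cleanly from $R_\rho-(1-\rho^{1/2})R_0 = \rho^{1/2}(x-y)^\top U^{-1}(x-y)\succeq 0$ and $(1+\rho^{1/2})R_0-R_\rho = \rho^{1/2}(x+y)^\top U^{-1}(x+y)\succeq 0$; you thereby get an explicit $c_1 = \big(\tfrac{1-\rho^{1/2}}{1+\rho^{1/2}}\big)^{pq/2}$, which is sharper than $2^{-pq}$ for small $\rho$ (though, unlike the paper's constant, it degrades as $\rho\to1$ — immaterial for a fixed $\rho$). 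More substantively, your treatment of part (ii) differs: the paper proves part (i) on $M(p,q)$ and then restricts the Dirichlet-form inequality to the closed subspace $L^2_\star(\Pi)$ of functions of $x^\top U^{-1}x$, isomorphic to $L^2(\tilde{\Pi})$; you instead compare the noise laws $\mathcal{L}(\dif\epsilon;\rho,p)$ and $\mathcal{L}(\dif\epsilon;0,p)$ directly on $P^+(q)$ (the Loewner bound on $I_q+\epsilon-\rho^{1/2}(\mathsf{U}_1\epsilon^{1/2}+\epsilon^{1/2}\mathsf{U}_1^\top)$ is uniform in $\mathsf{U}$, so it survives the $\mathcal{O}(p,q)$ integration), and since $\tilde{P}_\rho$, $\tilde{P}_0$ are both Metropolis chains on $(P^+(q),\mu)$ with the same target and acceptance function (Theorem~\ref{th:Markov}), the Dirichlet-form comparison transfers at the level of $P^+(q)$ directly. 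This avoids the subspace-restriction step and is, if anything, the more self-contained route; the one thing you should state explicitly, rather than leave implicit, is the positivity of $\tilde{P}_\rho$ (inherited from that of $P_\rho$ via restriction to $L^2_\star(\Pi)$, or provable directly by the same Rudolf--Ullrich decomposition), since that is what makes $\inf_{f}\mathcal{E}_{\tilde P_\rho}(f)/\mathrm{Var}_{\tilde\Pi}(f)$ equal to the spectral gap.
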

\begin{proof}
The proof is given in Appendix \ref{app:rho}.
\end{proof}

\begin{rem}
See Appendix \ref{app:rho} for details on the concept of spectral gap.
By Proposition~\ref{prop:spetral_gap_quivalence}, geometric ergodicity of the MpCN kernel 
with target $\Pi$ on $M(p,q)$ (or $\tilde{\Pi}$ on $P^{+}(q)$) and parameter $\rho\in[0,1)$
is implied by that of the MpCN kernel for $\rho=0$.
The result is important, as working with $\tilde{P}_0$ simplifies a lot the involved matrix calculations for deriving drift conditions. 
From a practical point of view, the result allows for numerical evidence over inequality (\ref{eq:drift2}) -- as in Section \ref{sec:drift2} that follows -- by taking advantage of the fact 
that the choice $\rho=0$ provides a much more manageable expression for the distribution of the noise $\epsilon$ in
(\ref{eq:gamma}), involved in the MpCN proposal (Proposition \ref{prop:eigen} below exploits this fact).
%
\end{rem}

\begin{prop}
\label{prop:eigen}
If $\rho=0$, the eigenvalues $\lambda_1\le \lambda_2\le\cdots\le \lambda_q$ of $\epsilon\sim\mathcal{L}(\dif\epsilon)$ have the following joint Lebesgue density function, up to a normalising constant,
\begin{align*}
    \prod_{i=1}^q\frac{\lambda_i^{(p-q-1)/2}}{(1+\lambda_i)^{p}}\prod_{i<j}(\lambda_j-\lambda_i). 
\end{align*}
\end{prop}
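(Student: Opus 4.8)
The plan is to specialize the expression (\ref{eq:gamma}) for $\mathcal{L}(\dif\epsilon)$ to the case $\rho=0$, at which point the integral over $\mathcal{O}(p,q)$ collapses, and then to pass to eigenvalue coordinates via the spectral decomposition. Setting $\rho=0$ in (\ref{eq:gamma}), the integrand over $\mathcal{O}(p,q)$ reduces to $\det(\epsilon)^{p/2}/\det(I_q+\epsilon)^p$, which no longer depends on $\mathsf{U}$; since $\dif\mathsf{U}$ is a probability measure, the integral equals this quantity and hence
\[
\mathcal{L}(\dif\epsilon)=c_{p,q}\,\frac{\det(\epsilon)^{p/2}}{\det(I_q+\epsilon)^p}\,\mu(\dif\epsilon).
\]
Substituting the explicit form $\mu(\dif\epsilon)=\det(\epsilon)^{-(q+1)/2}\prod_{1\le i\le j\le q}\dif\epsilon_{ij}$ from (\ref{eq:mmu}), the Lebesgue density of $\epsilon$ on $P^+(q)$ with respect to $\prod_{1\le i\le j\le q}\dif\epsilon_{ij}$ is proportional to $\det(\epsilon)^{(p-q-1)/2}/\det(I_q+\epsilon)^p$.

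Next I would write $\epsilon=\mathsf{H}\Lambda\mathsf{H}^{\top}$ with $\mathsf{H}\in\mathcal{O}(q)$ and $\Lambda=\mathrm{diag}\{\lambda_1,\ldots,\lambda_q\}$, $0<\lambda_1<\cdots<\lambda_q$, and apply the standard change-of-variables formula for symmetric matrices — the same one used in the proof of Corollary \ref{cor:kama17} via Theorem 5.3.1 of \cite{MR770934}, whose Jacobian factor is $\prod_{1\le i<j\le q}(\lambda_j-\lambda_i)$. Since the density depends on $\epsilon$ only through its eigenvalues ($\det(\epsilon)$ and $\det(I_q+\epsilon)$ being symmetric functions of the $\lambda_i$), integrating $\mathsf{H}$ over $\mathcal{O}(q)$ merely contributes a multiplicative constant, and the joint density of $(\lambda_1,\ldots,\lambda_q)$ is proportional to
\[
\frac{\det(\Lambda)^{(p-q-1)/2}}{\det(I_q+\Lambda)^p}\prod_{1\le i<j\le q}(\lambda_j-\lambda_i)=\prod_{i=1}^q\frac{\lambda_i^{(p-q-1)/2}}{(1+\lambda_i)^p}\prod_{1\le i<j\le q}(\lambda_j-\lambda_i),
\]
which is the asserted expression.

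There is no genuine obstacle here; the proof is essentially bookkeeping. The only points needing care are: (a) correctly combining the factor $\det(\epsilon)^{p/2}$ appearing in (\ref{eq:gamma}) with the factor $\det(\epsilon)^{-(q+1)/2}$ implicit in $\mu$, so that the exponent $(p-q-1)/2$ emerges; and (b) citing the Vandermonde Jacobian of the spectral map in the correct form. As a consistency check (not needed for the statement, which is only claimed up to a normalising constant), one may note that $c_{p,q}=\Gamma_q(p)/\Gamma_q(p/2)^2$ is exactly the reciprocal of the matrix beta integral $\int_{P^+(q)}\det(\epsilon)^{(p-q-1)/2}\det(I_q+\epsilon)^{-p}\dif\epsilon$, confirming that $\mathcal{L}$ is indeed a probability measure.
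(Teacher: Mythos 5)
Your proof is correct and follows essentially the same route as the paper's: set $\rho=0$ in (\ref{eq:gamma}) so the $\mathcal{O}(p,q)$-integral collapses, convert $\mu(\dif\epsilon)$ to a Lebesgue density using (\ref{eq:mmu}), and pass to eigenvalue coordinates via Theorem 5.3.1 of \cite{MR770934}. The consistency check identifying $c_{p,q}$ with the reciprocal of the matrix Beta integral is a nice addition not in the paper, but it is not needed for the statement.
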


\begin{proof}
This follows from Theorem 5.3.1 of \cite{MR770934} together with the analytical expression of $\mathcal{L}(\dif\epsilon)$, since, for any integrable $f$,
\begin{align*}
    \int f(\epsilon)\mathcal{L}(\dif \epsilon)~&\propto\int f(\epsilon) \frac{\det(\epsilon)^{p/2}}{
	\det(I_q+\epsilon)^p}
	\mu(\dif \epsilon)\\
&=\int f(\epsilon) \frac{\det(\epsilon)^{(p-q-1)/2}}{
	\det(I_q+\epsilon)^p}
	\operatorname{Leb}^{q(q+1)/2}(\dif\epsilon)\\
&= \int \prod_{i=1}^q\frac{\lambda_i^{(p-q-1)/2}}{
	(1+\lambda_i)^p}\prod_{i<j}(\lambda_j-\lambda_i)\dif\lambda_1\cdots\dif\lambda_q
	 \int_{\mathcal{O}(q)}f(\mathsf{U}\Lambda\mathsf{U}^{\top})\dif\mathsf{U},
\end{align*}
where $\Lambda$ is diagonal with elements $\lambda_1,\ldots, \lambda_q$. 
\end{proof}
\noindent 

\subsection{Numerical Evidence over Drift Condition (\ref{eq:drift2})}
\label{sec:drift2}

We consider the case of a Wishart target $\tilde{\Pi}=W_q(r, I_q)$ with $r\in\RR$, $r>q-1$ for $\mathsf{V}(S)=\tilde{\pi}(S)^{-\alpha}$. In this case, 
\begin{align}\label{eq:drift_at_singular}
    \frac{\tilde{P}_0\mathsf{V}(S)-\mathsf{V}(S)}{\mathsf{V}(S)}=\int_{P^+(q)} \{\eta(\epsilon, S)^{-\alpha}-1\}\min\{1,\eta(\epsilon, S)\}\mathcal{L}(\dif \epsilon),
\end{align}
where, 
\[
\eta(\epsilon, S)=\frac{\tilde{\pi}(\epsilon\circ S)}{\tilde{\pi}(S)}
=(\det\epsilon)^{r/2}
\exp\big(-\tfrac{1}{2}\tr\,[\,(\epsilon-I_q)S\,]\,\big). 
\]
(Note that the right-hand side of (\ref{eq:drift_at_singular}) can be defined even if $S$ is degenerate.) We want to numerically investigate (\ref{eq:drift2}).
By continuity, it suffices to show that (\ref{eq:drift_at_singular}) is always negative if $S$ is degenerate. Furthermore, we can assume that $S$ is diagonal. To see this, first observe that for any $S\in P^+(q)$, there is $\mathsf{U}\in\mathcal{O}(q)$ such that $\mathsf{U}^{\top}S\mathsf{U}$ is diagonal.  We also have $\eta(\mathsf{U}\,\epsilon\,\mathsf{U}^{\top}, S)=\eta(\epsilon, \mathsf{U}^{\top}S\mathsf{U})$, and 
the law of $\mathsf{U}\,\epsilon\,\mathsf{U}^{\top}$ is the same as that of $\epsilon$ when $\rho =0$. Therefore, it is enough to show that (\ref{eq:drift_at_singular}) is always negative if $S$ is a degenerate diagonal matrix. 
The law of the eigenvalues $\lambda_1,\ldots, \lambda_q$ is determined in Proposition \ref{prop:eigen}, and $\epsilon$ is decomposed as $\epsilon=\mathsf{U}\Lambda\mathsf{U}^{\top}$ where $\mathsf{U}$ is uniformly distributed in $\mathcal{O}(q)$ and $\Lambda$ is a diagonal matrix with diagonal elements  $\lambda_1,\ldots,\lambda_q$. 
Thus, we can now evaluate (\ref{eq:drift_at_singular}) via numerical integration. 
\revision{To evaluate this integral, we used  importance sampling with the Pareto distribution as a reference measure. In Fig.~\ref{fig:a2} we use $10^5$ random samples for each given of $S$.  }


We fix $q=2$, so we can assume that $S=\mathrm{diag}(s,0)$, $s>0$. As we see in Fig.~\ref{fig:a2}, left panel, for small enough $\alpha\in (0,1)$, the value of (\ref{eq:drift_at_singular}) is (numerically found to be) negative. Thus, we can numerically confirm the drift condition for $\tilde{\Pi}=W_q(r, I_q)$ for  $r=p=2$ from Fig.~\ref{fig:a2}, left panel, and for $r=4$, $p=5$ from Fig.~\ref{fig:a2} right \revision{(recall the law of $\epsilon$ depends on $p$)}. Other choices of $r$ and $p$ yield similar figures. A Dirichlet form argument can be used to 
show that geometric ergodicity for $\tilde{\Pi}=W_q(r, I_q)$ implies geometric ergodicity for $\tilde{\Pi}=W_q(r, T)$, 
for general $T\in P^+(q)$. 
Recall also --  see the comment after Proposition \ref{prop:inv} 
-- that geometric ergodicity of the MpCN kernel for a Wishart target implies geometric ergodicity  for an Inverse-Wishart target.

\begin{figure}[!ht]
\begin{center}
\includegraphics[width=0.49\columnwidth]{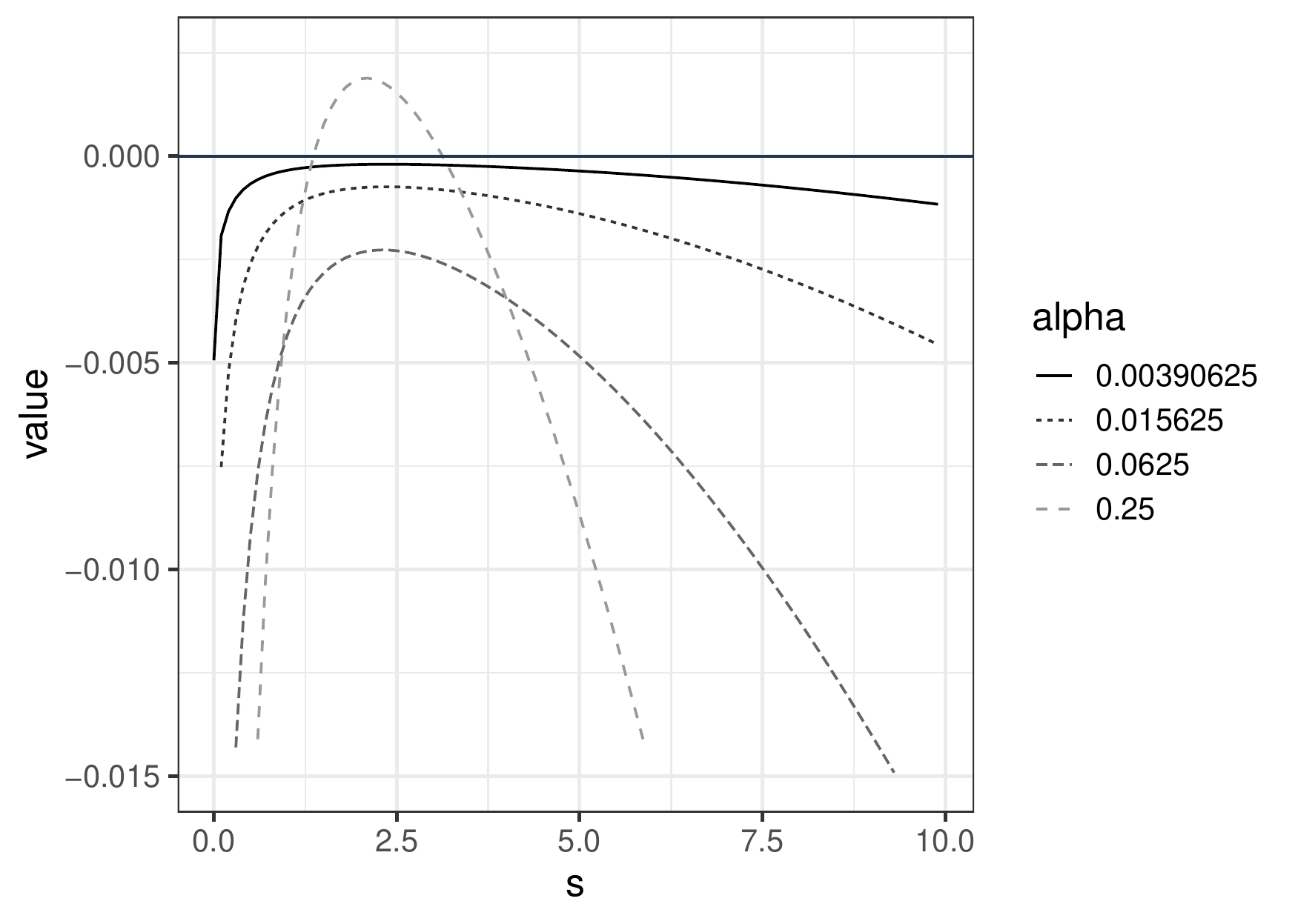}
\includegraphics[width=0.49\columnwidth]{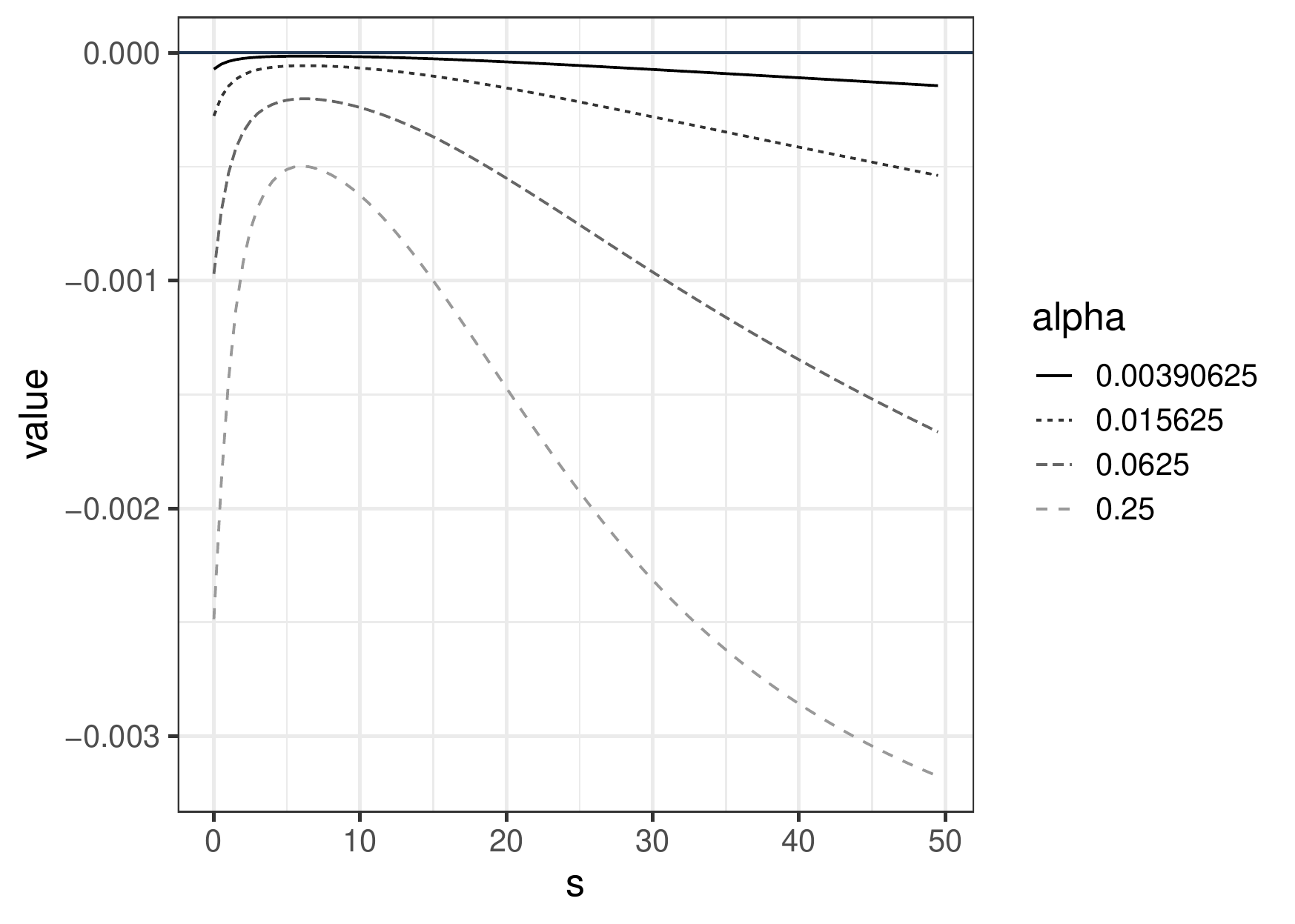}
\vspace{-0.3cm}
\caption{Numerical evaluation of the relative expected drift change  $(\tilde{P}_0\mathsf{V}(S)-\mathsf{V}(S))/\mathsf{V}(S)$
for target $\tilde{\Pi}=W_q(r, I_q)$,
at $S=\mathrm{diag}(s,0)$, for $r=p=2$ (left) and   $r=4$, $p=5$ (right).}
\label{fig:a2}
\end{center}
\vspace{-0.2cm}
\end{figure}


\section{Simulation Experiments}
\label{sec:results}

In this section, we discuss and analyse the difference in performance  among the algorithms considered in this work, i.e.~RWM, pCN and MpCN. 

\subsection{Tuning Parameters and Performance on Simple Targets}
First, the choice of tuning parameters  is discussed. Consider a target distribution defined on $M(p,q)$. 
Matrices $U$ and $V$, appearing in RWM and pCN, are scaling parameters that can be learned from a initial phase of the MCMC algorithms. Upon recalling the second moment properties of the Matrix-Normal law in Remark \ref{rem:connect}(i), we can have the estimates, 
\begin{align*}
\hat{U} = \hat{E}_1, \quad \hat{V} = \hat{E}_2/\tr(\hat{E}_2),
\end{align*}
having defined,
\begin{align*}
\hat{E}_1=(L-1)^{-1}\sum_{l=1}^L (X_l-\bar{X})(X_l-\bar{X})^{\top},\quad 
\hat{E}_2=(L-1)^{-1}\sum_{l=1}^L (X_l-\bar{X})^{\top}(X_l-\bar{X}), 
\end{align*}
where $X_1,..., X_L$, $L\ge 1$, are samples from the MCMC method and $\bar{X}$ is the sample mean.
In the case of RMW, one should introduce a scalar, $\sigma^2>0$, to allow for controlling the acceptance probability, so that the proposal writes as $y = x + \sigma\,\xi$, $\xi\sim N(0, \hat{U}, \hat{V})$ -- this is not needed in the case of pCN and MpCN.
%
%
%
Recall that MpCN does not involve $V$ as above, but simulates from an Inverse-Wishart law. 
As $\rho$ decreases and $\sigma^2$ increases, the acceptance probability decreases.
Empirically, a good choice of average acceptance probability for all algorithms is around $20\%$ to $40\%$.

Assume that the target distribution is defined on $P^+(q)$, and that $\U=U$, in which case the MCMC algorithms give rise to Markov chains on $P^+(q)$ with dynamics that do not depend on $U$.
Without loss of generality, let $U=I_p$. 
The dimension $p$ is a tuning parameter.
Let $S\in P^+(q)$ and $S=x_p^{\top}x_p$ for some $x_p\in M(p,q)$.  For RWM, the proposed value from $x_p$ is $y_p=x_p+ w_p$, $w_p\sim N_{p,q}(0, I_p,\sigma^2~V)$. 
We set $S^*= y_p^{\top}y_p$. Then $\sigma^{-2}S^*$ follows the noncentral Wishart distribution with noncentral matrix $\sigma^{-2}S$, covariance $V$, and $p$ degrees of freedom \citep[see Section 1.5.4 of][]{MR1960435}. By the properties of non-central Wishart distribution, $S^*$ has the same law as that of the matrix sum, 
$$
u^{\top}u+v^{\top}v=\begin{pmatrix}u\\v\end{pmatrix}^{\top}\begin{pmatrix}u\\v\end{pmatrix},\quad u\sim N_{q,q}(S^{1/2}, I_q, \sigma^2~V),\quad  v\sim N_{p-q,q}(0, I_{p-q}, \sigma^2~V). 
$$
The first term on the left side does not depend on $p$, while the second term is 
$\sum_{i=1}^{p-q} v_iv_i^{\top}$, where $v_i\sim N_q(0, V)$, and it is monotonically increasing with $p$. Thus $S^*$ increases monotonically with $p$, an effect that is illustrated 
in Fig.~\ref{fig:param1}, top-left panel. In Fig.~\ref{fig:param1}, the $x$-axis and the $y$-axis are the logarithms of 
two out of $q$ eigenvalues 
of the proposed position $S^*$, when the current position $S$ is the identity. A similar calculation shows that the above observation is also true for pCN (Fig.~\ref{fig:param1}, top-centre panel). In contrast, for MpCN the behavior of $S$ for varying $p$ is different. Recall that the matrix $V$ is random in this case. By the law of large numbers, as $p\rightarrow \infty$, 
$$
\frac{V^{-1}}{p}\longrightarrow (x^{\top}x)^{-1}=S^{-1},\quad 
(w^{\top}w)\longrightarrow S,
$$
where $V$ and $w$ are as in (\ref{eq:mpcn_proposal}). 
Also, $x^{\top}w\sim N_{q,q}(0, S, V)$ converges to $0$ since $V\rightarrow 0$. One thus has
$(y^{\top}y)\longrightarrow (x^{\top}x)$, and the acceptance probability converges to $1$, where $y$ is the proposed value of $x$. Therefore, the kernel degenerates as $p\rightarrow \infty$ (see Fig.~\ref{fig:param1}, top-right panel). 
In general, in addition to the choice of $\sigma$ and $\rho$ (Fig.~\ref{fig:param1}, bottom panel), the choice of $p$ provides additional flexibility in the design of the MCMC kernels. 

\begin{figure}[!ht]
\begin{center}
\includegraphics[width=0.3\columnwidth]{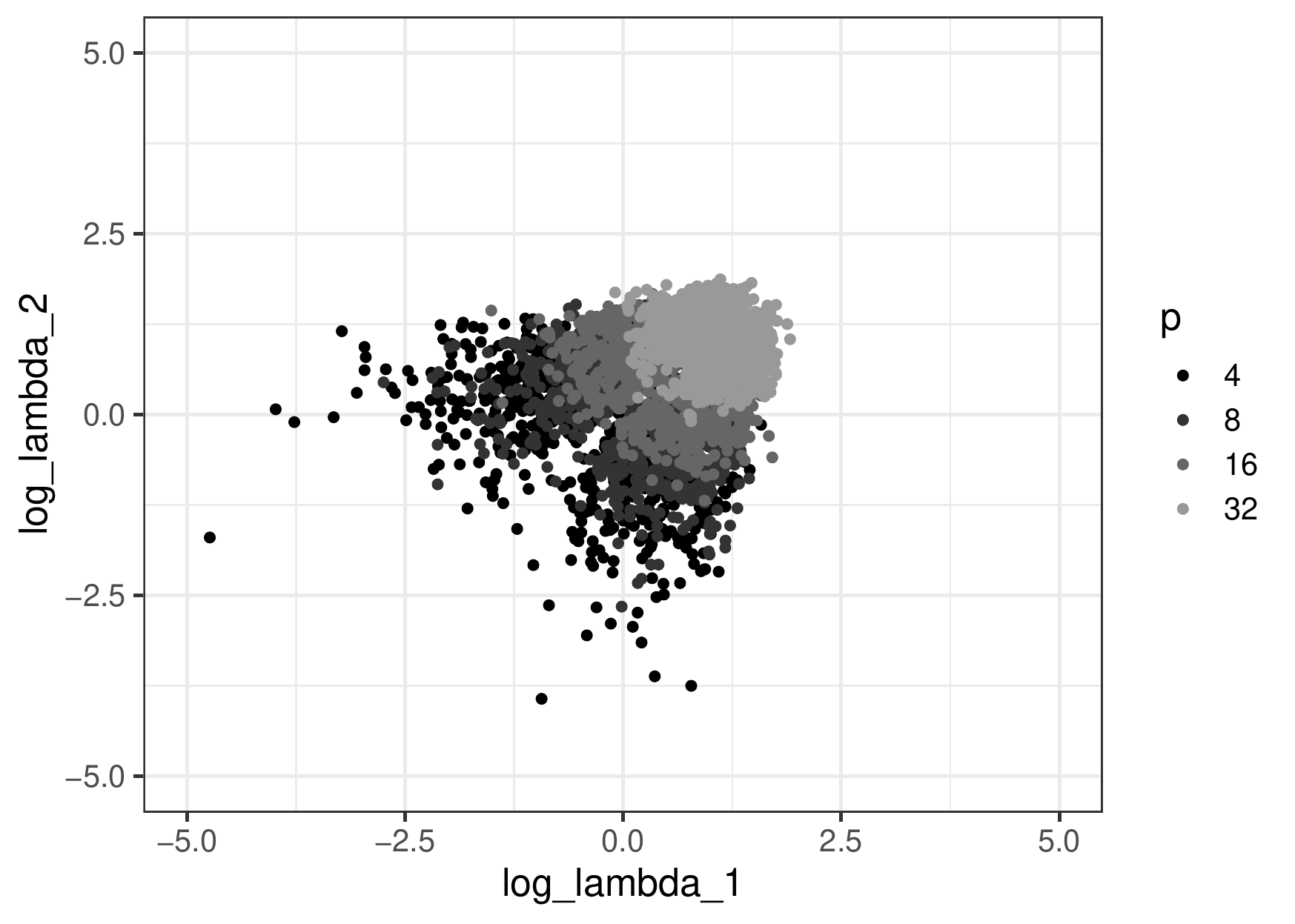}
\includegraphics[width=0.3\columnwidth]{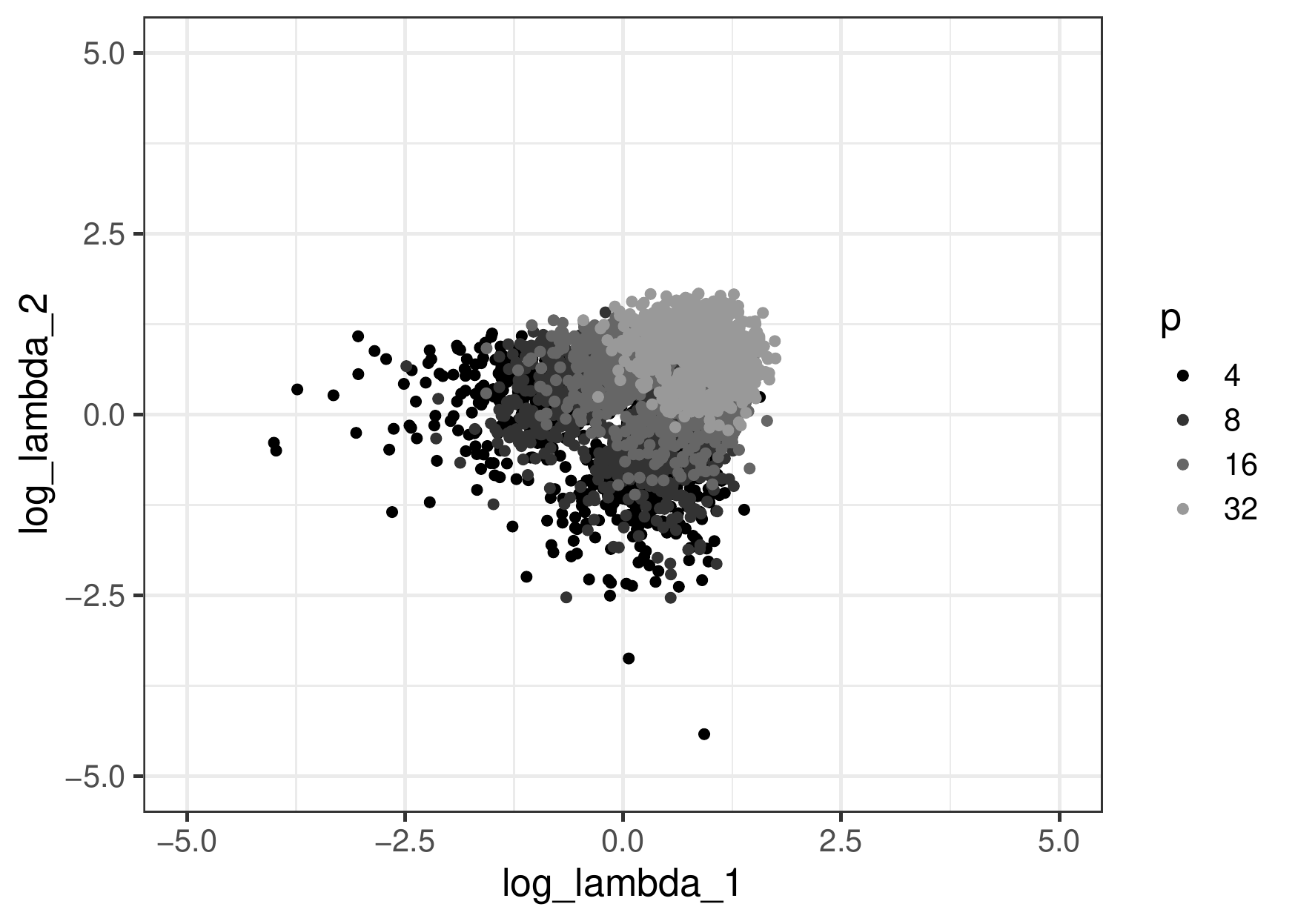}
\includegraphics[width=0.3\columnwidth]{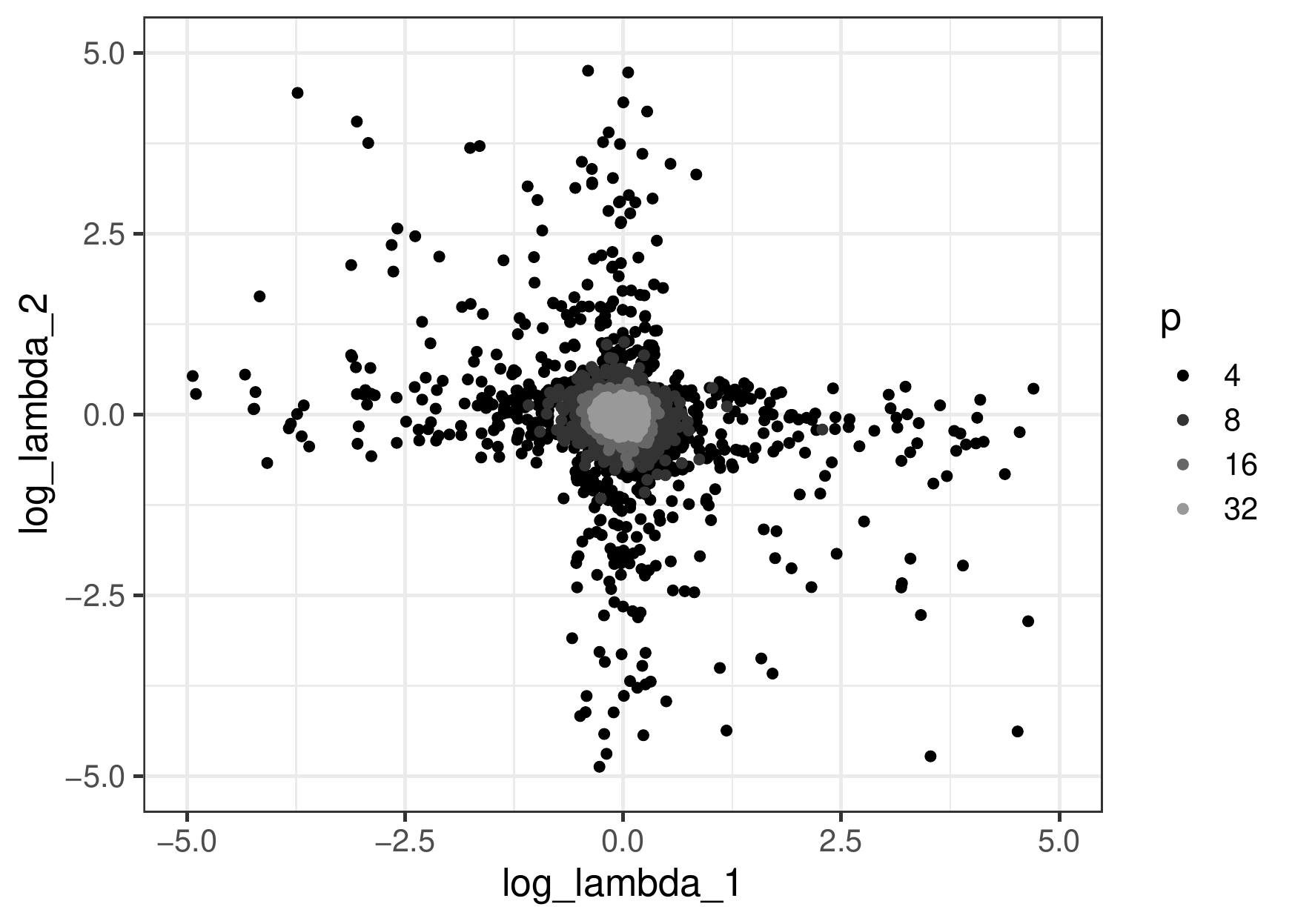}\\
\includegraphics[width=0.3\columnwidth]{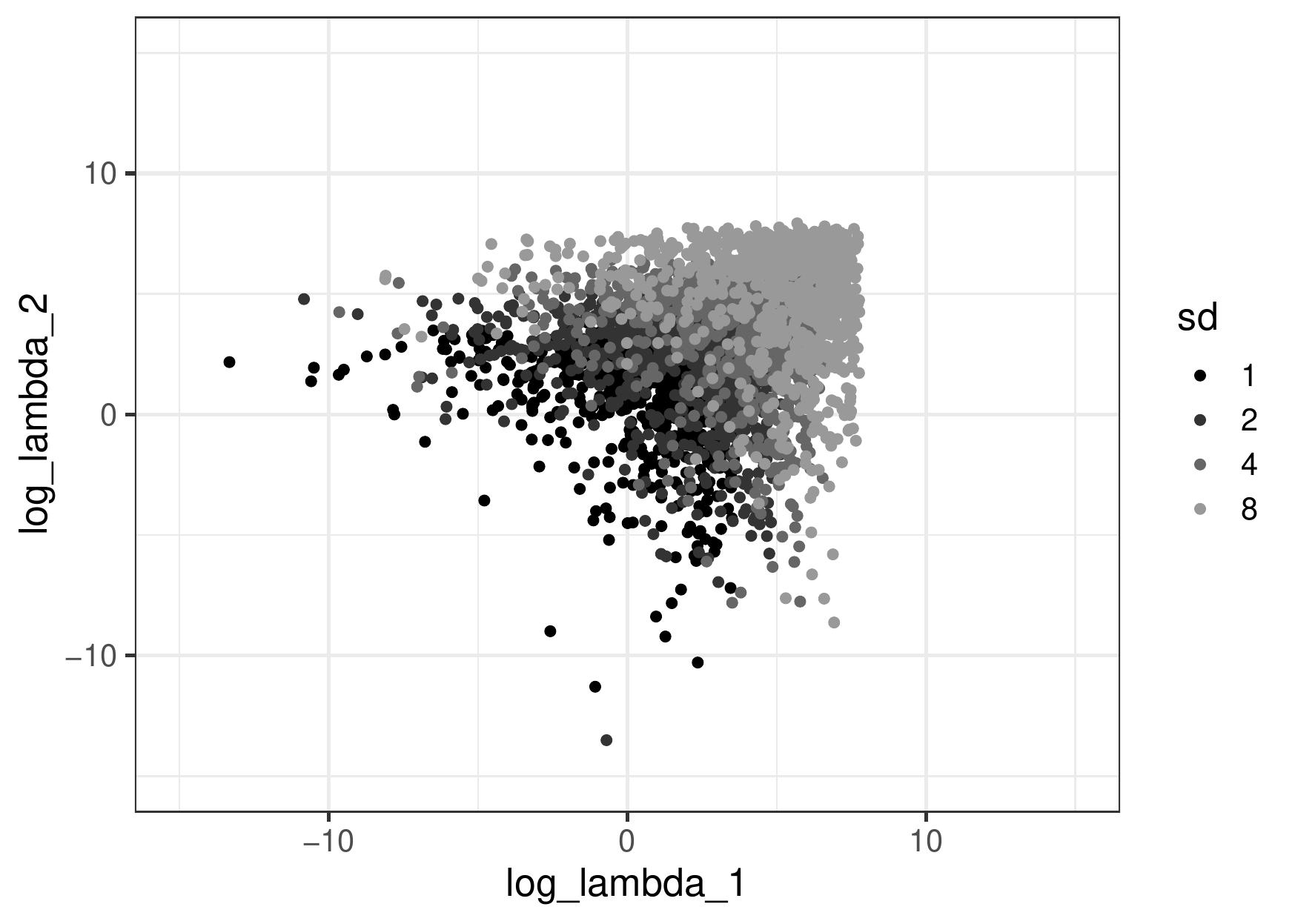}
\includegraphics[width=0.3\columnwidth]{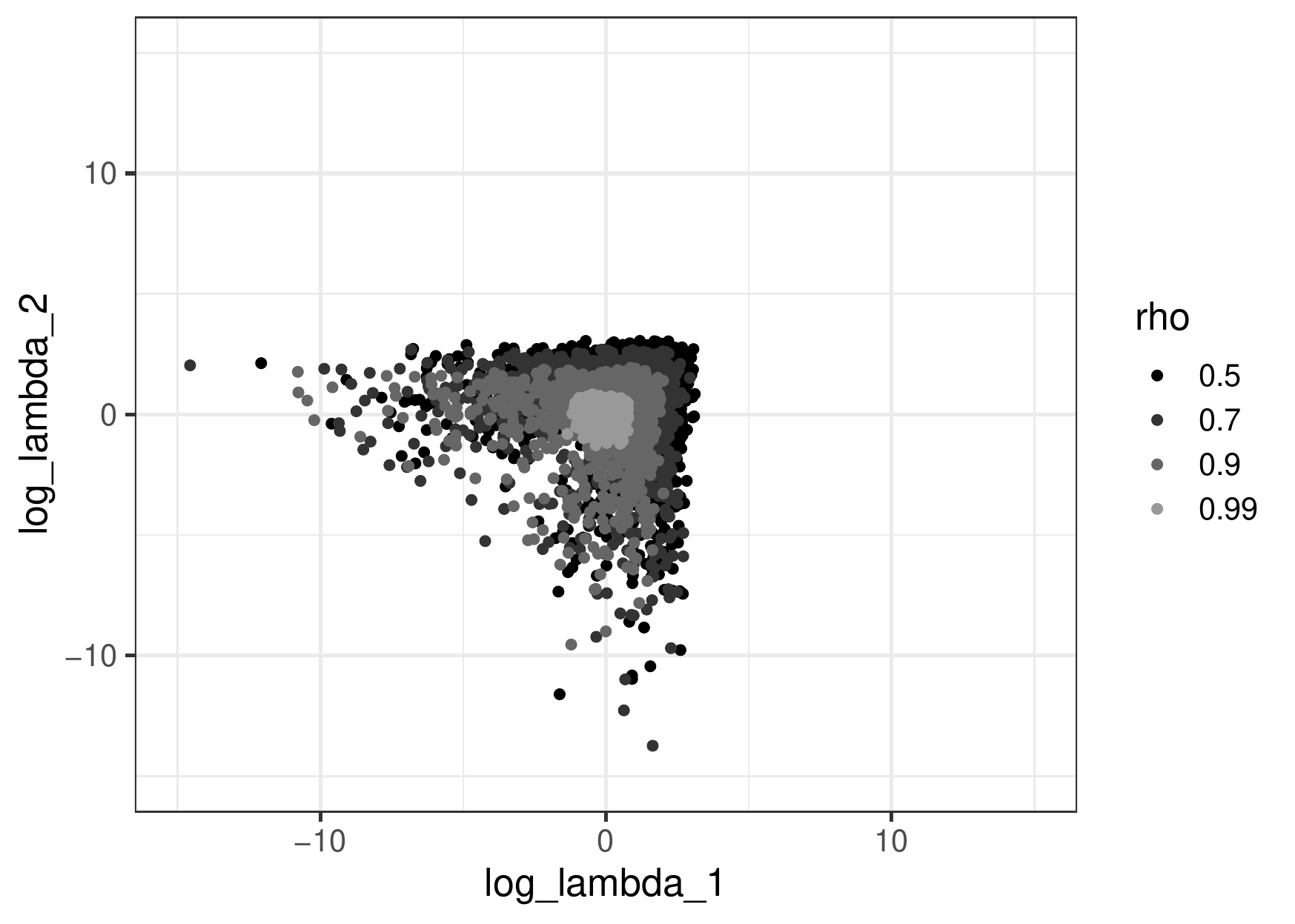}
\includegraphics[width=0.3\columnwidth]{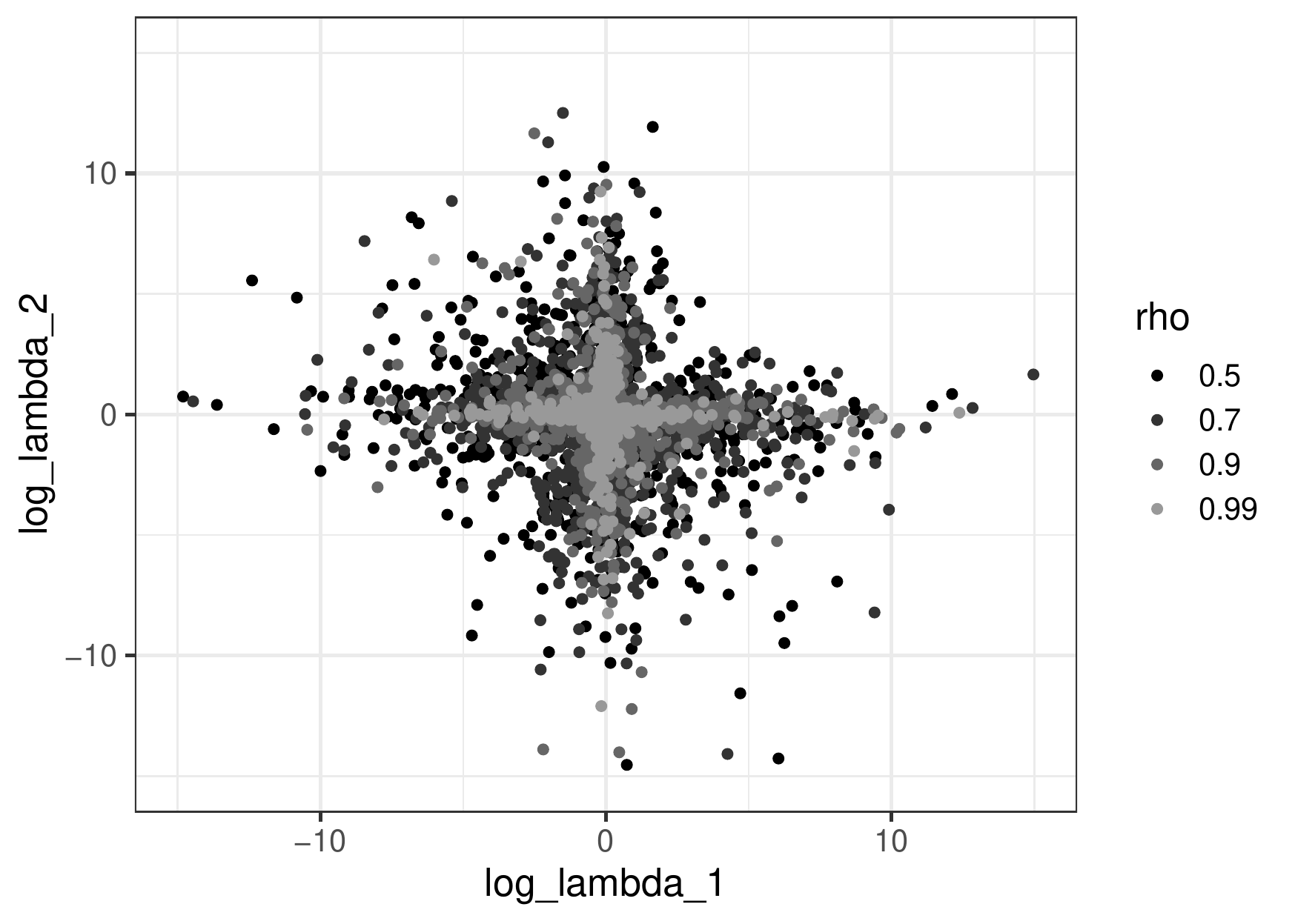}
\vspace{-0.4cm}
\caption{The effect of increasing $p$ (top) and the effect of $\sigma$ and $\rho$ (bottom) for the 
proposed position $S^{*}$ when the current position is $S=I_q$, for RWM (left), pCN  (center) and MpCN (right).  
In all cases, $q=4$, for RWM and pCN we have $V=I_q$, and $10^3$ realisations of $S^{*}$ are used for each figure.}
\label{fig:param1}
\end{center}
\vspace{-0.2cm}
\end{figure}


We now check the performance of the three algorithms for a Wishart target, $W_q(r, T)$, and an Inverse-Wishart one, $W_q^{-1}(r, T)$.  We set $q=8$, $r=16$, and produce $T$ via a sample $T\sim W_q(q, I_q)$.
In the following experiment, we set $V=I_q$ for RWM, pCN, and set $p=q$. Parameters $\sigma$, $\rho$ are chosen so that the acceptance probability is between $20\%$ and $40\%$.  We measure the distance between the empirical means of the MCMC algorithms and the mean of the target distribution via metric $\mathsf{d}(\cdot,\cdot)$ defined in (\ref{eq:metric}).
The computational cost for each iteration of MpCN is two to three times that of RWM and pCN \revision{since MpCN uses a simulation of the inverse Wishart distribution, that requires the evaluation of an inverse matrix and the eigencomposition of a $q\times q$ matrix per iteration}. Even taking this into account, MpCN performs much better for both the Wishart and 
Inverse-Wishart target (Fig.~\ref{fig:comparison}). 
For the Wishart target, pCN performs much worse than the other two methods, whereas pCN and RWM show similar weak performance for the Inverse-Wishart target. These findings align with the theory since pCN is not geometrically ergodic for the Wishart target, whereas both pCN and RWM are not geometrically ergodic for the Inverse-Wishart. For the heavy-tail scenario, the choice of an incremental distribution with heavy tails sometimes improves the performance of RWM \citep{MR1996270,MR2396939}. However, attempting some choices in this direction did not improve the results in our setting, thus numerics from such methods have not been included in the above plots. 

\begin{figure}[!ht]
\begin{center}
\includegraphics[width=0.4\columnwidth]{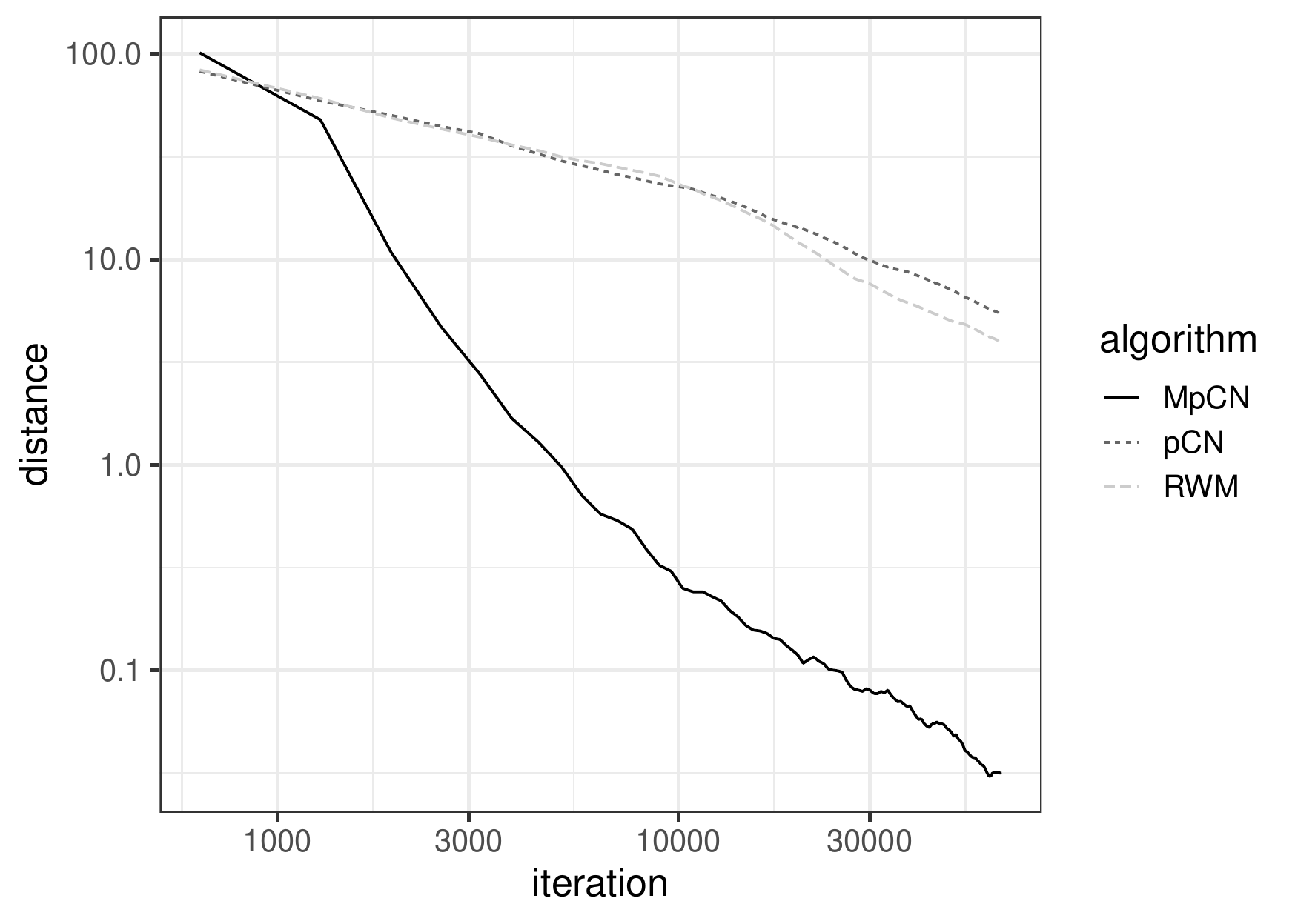}
\includegraphics[width=0.4\columnwidth]{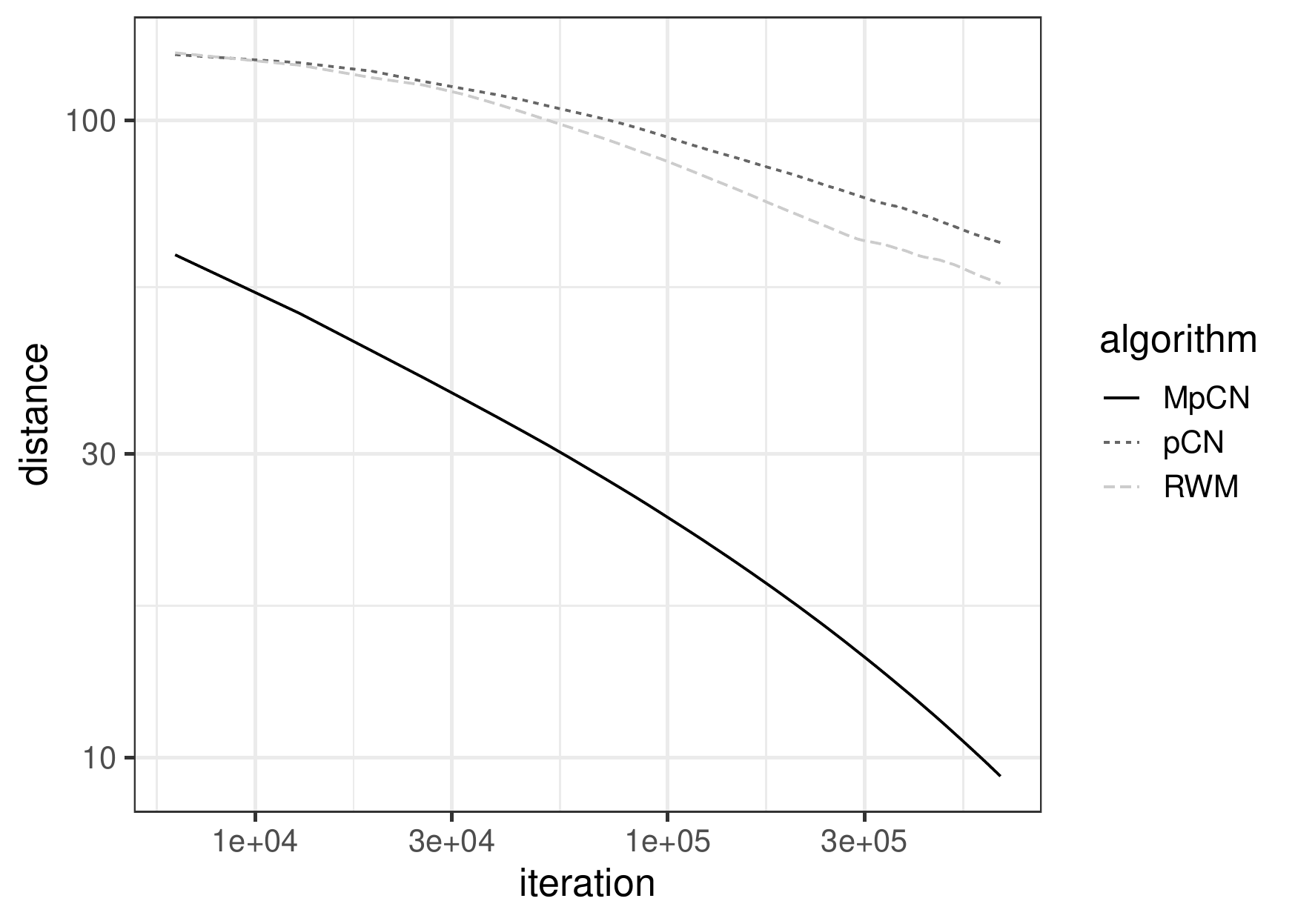}

\vspace{-0.4cm}
\caption{The distance between the empirical average and the true mean for a Wishart target (left) and an Inverse-Wishart target (right) in \revision{$\log-\log$ scale}. 
}
\label{fig:comparison}
\end{center}
\vspace{-0.2cm}
\end{figure}

\subsection{SDE Models for Covariance Matrix}
The prominent work of \cite{barn:01} introduced the scalar, non-Gaussian, Ornstein--Uhlenbeck (OU) stochastic volatility (SV) class of models of the form,
\begin{gather}
\dif\sigma^2_t = -\omega\,\sigma^2_{t^-}\dif t +
\dif \ell_t, \label{eq:OU1d}
\end{gather}
for a decay-rate parameter $\omega>0$, where the driving noise $\{\ell_t\}$ is a L\'evy process \citep{sato:1999} of positive increments and no drift -- such process is termed a `subordinator' process.
Under conditions, the differential dynamics give rise to a stationary OU process $\{\sigma^2_t\}$, with values in $\mathbb{R}_{+}$. The analysis in \cite{barn:01} illustrated that such class of models offers a great degree of flexibility in the specification of both the marginal distribution of $\sigma^2_t$ and various dynamical properties of the process, so that the model can match stylised empirical properties of observed time-series in financial economics. 
Parameter estimation in \cite{barn:01} is carried out using method of moments. 
Later, \cite{dell:15} consider the important instance in this above class of models where $\{\ell_t\}$ is a compound Poisson process -- in which case the marginal law of $\{\sigma^2_t\}$ is that of a Gamma distribution -- and develop a complex, sophisticated MCMC algorithm for carrying out full Bayesian inference. 

\cite{barn:01} briefly discuss multivariate extensions of the proposed modelling framework. This direction is explored in detail in \cite{barn:07}, where 
a non-Gaussian OU on $P^+(q)$ is carefully constructed via the differential equation,
%
\begin{gather}
\dif\Sigma_t = -(\Omega \Sigma_{t^-} + \Sigma_{t^-}\Omega^{\top})\dif t +
\dif L_t, \label{eq:matrixOU}
\end{gather}
for $\Sigma_0=\sigma_0\in P^+(q)$, $\Omega\in M(q,q)$, and matrix subordinator L\'evy process $\{L_t\}$, i.e., for $0<s<t$, $L_t-L_s$ is positive semi-definite.
%
%
\cite{barn:07} show that the conditions $\EE \log^{+}\|L_t\|_{F}<\infty$ and the 
spectrum of $\Omega$ being 
$\sigma(\Omega)\subset (0, \infty)+\mathbf{i}~\RR$, imply that SDE (\ref{eq:matrixOU}) has a stationary solution. 
The solution of (\ref{eq:matrixOU}) in-between jump times of the L\'evy process, writes as,
\begin{align*}
\Sigma_t = \exp\big\{-\Omega (t-s)\big\}\,\Sigma_s\,\exp\big\{-\Omega^{\top}(t-s)\big\}.
\end{align*}
%
%
An inferential objective arising within this important class of multivariate non-Gaussian SV models 
is the estimation of $\Omega$ and of parameters involved in the specification of $\{L_t\}$ based on
observations related to $\Sigma_t$. 

\begin{figure}[!ht]
\begin{center}
\includegraphics[width=0.4\columnwidth]{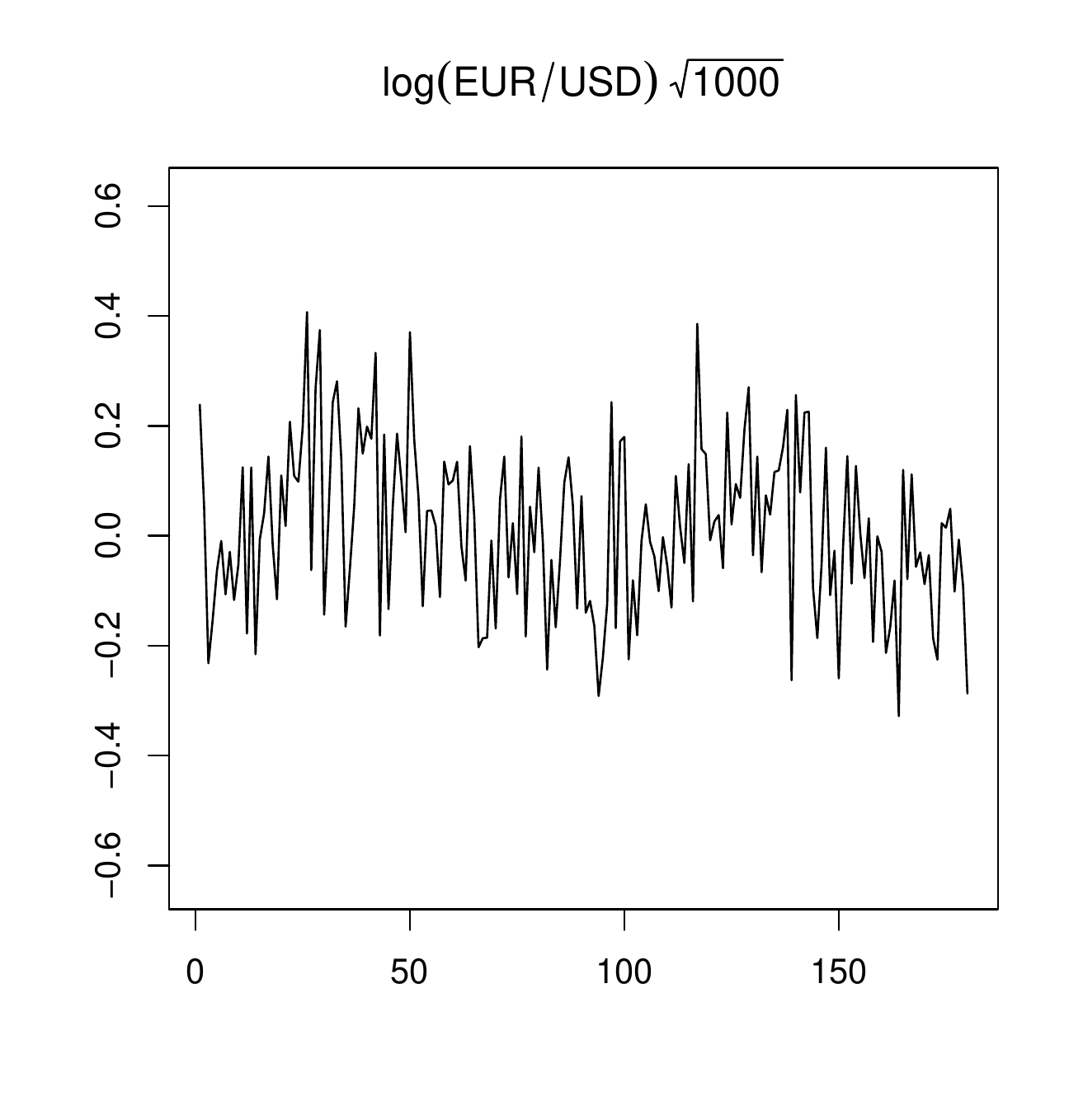}
\includegraphics[width=0.4\columnwidth]{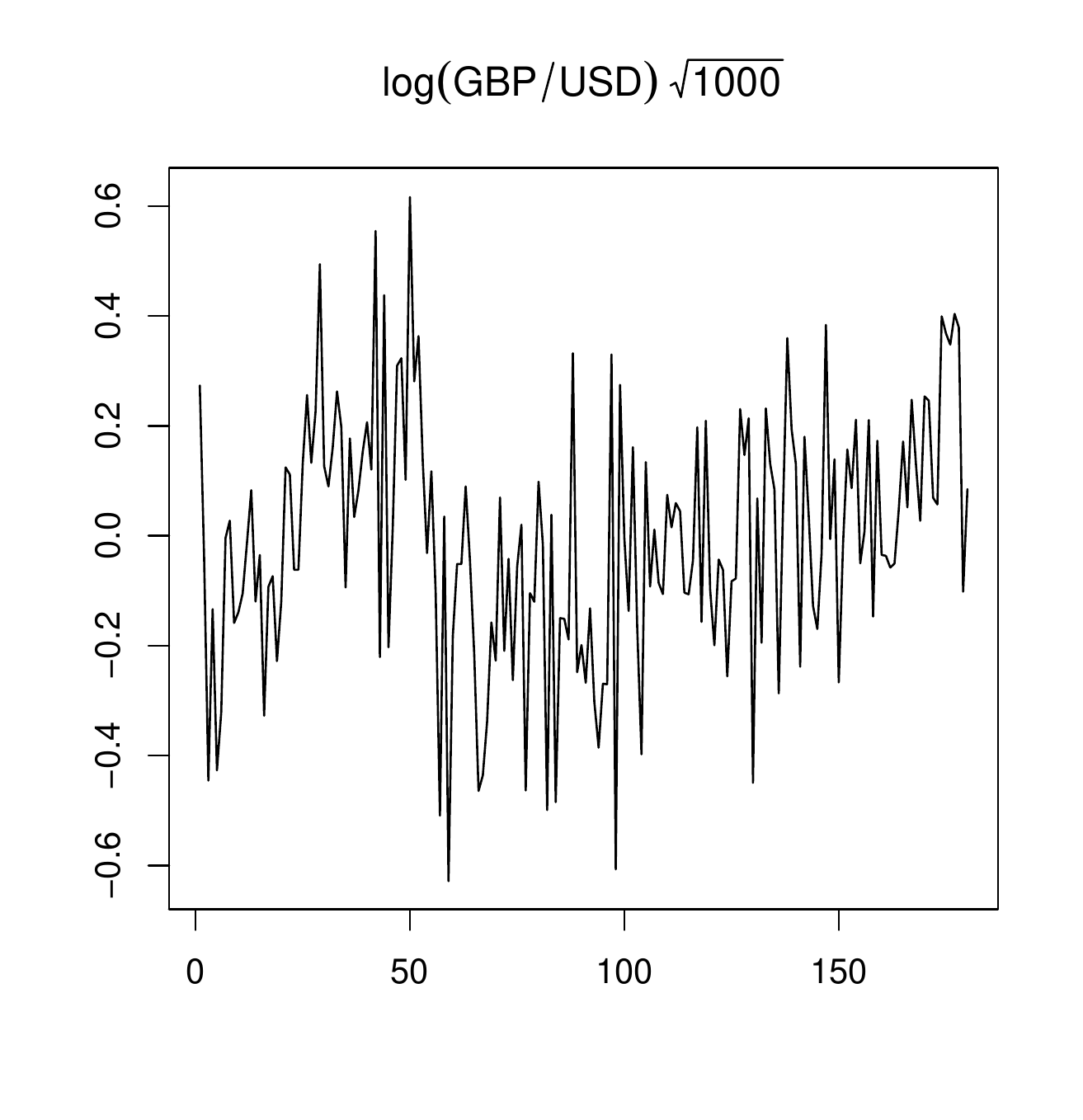}
\vspace{-0.9cm}
\caption{Observed linearly detrended $n=180$ daily log-returns -- rescaled by $\sqrt{1000}$ -- for the period 23/06/20 to 01/03/21.}
\label{fig:app1}
\end{center}
\vspace{-0.2cm}
\end{figure}

\begin{figure}[!ht]
\begin{center}
\includegraphics[width=0.49\columnwidth]{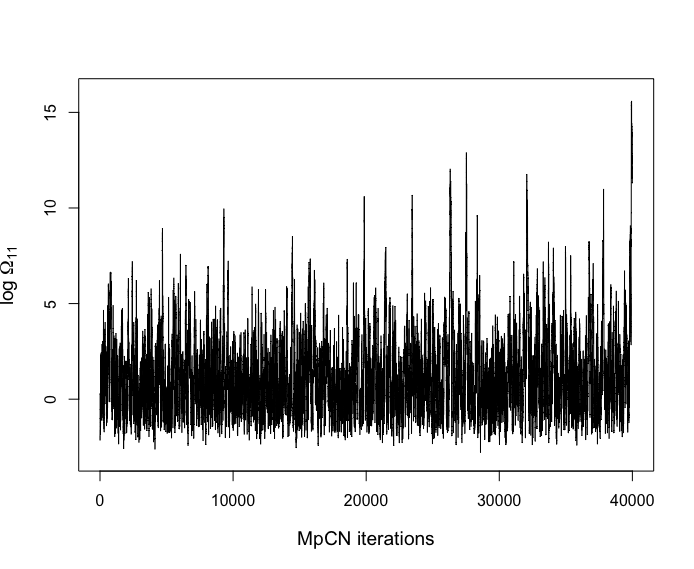}
\includegraphics[width=0.49\columnwidth]{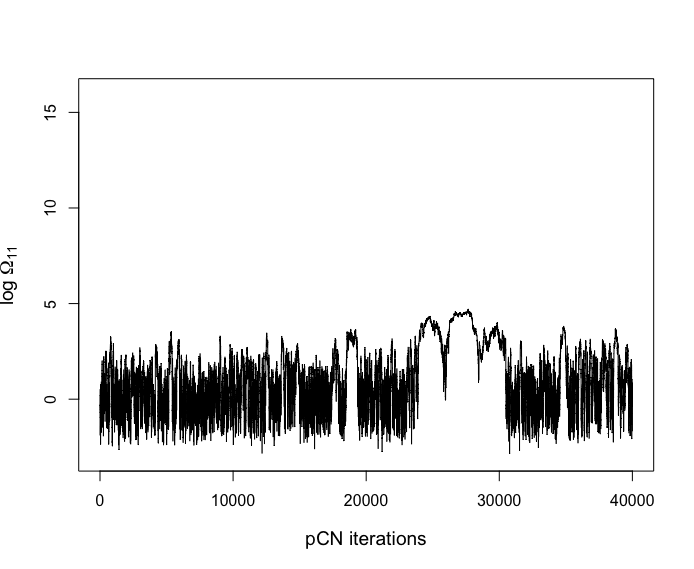}
\vspace{-0.5cm}
\caption{MpCN and pCN traceplots -- in log-scale -- for the prior of $\Omega_{11}$.}
\label{fig:app2}
\end{center}
\vspace{-0.2cm}
\end{figure}

\begin{figure}[!ht]
\begin{center}
\includegraphics[width=0.49\columnwidth]{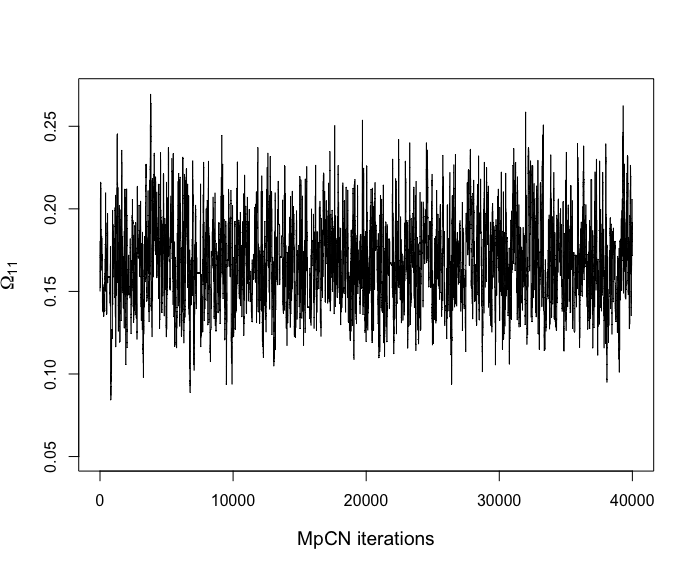}
\vspace{-0.2cm}
\includegraphics[width=0.49\columnwidth]{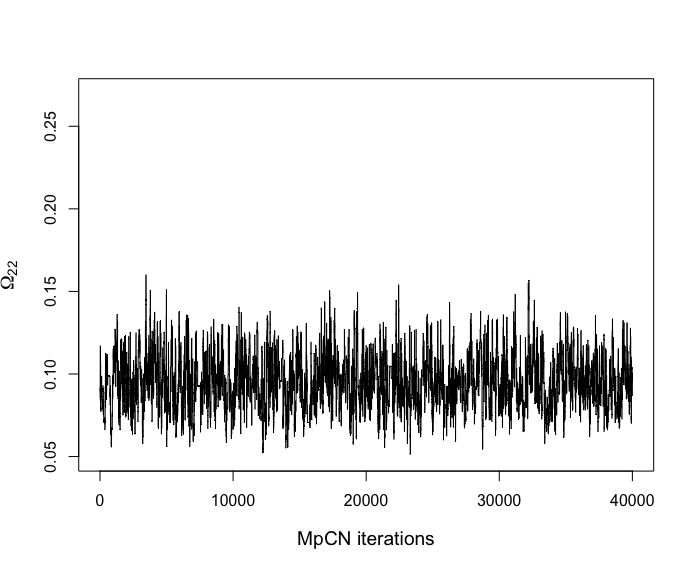}
\includegraphics[width=0.49\columnwidth]{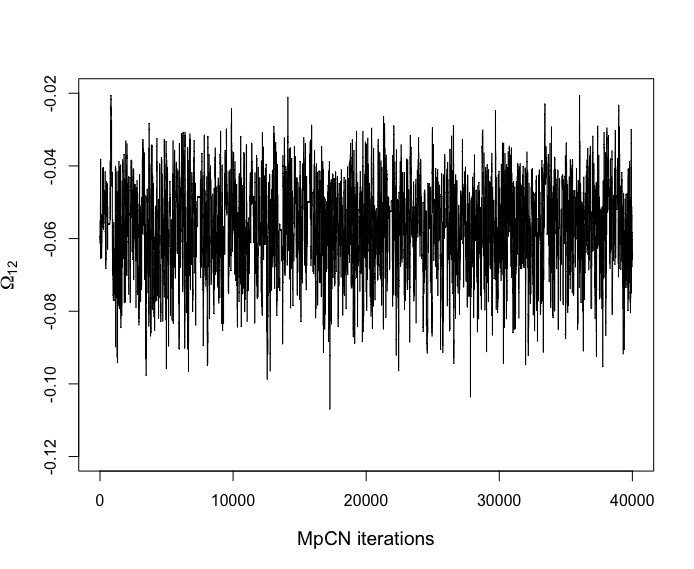}
\includegraphics[width=0.49\columnwidth]{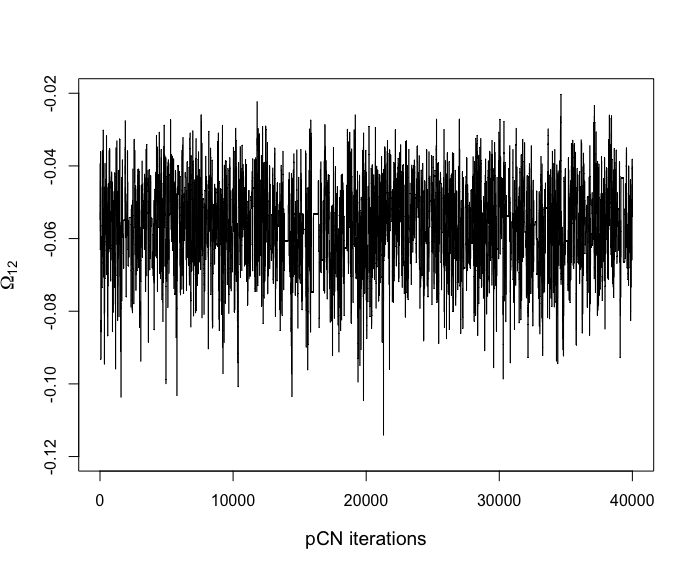}
\vspace{-0.5cm}
\caption{MpCN traceplots (first 3 plots) and pCN traceplots (last plot, bottom right) for the posteriors of $\Omega_{11}$, $\Omega_{22}$, $\Omega_{12}$ given the data in Fig.~\ref{fig:app1}.}
\label{fig:app3}
\end{center}
\vspace{-0.2cm}
\end{figure}

This design gives rise to a flexible, intuitive model, 
building upon mean-reverting-type dynamics imposed directly on the covariance matrix $\Sigma_t$. Such direction differs from typical approaches 
in the literature that work with various covariance decompositions \citep[e.g.][]{dell:15}, thus lacking direct interpretation of the dynamical behaviour for $\Sigma_t$ itself. 
To simplify the inferential setting, we consider a case 
where: i) the unknown parameter is $\Omega$, and $\Omega\in P^+(q)$;
ii) the L\'evy process corresponds to a compound Poisson one, 
with 
iid jumps comprised of scalar, independent Exponential variables, with all involved parameters assumed known. 
Data $\{y_{i}\}$ are obtained at times $0<t_1<\cdots <t_n$, $n\ge 1$, with
 $[\,y_i\,|\,\Sigma_{t_i}\,]\sim N(0,\Sigma_{t_i})$, $1\le i \le n$.

We use data corresponding to $n=180$ daily log-returns of the Euro-Dollar and Sterling-Dollar exchange rates, as observed in the period from 23/06/20 to 01/03/21. Thus, we look at a bivariate scenario, $q=2$, with $t_i=i$.
The data -- after linearly detrending and rescaling by $\sqrt{1000}$ -- are shown in Fig.~\ref{fig:app1}. In contrast to the complex methodology used in \cite{dell:15} for performing Bayesian inference in the scalar case, we follow the direction of a pseudo-marginal algorithm \citep[e.g.][]{MR2502648, MR2758115} to 
treat the latent paths of the covariance process $\{\Sigma_t\}$, thereby
replacing the intractable likelihood with unbiased estimators obtained by a particle filter. For the driving compound Poisson process, we use intensity $0.4$ and Exponential jumps with mean equal to $1/60$. Such values were selected via preliminary runs of MCMC algorithms with these parameters assumed unknown. We fix $\sigma_0=\mathrm{diag}\{0.05,0.05\}$, based again on preliminary runs. The prior for $\Omega$ is the inverse Wishart distribution 
$W^{-1}_2(2,I_2)$,
we stress that pursuing joint estimation of $\Omega$ with some of the above parameters in our bivariate setting is quite a challenging task, that is beyond the context of our work, and one that would deviate from our main interest of investigating the performance of algorithms we have defined on the previous parts of the paper and treat posterior sampling on $P^{+}(q)$.   


We applied MpCN and pCN, with $p=2$, both on the prior and the posterior distribution, on an Intel Xeon E5-2690, 2.9GHz, Memory 80GB (with $\texttt{C}$ code). 
Our theory has shown than pCN is not geometrically ergodic when applied on the Inverse-Wishart prior, and this is manifested in the contrasting behaviour of the MpCN and pCN traceplots in Fig.~\ref{fig:app2} -- for both algorithms we use $\rho=0.9$ giving an average acceptance probability of 46\% and 48\%, for MpCN and pCN respectively. When targeting the posterior, 
for the particle filter we use $1,000$ particles and dynamic resampling 
with ESS threshold $25\%$ (we made no use of parallelisation and applied the vanilla bootstrap filter) -- see \cite{MR2758115} for details on the use of a particle filter as means of obtaining an unbiased estimate of the likelihood of data to be used within the pseudo-marginal MCMC.
%
Fig.~\ref{fig:app3} shows MpCN and pCN traceplots generated using $\rho=0.999$, 
giving an average acceptance probability of 22\% and 14\%, for MpCN and pCN respectively.
Both algorithms required approximately 1 hour per 10,000 iterations. 
\revision{In this case, the consideration of $n=180$ data points under a Gaussian likelihood has flattened the tails of the posterior, so MpCN and pCN have similar performance when applied on the posterior given all $n=180$ observations. Note that the likelihood term of even a single data point in this example will change the mathematical nature of the decay of the tails of the target.}

\revision{
A key message we aim to convey is the \emph{robustness} of MpCN in all settings considered in this application vs the poor performance of pCN when applied on the prior distribution.} 

%

\section{Conclusions and Future Work}
\label{sec:conclude}
Our work presents one of the first contributions towards a systematic analysis 
(including derivation and ergodicity properties) of MCMC algorithms on matrix spaces. A number of interesting directions have now opened up to future research. We summarise some of them here.
\begin{itemize}
\item[(i)] 
As a starting point, we focused on blind proposals 
(this matched the requirements of the SV model). It is natural to move to the study of derivative-driven methods, e.g.~MALA or HMC. 
\item[(ii)]
There are some last steps remaining to obtain of complete proof of geometric ergodicity of the newly developed MpCN algorithm  -- this is left for future research, building upon the progress made here.
\item[(iii)]
The SV model sets up a new research direction in financial statistics. It leads to further investigations on differential models on matrix spaces and Monte Carlo methods that can be effective therein. 
This is a challenging task, requiring calibration of  high-dimensional matrix-parameters for latent dynamical models (much beyond the $2\times2$ space, and 180 data points used here).
\item[(iv)] There is a need for development and analysis of MCMC methods on spaces of symmetric positive-definite matrices restricted in sub-domains of $P^{+}(q)$ arising in the field of  Gaussian graphical models. Here interest lies in exploring, e.g., the space of precision matrices, given zeros for a number of partial auto-correlations determined by graphs. For relevant references, see Section 3 of \cite{lenk:13} or \cite{wang:12}. 
Indicatively, application of the upcasting approach in this context will be important, as it will open up directions for the development of effective MCMC algorithms that respect the space restrictions. 
Scalability with respect to dimensionality $q$ is also of high significance in this setting.
\end{itemize}

\section*{Acknowledgements} 
We thank an anonymous referee and the Associate Editor for suggestions that have greatly improved the contents of this paper.





\appendix

\section{Proof of Proposition \ref{prop:strange}}
\label{app:strange}

First we prove (i). 
From Proposition 5.1 of \cite{RT} and continuity of 
$x\mapsto P(x,\{x\})$, it suffices to show that
$\sup_{x\in M(p,q)}P(x,\{x\})=1$.  By definition, 
    \begin{equation}
    \label{eq:gareth}
    1-P(x,\{x\})=\int_{M(p,q)}\alpha(x,y(w))\phi_{p,q}(w;0,U,V)\dif w, 
    \end{equation}
    for $y=y(w)=\rho^{1/2}x+(1-\rho)^{1/2}w$; 
    notice also that the acceptance probability writes as,
    $$ 
    \alpha(x,y(w)) = \min\Big\{1,  \frac{\pi(y(w))\phi_{p,q}(x;0,U,V)}{\pi(x)\phi_{p,q}(y(w);0,U,V)}\Big\}.
    $$
    Let $x=rs$, for $r>0$ and $\|s\|_F=1$, so that $\|x\|_F=r\|s\|_F=r$. 
    For any $w\in M(p,q)$,
    \begin{align*}
    \|\tfrac{x}{r}-\tfrac{y(w)}{r}\|_F&=\|s-\rho^{1/2}s+r^{-1}~(1-\rho)^{1/2}w\|_F\\
    &\longrightarrow_{r\rightarrow\infty}1-\rho^{1/2}. 
    \end{align*}
    For each $w\in M (p,q)$ choose $r_0=r_0(w)>0$ so that 
    $\|\frac{x}{r}-\frac{y(w)}{r}\|_F<\varepsilon$ for any $r>r_0$. 
    Then, the definition
    of $C_{r,\varepsilon}(s)$ in the statement of the Proposition implies that for $r=\|x\|_F>r_0$,
    \begin{align}
    \label{eq:11}
    |\,\log\pi(y(w))-\log\pi(x)\,|= 
    |\,\log\pi(r\tfrac{y(w)}{r})-\log\pi(r\tfrac{x}{r})\,|\le C_{r,\varepsilon}(r^{-1}x)~r^2.
    \end{align}
    We have that, for fixed $w$, $y(w)=\rho^{1/2}x+\mathcal{O}(1)$ and, 
    \begin{align}
    \nonumber
            &-\log\phi_{p,q}(y(w);0, U, V)+\log\phi_{p,q}(x;0, U,V)\\[0.2cm]
             \nonumber
            &\qquad =\tfrac{1}{2}
            \big\{\,\tr\,[\,V^{-1}y(w)^{\top}U^{-1}y(w)\,]-\tr\,[\,V^{-1}x^{\top}U^{-1}x\,]\,\,\big\}
            \\[0.2cm]
             \nonumber
            &\qquad=-\tfrac{1-\rho}{2}\tr\,[\,V^{-1}x^{\top}U^{-1}x\,]+
            \mathcal{O}(\|x\|_F)\\[0.2cm]
            &\qquad\le -\tfrac{1-\rho}{2}\lambda^{-2}\|x\|^2_F+\mathcal{O}(\|x\|_F).  \label{eq:12}
    \end{align}
    Choose a sequence $x_n\in M(p,q)$, for $n=1,2,\ldots$, with $r_n=\|x_n\|_F \ge r_0$, such that, 
$$
\limsup_{n\rightarrow\infty}C_{r_n,\varepsilon}(s_n)<\tfrac{1-\rho}{2}\lambda^{-2},
$$
where $s_n=x_n/\|x_n\|_F$. Then, for $y_n(w)=\rho^{1/2}x_n+(1-\rho)^{1/2}w$, we have, 
\begin{align*}
\alpha(x_n,y_n(w))&=\min\Big\{1, \frac{\pi(y_n(w))\phi_{p,q}(x_n;0,U,V)}{\pi(x_n)\phi_{p,q}(y_n(w);0,U,V)}\Big\}\\[0.1cm]
&\le 
\frac{\pi(y_n(w))\phi_{p,q}(x_n;0,U,V)}{\pi(x_n)\phi_{p,q}(y_n(w);0,U,V)}\\[0.1cm]
&= 
\exp\left(\log\pi(y_n(w))-\log\pi(x_n)\right) \\ & \qquad \qquad \qquad \times \exp\left(-\log\phi_{p,q}(y_n(w);0,U,V)+\log\phi_{p,q}(x_n;0,U,V)\right)\\[0.1cm]
&\le \exp\big(\big(C_{r_n,\varepsilon}(s_n)-\tfrac{1-\rho}{2}\lambda^{-2}\big)\|x_n\|^2_F+\mathcal{O}(\|x_n\|_F)\big)\\
& \qquad \longrightarrow_{n\rightarrow\infty} 0,
\end{align*}
for any fixed $w\in M(p,q)$. The proof of (i) is completed via the dominated convergence theorem, 
as from (\ref{eq:gareth}) we obtain that $\sup P(x,\{x\})=1$. 

The proof for (ii) is similar. 
Let $x=rs$, for $r>0$ and $\|\,\U ^{-1/2}s\,\|_F=1$ as above. Then, for any $w\in M(p,q)$, 
\begin{align*}
   & \mathsf{d}\big( x^{\top}U^{-1}x,\, 
    y(w)^{\top}U^{-1}y(w)\big)
    =
    \mathsf{d}\big(\tfrac{x}{r}^{\top}\U ^{-1}\tfrac{x}{r},\, 
    \tfrac{y(w)}{r}^{\top}\U ^{-1}\tfrac{y(w)}{r}\big)\\
    &=
    \mathsf{d}\big(s^{\top}\U ^{-1}s,\, 
    (s-\rho^{1/2}s+r^{-1}(1-\rho)^{1/2}w)^{\top}\U ^{-1}(s-\rho^{1/2}s+r^{-1}(1-\rho)^{1/2}w)\big)\\
    &\rightarrow 
    \mathsf{d}\big(s^{\top}\U ^{-1}s, \,
    (s-\rho^{1/2}s)^{\top}\U ^{-1}(s-\rho^{1/2}s)\big)\\
    &=\mathsf{d}\big(I_q,\,(1-\rho^{1/2})^2I_q\big)
    =q^{1/2}2\log(1-\rho^{1/2}). 
\end{align*}
Let $\pi(x)=\tilde{\pi}(x^{\top}\U ^{-1}x)$. 
For each $w\in M(p,q)$, there exists $z_0=z_0(w)>0$ such that, if 
$\|\,\U ^{-1/2}x\,\|_F=z>z_0$ then,
\begin{align*}
    |\,\log\pi(y(w))-\log\pi(x)\,|\le C_{z,\varepsilon}(z^{-2}x^{\top}\U ^{-1}x)~z^2. 
\end{align*}
Observe that the trace of $z^{-2}x^{\top}\U ^{-1}x$ is $1$. 
As in the proof of (i), 
\begin{align*}
-\log\phi_{p,q}(y(w);0, U, V)+\log\phi_{p,q}(x;0, U,V)&=-\tfrac{1-\rho}{2}\tr\,[\,V^{-1}x^{\top}U^{-1}x\,]+
            \mathcal{O}(\|\U ^{-1/2}x\|_F)\\[0.2cm]
            &\le -\tfrac{1-\rho}{2}\tilde{\lambda}^{-2}\|\U ^{-1/2}x\|^2_F+\mathcal{O}(\|\U ^{-1/2}x\|_F).  
 \end{align*}
 The rest of the proof is the same as above. 

\section{Proof of Proposition \ref{prop:drift}}
\label{app:drift}

Recall the concept of rapid variation in Definition \ref{def:RAV}.
%
%
%
In the next lemma, we show a key property of probability measures on $P^+(q)$ with rapidly varying densities. 
\begin{lem}\label{lem:probability_fraction}
Suppose that the target $\tilde{\Pi}(\dif S)$ on $P^+(q)$ has a strictly positive, continuous, rapidly varying density $\tilde{\pi}(S)$ with respect to $\mu(\dif S)$. 
Then, w.p.1,
\[
\lim_{\tr(S)\rightarrow+\infty}\Big|\log\Big(\frac{\tilde{\pi}(\epsilon\circ S)}{\tilde{\pi}(S)}\Big)\Big|=+\infty,
\]
where $\epsilon\sim\mathcal{L}(\dif\epsilon)$ -- see Proposition \ref{prop:random_walk} for the definition of $\mathcal{L}(\dif \epsilon)$.  
\end{lem}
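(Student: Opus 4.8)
The plan is to reduce the statement to a comparison of two "eigenvalue configurations" and then invoke the definition of rapid variation directly. Recall from Proposition \ref{prop:random_walk} that the noise $\epsilon\sim\mathcal{L}(\dif\epsilon)$ is a $P^+(q)$-valued random variable whose law does not depend on $x$, $U$, and is almost surely non-degenerate (it has a density on $P^+(q)$ via (\ref{eq:gamma}), hence $\epsilon\in P^+(q)$ w.p.1, with $\epsilon\neq I_q$ w.p.1 as well). First I would fix such an $\epsilon$ and write $S=zS^{*}$ in the sense used in Definition \ref{def:RAV}, i.e.\ decompose $S$ along the scalar $z=\tr(S)\to+\infty$ and a "direction" $S^{*}$ on the compact slice $\{\tr=1\}$; then $\epsilon\circ S = \epsilon^{1/2}S\epsilon^{1/2}$ is a matrix whose trace also tends to infinity, and whose direction is the (continuous) image of $S^{*}$ under $S^{*}\mapsto \epsilon^{1/2}S^{*}\epsilon^{1/2}/\tr(\epsilon^{1/2}S^{*}\epsilon^{1/2})$.

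The key observation is that, writing things in the upcasted picture, $\tilde{\pi}(\epsilon\circ S)=\pi(x_\epsilon)$ and $\tilde{\pi}(S)=\pi(x)$ where $x,x_\epsilon\in M(p,q)$ satisfy $x^{\top}U^{-1}x=S$ and $x_\epsilon^{\top}U^{-1}x_\epsilon=\epsilon\circ S$; both are of the form "$(\text{direction})\cdot z^{1/2}$" with a common scalar factor $z\to\infty$. The dichotomy in Definition \ref{def:RAV}(i) of rapid variation then applies once we verify that the two directions $x^{\top}U^{-1}x$ and $x_\epsilon^{\top}U^{-1}x_\epsilon$ (suitably normalised) are \emph{ordered} in the Loewner sense (one strictly dominates the other) — and since $\epsilon\neq I_q$, the matrices $S^{*}$ and $\epsilon^{1/2}S^{*}\epsilon^{1/2}$ are never equal, so after normalisation one of the two strict inequalities $x^{\top}T^{-1}x>y^{\top}T^{-1}y$ or $x^{\top}T^{-1}x<y^{\top}T^{-1}y$ must hold (possibly after passing to a subsequence along which the normalised directions converge, using compactness of $\{\tr=1\}$ in the metric $\mathsf{d}$). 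On that subsequence, rapid variation forces $h(yz)/h(xz)\to0$ or $\to\infty$, i.e.\ $|\log(\tilde\pi(\epsilon\circ S)/\tilde\pi(S))|\to+\infty$; since the argument applies to \emph{every} subsequence of any sequence $\tr(S)\to\infty$, the full limit follows.

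I would organise the write-up as: (1) record that $\epsilon\in P^+(q)\setminus\{I_q\}$ w.p.1; (2) set up the decomposition $S=zS^{*}$ and note $\tr(\epsilon\circ S)\to\infty$ iff $\tr(S)\to\infty$ since $c_1\tr(S)\le\tr(\epsilon\circ S)\le c_2\tr(S)$ with $c_1,c_2>0$ the extreme eigenvalues of $\epsilon$; (3) pass along an arbitrary subsequence to get convergence of the normalised directions and establish a strict Loewner ordering between them (this is where the condition $\epsilon\neq I_q$ is essential — if $\epsilon^{1/2}S_0\epsilon^{1/2}$ and $S_0$ were comparable but not equal for the limit direction $S_0$, we are done; the only thing to rule out is exact equality, which would force $\epsilon=I_q$); (4) apply Definition \ref{def:RAV}(i).

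The main obstacle I anticipate is step (3): one must be careful that the Loewner comparison need not hold for \emph{every} direction $S^{*}$ (for a generic $S^{*}$, $\epsilon^{1/2}S^{*}\epsilon^{1/2}$ and $S^{*}$ are typically \emph{incomparable} in the Loewner order), so one cannot naively feed arbitrary $x,y$ into Definition \ref{def:RAV}. The resolution is that rapid variation as stated requires the ratio limit for \emph{fixed} full-rank $x,y$, and here the relevant $x,y$ are themselves moving with $S$; so the honest argument is to show that along the chain's trajectory the relevant pair eventually becomes comparable (or to strengthen/reinterpret Definition \ref{def:RAV} so the limit is locally uniform in the directions, which is how rapid variation is normally phrased, cf.\ Section 2.4 of \cite{opac-b1134644}). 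I would flag this point explicitly and use the locally-uniform version, so that the subsequential limit direction $S_0$ — for which $\epsilon^{1/2}S_0\epsilon^{1/2}\neq S_0$ hence the pair is comparable in \emph{some} order after a further perturbation, or more cleanly: for which $\det(\epsilon^{1/2}S_0\epsilon^{1/2})\neq\det(S_0)$ unless $\det\epsilon=1$ — drives $|\log(\cdot)|\to\infty$. Handling the measure-zero exceptional set (where $\det\epsilon=1$ but $\epsilon\neq I_q$) is the delicate remaining wrinkle, and I would address it by noting $\mathcal{L}$ assigns zero mass to $\{\det\epsilon=1\}$, which follows from Proposition \ref{prop:eigen}'s absolutely continuous eigenvalue law (for $\rho=0$, and then via Proposition \ref{prop:spetral_gap_quivalence}'s change-of-variable structure, or directly from (\ref{eq:gamma})).
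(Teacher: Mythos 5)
The obstacle you correctly flagged — that for a generic $\epsilon$ the matrices $\epsilon\circ S$ and $S$ (equivalently $\epsilon$ and $I_q$) are \emph{not} Loewner-comparable, so Definition~\ref{def:RAV}(i) cannot be invoked directly — is exactly the crux, and none of your proposed fixes resolves it. Comparability of $\epsilon\circ S^*$ with $S^*$ depends only on $\epsilon$ (and the direction $S^*$), not on $\tr(S)$, so ``eventually becomes comparable along the trajectory'' cannot happen: either the pair is comparable or it never is. Making rapid variation locally uniform in the direction is orthogonal to the issue: uniformity helps pass limits, but it gives you nothing when the hypothesis $x^{\top}T^{-1}x \gtrless y^{\top}T^{-1}y$ is simply false. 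And your last resort, the determinant criterion ($\det\epsilon\neq 1$), is not a substitute for Loewner comparability — two matrices in $P^+(q)$ with different determinants are typically incomparable in the Loewner order, so the dichotomy in Definition~\ref{def:RAV}(i) still does not trigger. Your plan therefore stalls precisely at step~(3): it reduces the lemma to applying rapid variation to a pair for which the definition's hypotheses generically fail.

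The paper's argument is genuinely different in structure and sidesteps the comparability problem entirely. Rather than fixing an $\epsilon$ and trying to prove a pointwise/a.s.\ statement, it establishes a convergence-in-probability-type bound: for each $C>0$, $\mathbb{P}\bigl[\,|\log(\tilde\pi(\epsilon\circ S)/\tilde\pi(S))|\le C\,\bigr]\rightarrow 0$ as $\tr(S)\rightarrow\infty$. Tightness of $\mathcal{L}$ first restricts to a compact band $r\le\lambda_{\min}(\epsilon)\le\lambda_{\max}(\epsilon)\le r^{-1}$, where $\mathcal{L}$ has bounded Lebesgue density, so the problem becomes a Lebesgue-measure estimate. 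Disintegrating the eigenvalue decomposition $\epsilon=\mathsf{U}^{\top}\Lambda\mathsf{U}$ and parametrising by $(\lambda_1,\delta,\mathsf{U})$ with $\delta_i=\lambda_i-\lambda_1$ reduces this — by Fubini and dominated convergence — to showing that, for fixed $(\mathsf{U},\delta)$, the one-dimensional set of $\lambda_1$ values yielding a bounded log-ratio has Lebesgue measure tending to zero. The decisive observation, which has no counterpart in your plan, is that if $\lambda_1\neq\lambda_1'$ (other parameters fixed) then the two matrices $\mathsf{U}^{\top}\Lambda(\lambda_1,\delta)\mathsf{U}$ and $\mathsf{U}^{\top}\Lambda(\lambda_1',\delta)\mathsf{U}$ differ by $(\lambda_1-\lambda_1')I_q$, hence \emph{are} Loewner-ordered, hence so are $\epsilon\circ S$ and $\epsilon'\circ S$ for every $S$; rapid variation then forces the log-ratio \emph{between those two} to diverge, so the bad $\lambda_1$-set cannot retain two distinct points in the limit. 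In short, the proof never compares $\epsilon\circ S$ against $S$; it compares $\epsilon\circ S$ against $\epsilon'\circ S$ for pairs $\epsilon,\epsilon'$ that are comparable by construction. That is the missing idea.
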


\begin{proof}
By tightness of $\mathcal{L}(\dif\epsilon)$, for any $\delta\in (0,1)$, there exists $r\in (0,1)$ such that,
\[
\mathbb{P}\,[\,r\le \lambda_{\min}(\epsilon)\le \lambda_{\max}(\epsilon)\le r^{-1}\,]> 1-\delta,
\]
where $\lambda_{\min}(S)$ and $\lambda_{\max}(S)$ are the smallest and largest eigenvalues of $S\in P^+(q)$. 
Thus, for any $C>0$, $\delta\in (0,1)$, we have,
\[
\mathbb{P}\,\Big[\,\Big|\log\Big(\frac{\tilde{\pi}(\epsilon\circ S)}{\tilde{\pi}(S)}\Big)\Big|\le C\,\Big]\le \delta + 
\mathbb{P}\,[\,\epsilon\in A(S)\,],
\]
where, 
\[
A(S):=\Big\{\epsilon\in P^+(q): \Big|\log\Big(\frac{\tilde{\pi}(\epsilon\circ S)}{\tilde{\pi}(S)}\Big)\Big|\le C, r\le \lambda_{\min}(\epsilon)\le \lambda_{\max}(\epsilon)\le r^{-1}\Big\}. 
\]
We write  
$\operatorname{Leb}^{q(q+1)/2}$ for $\prod_{1\le i\le j\le q}\dif S_{ij}$. 
Recall that reference measure $\mu(\dif S)$ has density $(\det(S))^{-(q+1)/2}$ with respect to $\operatorname{Leb}^{q(q+1)/2}$. 
Thus,  $\mathcal{L}(\dif\epsilon)$ has bounded density (with respect to $\operatorname{Leb}^{q(q+1)/2}$) on, $$\{\epsilon\in P^+(q):\,r\le \lambda_{\min}(\epsilon)\le \lambda_{\max}(\epsilon)\le r^{-1}\}.$$ 
So, there exists a constant $c>0$ such that, 
\[
\mathbb{P}\,[\,\epsilon\in A(S)\,]\le 
c\operatorname{Leb}^{q(q+1)/2}[\,\epsilon\in A(S)\,]. 
\]
Now, we consider the variable transformation $\epsilon=\mathsf{U}^\top\Lambda\mathsf{U}$ where $\mathsf{U}\in\mathcal{O}(q)$ and $\Lambda$ is a diagonal matrix with positive diagonal elements $r\le \lambda_1<\lambda_2<\cdots<\lambda_q\le r^{-1}$. From Theorem 5.3.1 of \cite{MR770934}, for some constant $c>0$, we have,
\begin{align*}
\operatorname{Leb}^{q(q+1)/2}[\,\epsilon\in A(S)\,]
&=c
\int_{\mathsf{U}\in\mathcal{O}(q)}\dif \mathsf{U}~\int 1_{\{\mathsf{U}^\top\Lambda\mathsf{U}~\in A(S)\}}\prod_{i<j}(\lambda_i-\lambda_j)\dif \lambda_1\cdots\dif \lambda_q\\
&=
c\int_{\mathsf{U}\in\mathcal{O}(q)}\dif \mathsf{U}~\int 1_{\{\mathsf{U}^\top\Lambda(\lambda_1,\delta)\mathsf{U}~\in A(S)\}}\prod_{i<j}(\delta_i-\delta_j)\dif \lambda_1\dif \delta_2\cdots\dif \delta_q
\end{align*}
where $\delta_1=0$, $\delta=\{\delta_i=\lambda_i-\lambda_1;\, i=2,\ldots,q\}$ and $\Lambda(\lambda_1,\delta)$ is diagonal matrix with values $\lambda_1, \lambda_1+\delta_2,\ldots, \lambda_1+\delta_q$. Then,
\begin{align*}
\operatorname{Leb}^{q(q+1)/2}&[\,\epsilon\in A(S)\,] \\ &\le c~r^{-q(q-1)/2}
\int_{\mathsf{U}\in\mathcal{O}(q)}\int_{0< \delta_2<\cdots<\delta_q<r^{-1}}\operatorname{Leb}\,[\,A_r(S,\mathsf{U},\delta)\,]\,\dif \mathsf{U}\,\dif \delta_2\cdots\dif\delta_q,
\end{align*}
where,
\[
A_r(S,\mathsf{U},\delta)=\{\,r<\lambda_1<r^{-1}-\delta_q: \mathsf{U}^{\top}\Lambda(\lambda_1,\delta)\mathsf{U}\in A(S)\,\}\subset\mathbb{R}. 
\]
\revision{The domain of the integral is always contained in a compact set $\mathcal{O}(q)\times [0, r^{-1}]^{d-1}$. Moreover, $A_r(S, U,\delta)$ is a subset of an open interval $(r, r^{-1})$. Thus, we can use the dominated convergence theorem for $\tr(S)\rightarrow \infty$ for the left-hand side of the above inequality. }
By the dominated convergence theorem, it suffices to show $\operatorname{Leb}(A_r(S,\mathsf{U},\delta))\longrightarrow_{\operatorname{tr}(S)\uparrow+\infty}~ 0$ for each $r>0$, $\mathsf{U}\in\mathcal{O}(q)$ and $\delta$. 
Observe that,
\begin{align*}
	A_r(S,\mathsf{U},\delta)&\otimes A_r(S,\mathsf{U},\delta)\\ &\subset \Big\{(\lambda_1,\lambda_1'): r\le \lambda_1, \lambda_1'\le r^{-1}-\delta_q,\  \Big|\log\Big(\frac{\tilde{\pi}((\mathsf{U}^\top\Lambda(\lambda_1,\delta)\mathsf{U})\circ S)}{\tilde{\pi}((\mathsf{U}^\top\Lambda(\lambda_1',\delta)\mathsf{U})\circ S)}\Big)\Big|\le 2C\Big\}. 
\end{align*}
Since $\pi(x)=\tilde{\pi}(x^\top U^{-1}x)$ is a rapidly varying function, we have,
\[
\lambda_1\neq \lambda_1'~\Longrightarrow~\Big|\log\Big(\frac{\tilde{\pi}((\mathsf{U}^\top\Lambda(\lambda_1,\delta)\mathsf{U})\circ S)}{\tilde{\pi}((\mathsf{U}^\top\Lambda(\lambda_1',\delta)\mathsf{U})\circ S)}\Big)\Big|\longrightarrow_{\tr (S)\uparrow\infty}+\infty;
\]
we used the fact that if $a>b$ then $\mathsf{U}^\top\Lambda(a,\delta)\mathsf{U}-\mathsf{U}^\top\Lambda(b,\delta)\mathsf{U}\in P^+(q)$. The probability of $\lambda_1=\lambda_1'$ is $0$, so we can complete the claim by the dominated convergence theorem. 
\end{proof}

Let $\tilde{P}_{\rho}$ be the MpCN kernel on $P^{+}(q)$ -- see the statement of Proposition \ref{prop:drift}.

\begin{proof}[\textbf{Proof or Proposition \ref{prop:drift}}]
%
The MpCN proposal in $P^+(q)$ can be written as $\epsilon\,\circ\,S$, with $\epsilon\sim\mathcal{L}(\dif\epsilon)$, so we have,
for $\mathsf{V}(S)=\tilde{\pi}(S)^{-\alpha}$, $\alpha\in(0,1)$,
\begin{align*}
\frac{\tilde{P}_{\rho}\mathsf{V}(S)-\mathsf{V}(S)}{\mathsf{V}(S)}
&=\mathbb{E}\,\Big[\,\Big\{\Big(\frac{\tilde{\pi}(\epsilon\circ S)}{\tilde{\pi}(S)}\Big)^{-\alpha}-1\Big\}\min\Big\{1,\frac{\tilde{\pi}(\epsilon\circ S)}{\tilde{\pi}(S)}\Big\}\,\Big]\\
&=\mathbb{E}\,\Big[\,\Big(\frac{\tilde{\pi}(\epsilon\circ S)}{\tilde{\pi}(S)}\Big)^{-\alpha}\min\Big\{1,\frac{\tilde{\pi}(\epsilon\circ S)}{\tilde{\pi}(S)}\Big\}\,\Big]
-
\mathbb{E}\,\Big[\,\min\Big\{1,\frac{\tilde{\pi}(\epsilon\circ S)}{\tilde{\pi}(S)}\Big\}\,\Big]. 
\end{align*}
By dominated convergence theorem and Lemma \ref{lem:probability_fraction}, the first term in the right-hand side converges to $0$ as $\tr(S)\rightarrow\infty$. Also, by Lemma \ref{lem:probability_fraction}, the limit of the second term is,
\[
\liminf_{\tr(S)\rightarrow+\infty}\mathbb{E}\,\Big[\,\min\Big\{1,\frac{\tilde{\pi}(\epsilon\circ S)}{\tilde{\pi}(S)}\Big\}\,\Big]=
\liminf_{\tr(S)\rightarrow+\infty}\mathbb{P}\,\Big[\,\log\Big(\frac{\tilde{\pi}(\epsilon\circ S)}{\tilde{\pi}(S)}\Big)>0\,\Big]. 
\]
The proof will be completed if we show that the right-hand side is strictly greater than $0$. 
Thanks to the rapidly varying property, 
\begin{align*}
\liminf_{\tr(S)\rightarrow+\infty}\mathbb{P}\,\Big[\,\log\Big(\frac{\tilde{\pi}(\epsilon\circ S)}{\tilde{\pi}(S)}\Big)>0\,\Big]&\ge 
\liminf_{\tr(S)\rightarrow+\infty}\mathbb{P}\,\Big[\,\log\Big(\frac{\tilde{\pi}(\epsilon\circ S)}{\tilde{\pi}(S)}\Big)>0, \,I-\epsilon\in P^+(q)\,\Big]\\
&= 
\mathbb{P}\,[\, I-\epsilon\in P^+(q)\,]. 
\end{align*}
Since the law of $\mathcal{L}(\dif\epsilon)$ is absolutely continuous with respect to $\mu$, and $\mu\,[\,\{\epsilon;I-\epsilon\in P^+(q)\}\,]$ is positive,  the probability is positive. Thus, the drift inequality for $\tr (S)\rightarrow\infty$ follows. 
\end{proof}

\section{Proof of Proposition \ref{prop:spetral_gap_quivalence}}
\label{app:rho}


Consider the standard Hilbert space 
$L^2(\Pi)=\{f:M(p,q)\rightarrow\mathbb{R}; \|f\|^2_{L^2(\Pi)}<\infty\}$
where $\|f\|_{L^2(\Pi)}=\sqrt{\langle f,f\rangle}$ with inner product,
$$
\langle f,g\rangle = \langle f, g\rangle_{L^2(\Pi)}=\int_{x\in M(p,q)} f(x)g(x)\Pi(\dif x). 
$$
A Markov kernel $P$ on $M(p,q)$ is a linear operator on this space via the action,
$$
(Pf)(x)=\int_{y\in E}P(x,\dif y)f(y),\quad f\in L^2(\Pi).
$$
A linear operator $P$ is self-adjoint if 
$
\langle Pf, g\rangle=\langle f,Pg\rangle$, $f, g\in L^2(\Pi)$;
this is equivalent to $\Pi$-reversibility of $P$. Also, a linear operator is positive if 
$
\langle f, Pf\rangle \ge 0$,  $f\in L^2(\Pi)$.
By Section XI.8 of \cite{MR1336382}, if $P$ is self-adjoint, its spectrum lies on the real line; if it is also a  positive operator, its spectrum lies on $\mathbb{R}_{+}$. 
First we will show positivity of the pCN kernel. Using this, we will show positivity of the MpCN kernel. 

\begin{lem}
The pCN kernel $P(x,\dif y)$ on $M(p,q)$ is a positive, self-adjoint linear operator on $L^2(\Pi)$. 
\end{lem}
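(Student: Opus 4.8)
The plan is the following. Self-adjointness of $P$ on $L^2(\Pi)$ is immediate: by the equivalence recalled just above, it is the same as $\Pi$-reversibility of $P$, which holds because $P$ is the Metropolis--Hastings kernel associated with the triplet $(\Pi,Q,N_{p,q}(0,U,V))$. So the substantive part is positivity, i.e.\ $\langle f,Pf\rangle_{L^2(\Pi)}\ge 0$ for all $f\in L^2(\Pi)$, and I would prove it in two steps: first positivity of the \emph{proposal} operator $Q$ on $L^2(\gamma)$, where $\gamma:=N_{p,q}(0,U,V)$; then transfer to $L^2(\Pi)$ through the Metropolis acceptance rule.

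For the first step I would use a ``square-root'' factorisation of the pCN proposal. Write $Q_\beta$ for the pCN proposal kernel with parameter $\beta\in[0,1)$, so that the kernel in the lemma is $Q=Q_\rho$. Composing $Q_{\rho^{1/2}}$ with itself, i.e.\ taking $y_1=\rho^{1/4}x+(1-\rho^{1/2})^{1/2}w_1$ and $y_2=\rho^{1/4}y_1+(1-\rho^{1/2})^{1/2}w_2$ with $w_1,w_2$ independent $N_{p,q}(0,U,V)$, one gets $y_2=\rho^{1/2}x+\rho^{1/4}(1-\rho^{1/2})^{1/2}w_1+(1-\rho^{1/2})^{1/2}w_2$; by the linearity and convolution properties of the matrix-normal family (Remark~\ref{rem:connect}), the noise term has law $N_{p,q}\big(0,(1-\rho)U,V\big)=N_{p,q}\big(0,U,(1-\rho)V\big)$, so $y_2\sim Q_\rho(x,\cdot)$, giving the kernel identity $Q_\rho=Q_{\rho^{1/2}}\circ Q_{\rho^{1/2}}$ (for $\rho=0$ this is the trivial $Q_0=Q_0^2$, with $Q_0$ the rank-one projection $f\mapsto(\int f\,\dif\gamma)\,\mathbf{1}$). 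Since $Q_{\rho^{1/2}}$ is $\gamma$-reversible, hence self-adjoint on $L^2(\gamma)$, this reads $Q_\rho=Q_{\rho^{1/2}}^{*}Q_{\rho^{1/2}}$ as operators, so $\langle f,Q_\rho f\rangle_{L^2(\gamma)}=\|Q_{\rho^{1/2}}f\|_{L^2(\gamma)}^2\ge 0$ for every $f\in L^2(\gamma)$.

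For the second step, set $\pi:=\dif\Pi/\dif\gamma$ and note $\pi(x)\,\alpha(x,y)=\min\{\pi(x),\pi(y)\}$, together with the fact that $\gamma(\dif x)\,Q(x,\dif y)$ is a symmetric measure (by $\gamma$-reversibility of $Q$). Expanding $P(x,\dif y)=Q(x,\dif y)\alpha(x,y)+R(x)\delta_x(\dif y)$, I would write, for $f\in L^2(\Pi)$,
\[
\langle f,Pf\rangle_{L^2(\Pi)}=\int\!\!\int f(x)f(y)\min\{\pi(x),\pi(y)\}\,Q(x,\dif y)\,\gamma(\dif x)+\int f(x)^2 R(x)\,\Pi(\dif x);
\]
the second term is non-negative, and for the first I would insert the layer-cake identity $\min\{\pi(x),\pi(y)\}=\int_0^\infty \mathbf{1}\{\pi(x)>t\}\,\mathbf{1}\{\pi(y)>t\}\,\dif t$ and swap the order of integration. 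With $g_t:=f\,\mathbf{1}\{\pi>t\}$ the first term becomes $\int_0^\infty\langle g_t,Qg_t\rangle_{L^2(\gamma)}\,\dif t$, each $g_t$ lies in $L^2(\gamma)$ because $\int g_t^2\,\dif\gamma\le t^{-1}\|f\|_{L^2(\Pi)}^2$, and each integrand is $\ge 0$ by the first step; hence $\langle f,Pf\rangle_{L^2(\Pi)}\ge 0$.

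I expect the main obstacle to be the legitimacy of the Fubini exchange in the second step: one must verify finiteness of the double integral with $|f|$ in place of $f$, which I would do by bounding $\int_0^\infty\langle |g_t|,Q|g_t|\rangle_{L^2(\gamma)}\,\dif t\le\int_0^\infty\|g_t\|_{L^2(\gamma)}^2\,\dif t=\|f\|_{L^2(\Pi)}^2<\infty$, using that $Q$ is an $L^2(\gamma)$-contraction. Everything else is routine; the layer-cake representation of $\min\{\pi(x),\pi(y)\}$ is precisely what reduces the assertion on $L^2(\Pi)$ to the positivity of the pCN proposal on $L^2(\gamma)$ obtained in the first step.
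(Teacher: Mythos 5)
Your proof is correct and follows essentially the same route as the paper: the square-root factorisation $Q_\rho=Q_{\rho^{1/2}}\circ Q_{\rho^{1/2}}$ for positivity of the Gaussian proposal, followed by the layer-cake decomposition $\pi(x)\alpha(x,y)=\min\{\pi(x),\pi(y)\}=\int_0^\infty \mathbf{1}_{K(t)}(x)\mathbf{1}_{K(t)}(y)\,\dif t$ (the Rudolf--Ullrich device) to transfer positivity from $L^2(\gamma)$ to $L^2(\Pi)$. The only differences are that you make the Fubini justification explicit via the $L^2(\gamma)$-contraction bound and note the $\rho=0$ degenerate case, both of which the paper leaves implicit.
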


\begin{proof}
First, we show that the proposal kernel $Q$ of pCN is positive, self-adjoint on $L^2(\Pi')$, where, 
\[
Q(x,\cdot)=N_{p,q}(\rho^{1/2}x, (1-\rho)U, V),\quad  \Pi'=N_{p,q}(0, U, V). 
\]
The self-adjointness property is immediate upon observing,
\[
\phi_{p,q}(y;\rho^{1/2}x, (1-\rho)U, V)~\phi_{p,q}(x; 0, U, V)=
\phi_{p,q}(x;\rho^{1/2}y, (1-\rho)U, V)~\phi_{p,q}(y; 0, U, V). 
\]
By the reproductive property of the Gaussian distribution (Remark \ref{rem:connect}(i, ii)),  we have
$$Q^{1/2}(x,\cdot)=N_{p,q}(\rho^{1/4}x, (1-\rho^{1/2})U, V),$$ where the linear operator $Q^{1/2}$ is defined by $Q^{1/2}(Q^{1/2}f)=Qf$, $f\in L^2(\Pi')$.   Observe that $Q^{1/2}$ is the same as $Q$ with $\rho^{1/2}$ in the place of $\rho$. By this fact,  $Q^{1/2}$ is  also self-adjoint in $L^2(\Pi')$. Then,
\begin{align*}
    \langle f, Qf\rangle_{L^2(\Pi')}=\langle f, Q^{1/2}Q^{1/2}f\rangle_{L^2(\Pi')}
    =\|Q^{1/2}f\|_{L^2(\Pi')}\ge 0. 
\end{align*}
%
We now show that $P$ itself is positive, self-adjoint on 
$L^2(\Pi)$. Since $P$ is a $\Pi$-reversible Metropolis kernel, we proceed to the proof of positiveness. We use the decomposition approach introduced in Lemma 3.1 of \cite{MR3078012}. 
Recall that the acceptance probability is 
$
\alpha(x,y)=\min\{1,\tfrac{\pi(y)}{\pi(x)}\},
$
where $\pi(x)$ is the density of $\Pi$ with respect to $\Pi'$. We can write, 
\[
\pi(x)~\alpha(x,y)=\min\left\{\pi(x), \pi(y)\right\}=\int_0^\infty 1_{K(t)}(x)1_{K(t)}(y)\dif t
\]
where 
$
K(t)=\{x\in M(p,q): \pi(x)\ge t\} 
$.
Using the above, we have,
\begin{align*}
    \langle f, Pf\rangle_{L^2(\Pi)}&=\int_{x,y\in M(p,q)} f(x)f(y)\Pi(\dif x)P(x,\dif y)\\
    &\ge \int_{x,y\in M(p,q)} f(x)f(y)\alpha(x,y)\Pi(\dif x)Q(x,\dif y)\\
    &= \int_{x,y\in M(p,q)} f(x)f(y)\alpha(x,y)~\pi(x)\Pi'(\dif x)~Q(x,\dif y)\\
    &= \int_0^\infty\int_{x,y\in M(p,q)} [1_{K(t)}f](x)~[1_{K(t)}f](y)\Pi(\dif x)Q(x,\dif y)\dif t\\
    &= \int_0^\infty\langle 1_{K(t)}f, Q[1_{K(t)}f]\rangle_{L^2(\Pi')}~\dif t\ge 0. 
\end{align*}
We used the fact that $1_{K(t)}f\in L^2(\Pi')$ by Markov's inequality. Thus, we have that $P$ is a positive operator. 
\end{proof}

Next, we prove positivity and self-adjointness of the MpCN kernel. Such properties are inherited by the pCN kernel.
Let,
\begin{equation}\label{eq:joint_distribution}
Q_{V}(x,\cdot )=N_{p,q}(\rho^{1/2}x, U, (1-\rho)V),\quad \Pi'_V=N_{p,q}(0, U, V),
\end{equation}
be the proposal kernel of pCN and its invariant distribution, respectively, for matrix parameter $V$. 
Observe that
\begin{align*}
    \nu_U(\mathrm{d}x)W_q^{-1}(\dif V;p, x^{\top}U^{-1}x)
    &=
    \Pi_V'(\dif x)\mu(\dif V).
\end{align*}
As in the proof of Lemma \ref{lem:kernel_mpcn} and equation (\ref{eq:joint}), following the Bayesian paradigm construction of the MpCN kernel, we have,
\begin{align*}
\nu_U(\dif x)Q(x, \dif y)
&=\int_{V\in P^+(q)}\nu_U(\dif x)W_q^{-1}(\dif V;p, x^{\top}U^{-1}x)Q_V(x, \dif y)\\
&=
\int_{V\in P^+(q)}\Pi_V'(\dif x)Q_V(x, \dif y)\mu(\dif V),
\end{align*}
where $Q$ is the MpCN kernel. 

\begin{prop}
The MpCN kernel $P(x,\dif y)$ is a positive, self-adjoint linear operator on $L^2(\Pi)$. 
\end{prop}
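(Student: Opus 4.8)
The plan is to follow the same template as the preceding lemma for the pCN kernel, the extra ingredient being that the MpCN \emph{proposal} kernel $Q$ is itself a positive operator, a fact we extract from positivity of the pCN proposals $Q_V$ via the mixture identity $\nu_U(\dif x)Q(x,\dif y)=\int_{P^+(q)}\Pi_V'(\dif x)Q_V(x,\dif y)\mu(\dif V)$ displayed just above. Self-adjointness of $P$ on $L^2(\Pi)$ is immediate: by Lemma~\ref{lem:kernel_mpcn} the proposal $Q$ is $\nu_U$-reversible and $P$ is the Metropolis kernel of the triplet $(\Pi,Q,\nu_U)$ with $\Pi(\dif x)=\pi(x)\nu_U(\dif x)$, so $P$ is $\Pi$-reversible, which is equivalent to self-adjointness on $L^2(\Pi)$.

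For positivity of $Q$ on $L^2(\nu_U)$, I would first take $g\in L^2(\nu_U)$ that is bounded and supported on a set of finite $\nu_U$-measure. Then $g\in L^2(\Pi_V')$ for every $V$ (since $\Pi_V'$ is a probability measure) and, by Fubini applied to the mixture identity,
\[
\langle g,Qg\rangle_{L^2(\nu_U)}=\int_{P^+(q)}\langle g,Q_Vg\rangle_{L^2(\Pi_V')}\,\mu(\dif V)\ge 0,
\]
since each $\langle g,Q_Vg\rangle_{L^2(\Pi_V')}=\|Q_V^{1/2}g\|_{L^2(\Pi_V')}^2\ge 0$ by the square-root argument used for the pCN proposal in the proof of the previous lemma. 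An exhaustion/truncation argument, together with the fact that $Q$ is a self-adjoint contraction on $L^2(\nu_U)$ (being $\nu_U$-reversible), then extends $\langle g,Qg\rangle_{L^2(\nu_U)}\ge 0$ to all $g\in L^2(\nu_U)$.

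Finally, for positivity of $P$ on $L^2(\Pi)$ I would reuse the decomposition of Lemma~3.1 of \cite{MR3078012} exactly as in the pCN proof. Writing $K(t)=\{x:\pi(x)\ge t\}$ and $\pi(x)\alpha(x,y)=\min\{\pi(x),\pi(y)\}=\int_0^\infty 1_{K(t)}(x)1_{K(t)}(y)\dif t$, and discarding the nonnegative rejection term $R(x)\ge 0$,
\[
\langle f,Pf\rangle_{L^2(\Pi)}\ge\int f(x)f(y)\alpha(x,y)\Pi(\dif x)Q(x,\dif y)=\int_0^\infty\langle 1_{K(t)}f,\,Q(1_{K(t)}f)\rangle_{L^2(\nu_U)}\,\dif t\ge 0,
\]
where $1_{K(t)}f\in L^2(\nu_U)$ by Markov's inequality, $\int_{K(t)}f^2\,\dif\nu_U\le t^{-1}\|f\|_{L^2(\Pi)}^2$, the Fubini interchange is justified by Cauchy--Schwarz with respect to the symmetric measure $\alpha(x,y)\Pi(\dif x)Q(x,\dif y)$, and the integrand is nonnegative by the previous step.

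The conceptual content is light once the mixture identity and the pCN lemma are in hand; the main obstacle is purely measure-theoretic bookkeeping, since $\nu_U$ and $\mu$ are only $\sigma$-finite. One must verify that the functions appearing ($1_{K(t)}f$ and the truncations of $g$) genuinely lie in the relevant $L^2$ spaces and that the two Fubini interchanges — in the mixture representation of $Q$ and in the level-set decomposition of $P$ — are legitimate; the truncation and exhaustion steps indicated above are what render this rigorous.
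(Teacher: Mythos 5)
Your proof is correct and follows essentially the same route as the paper: self-adjointness from $\Pi$-reversibility, the mixture identity $\nu_U(\dif x)Q(x,\dif y)=\int\Pi_V'(\dif x)Q_V(x,\dif y)\mu(\dif V)$ to inherit positivity from the pCN proposals $Q_V$, and the Rudolf--Ullrich level-set decomposition $\pi(x)\alpha(x,y)=\int_0^\infty 1_{K(t)}(x)1_{K(t)}(y)\dif t$. The only organizational difference is that you isolate positivity of the proposal $Q$ on $L^2(\nu_U)$ as a standalone intermediate claim (with its attendant truncation/exhaustion bookkeeping for $\sigma$-finiteness), whereas the paper applies the mixture identity directly inside the $t$-integral to the functions $1_{K(t)}f$, obtaining $\int_0^\infty\int_{P^+(q)}\langle 1_{K(t)}f,Q_V[1_{K(t)}f]\rangle_{L^2(\Pi_V')}\mu(\dif V)\dif t\ge 0$ without ever needing to verify $Q\ge 0$ on all of $L^2(\nu_U)$ --- a slightly leaner path since one only ever evaluates the quadratic form at the specific functions $1_{K(t)}f$, for which $L^2$-membership follows at once from Markov's inequality.
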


\begin{proof}
Self-adjointness follows from reversibility. We will prove the positivity part. 
As in the pCN kernel case, for the target probability distribution $\Pi(\dif x)=\pi(x)\nu_U(\dif x)$ we have,
\begin{align*}
    \langle f, Pf\rangle_{L^2(\Pi)}&=\int f(x)f(y)\Pi(\dif x)P(x,\dif y)\\
    &\ge \int_{x,y\in M(p,q)} f(x)f(y)\alpha(x,y)\Pi(\dif x)Q(x,\dif y)\\
    &= \int_0^\infty\int_{x,y\in M(p,q)} [1_{K(t)}f](x)[1_{K(t)}f](y)\nu_U(\dif x)Q(x,\dif y)\dif t. 
\end{align*}
By (\ref{eq:joint_distribution}), together with positivity and self-adjointness of the pCN proposal kernel, we obtain,
\begin{align*}
    \langle f, Pf\rangle_{L^2(\Pi)}
    &\ge c_{p,q}^*~\int_0^\infty\int_{V\in P^+(q)}\int_{x,y\in M(p,q)} [1_{K(t)}f](x)[1_{K(t)}f](y)\Pi_V'(\dif x)Q_V(x,\dif y)\mu(\dif V)\dif t\\
    &= c_{p,q}^*~\int_0^\infty\int_{V\in P^+(q)}\langle 1_{K(t)}f, Q_V[1_{K(t)}f]\rangle_{L^2(\Pi_V')}\mu(\dif V)\dif t\ge 0. 
\end{align*}
Here, we used the fact that $1_{K(t)}f\in L^2(\Pi_V')$ by Markov's inequality.  
\end{proof}

\noindent Thus, all eigenvalues of the MpCN kernel are 
in $\mathbb{R}_{+}$. Let $\mathrm{spec}(P)$ be the  eigenvalues of a self-adjoint Markov kernel $P$ on $L^2(\Pi)$ -- excluding the constant function.  A transition kernel $P$ is said to \emph{have a spectral gap} if $1-|\sup\mathrm{spec}(P)|>0$ -- the latter being equivalent to $1-\sup\mathrm{spec}(P)>0$ when $P$ is a positive operator.  It is known that, if $P$ is positive and self-adjoint, it is geometrically ergodic if and only if it has a spectral gap  \citep[see, e.g.,][]{MR1448322, MR1915532}. 
Note that, by Theorem XI.8.2 of \cite{MR1336382}, the spectral gap $1-\sup\mathrm{spec}(P)$ writes as, 
\begin{equation}
\label{eq:gap}
\inf_{f\in L^2(\Pi);\, \Pi(f)=0}\frac{\mathcal{E}_P(f,f)}{\|f\|^2}=1-\sup_{f\in L^2(\Pi);\,\Pi(f)=0}\frac{\langle f,Pf\rangle}{\|f\|^2},
\end{equation}
where,
\[
\mathcal{E}_P(f,f)=\tfrac{1}{2}\int_{x,y\in M(p,q)} (f(x)-f(y))^2\Pi(\dif x)P(x,\dif y). 
\]
We study the spectral gap of the MpCN kernel. 
Recall from the main text, that to stress the involvement of parameter $\rho$, we write the MpCN kernel as $P_\rho$. We denote by $Q_\rho$ the proposal kernel for $P_\rho$. 


\begin{proof}[\textbf{Proof of Proposition \ref{prop:spetral_gap_quivalence}}]
Part (i): \\ 
First we note that,
\[
(1-\rho)x^\top U^{-1}x+(y-\rho^{1/2}x)^\top U^{-1}(y-\rho^{1/2}x)\le 2x^\top U^{-1}x+2y^\top U^{-1}y,
\]
where the left-hand side is equal to $R(x,y)$ defined in Lemma \ref{lem:kernel_mpcn}
; 
the above inequality follows from the right-hand side being equal to, 
\[
[(1-\rho)x^\top U^{-1}x+(y-\rho^{1/2}x)^\top U^{-1}(y-\rho^{1/2}x)]+
[(1-\rho)x^\top U^{-1}x+(y+\rho^{1/2}x)^\top U^{-1}(y+\rho^{1/2}x)]. 
\]
Let $\mathsf{q}_\rho(x,y)$ be the probability density function of $Q_\rho(x,\cdot)$ with respect to $\nu$. Via the explicit form of $\mathsf{q}_\rho(x,y)$ (see Lemma \ref{lem:kernel_mpcn}
), the above inequality implies,
\[
\mathsf{q}_\rho(x,y)\ge 2^{-pq}\mathsf{q}_0(x,y). 
\]
Using this inequality, 
\begin{align*}
\mathcal{E}_{P_\rho}(f,f)&=\frac{1}{2}\int_{x,y\in M(p,q)}(f(x)-f(y))^2\Pi(\dif x)P_\rho(x,\dif y)\\
&=\frac{1}{2}\int_{x,y\in M(p,q)}(f(x)-f(y))^2\Pi(\dif x)\alpha(x,y)Q_\rho(x,\dif y)\\
&\ge 2^{-pq}\frac{1}{2}\int_{x,y\in M(p,q)}(f(x)-f(y))^2\Pi(\dif x)\alpha(x,y)Q_0(x,\dif y)\\
&\ge 2^{-pq}\mathcal{E}_{P_0}(f,f). 
\end{align*}
Thus, existence of a spectral gap of $P_0$ implies that $P_\rho$ also has one. \\
Part (ii):\\
We define the set, 
\begin{equation*}
L^2_\star(\Pi):=\{f:M(p,q)\rightarrow\mathbb{R}:\,f(x)=g(x^{\top}U^{-1}x)\,\,\,\textrm{for a}\,\,g\in L^{2}(\tilde{\Pi}) \} \subseteq L^2(\Pi).
\end{equation*}
Spaces  $L^{2}_{\ast}(\Pi)$, $L^2(\tilde{\Pi})$ are isomorphic via the mapping $f\leftrightarrow g$.
For $f\in L^2_{\star}(\Pi)$ we have that 
$Pf(x)\in  L^2_{\star}(\Pi)$, thus operator $P$ restricted on $L^2_{\star}(\Pi)$ is positive, self-adjoint, with
$Pf(x) \equiv \tilde{P}g(s)$, where $s=x^{\top}U^{-1}x$.
It follows trivially, that we also have 
$\mathcal{E}_{P_\rho}(f,f)\ge 2^{-pq}\mathcal{E}_{P_0}(f,f)$ -- 
as obtained in Part (i) -- for $f\in L^2_\star(\Pi)$,
or equivalently $\mathcal{E}_{\tilde{P}_\rho}(g,g)\ge 2^{-pq}\mathcal{E}_{\tilde{P}_0}(g,g)$, for  $\mathcal{E}_{\tilde{P}_\rho}(g,g)$ defined in an obvious way.
The proof is now complete.
%
%
%
\end{proof}

\bibliographystyle{Chicago}

\end{document}